% to ICALP'14 PC

%\documentclass[11pt,letterpaper]{article}
%\documentclass[11pt,twocolumn]{article}
%\documentclass{llncs}
% \documentclass[review]{elsarticle}
\documentclass[]{elsarticle}

%\usepackage{setspace}
%\doublespacing

%\usepackage{times}
%\usepackage{fullpage}
\usepackage{comment}

\usepackage{url}
\usepackage{wrapfig}

\usepackage{subfig}
\usepackage{amsmath}

\usepackage{amsthm}
\usepackage{amssymb}
\usepackage{wrapfig}
\usepackage{graphicx}
%\graphicspath{{./Figs/}}
\usepackage{paralist}
\usepackage[active]{srcltx}
\usepackage{algorithm}
\usepackage[noend]{algorithmic}
\usepackage{array}
\usepackage{enumitem}
\usepackage{color}

\newenvironment{proof sketch}[1]{\noindent {\emph{Proof sketch of #1:}}}{\hfill }
\newtheorem{theorem}{Theorem}

\newtheorem{clm}{Claim}
\newtheorem{lemma}{Lemma}

\newtheorem{coro}{Corollary}

\newtheorem{observation}{Observation}

{\bfseries}{\itshape}

\newtheorem*{thm_necessity-matrix2}{Theorem \ref{thm:necessity-matrix2}}

\newtheorem*{definition_bal}{(i) $\alpha$-Balanced Boundary}
\newtheorem*{definition_cli}{(ii) $\beta$-Clique Emulation}
\newtheorem*{definition_con}{(iii) $\gamma$-convergecast}

\def\abs#1{\lvert #1 \rvert}

\def\P{\mathcal{P}}
\def\C{\mathcal{C}}

\def\pj{\mathbb{\textsc{PJ}}}
\def\MF{\mathbb{\textsc{MergeFrags}}}

\def\mwoe{\texttt{mwoe}}
\def\mp{\texttt{mp}}
\def\spk{\texttt{spk}}

\def\sndmsg{{\tt SendMsg}}

\def\nc{n_{\scriptscriptstyle \C}}
\def\np{n_{\scriptscriptstyle \P}}

\def\axiomBoundary{{\cal A}_{B}}
\def\axiomClique{{\cal A}_{E}}
\def\axiomConvergecast{{\cal A}_{C}}
\def\algoMatrixByVector{\text{2}}
\def\algoTranspose{\text{1}}
\def\algoMatrixByMatrix{\text{3}}
\def\algoFindRank{\text{4}}
\def\algoMedian{\text{5}}
\def\algoFindModes{\text{6}}
\def\algoFindDistinct{\text{7}}
\def\algoFindKTopAreas{\text{8}}

\def\GVECP{G(V,E,\C,\P)}
\def\cp{\left\langle \C,\P\right\rangle}

%\newcommand\chen[1]{\marg{Chen: #1}\\}

%\newtheorem{theorem}{Theorem}[section]
%\newtheorem{lemma}[theorem]{Lemma}

%\newcommand{\sq}{\hbox{\rlap{$\sqcap$}$\sqcup$}}
%\newcommand{}{\hspace*{\fill}\sq}
%\newenvironment{proof}{\noindent {\bf Proof.}\ }{\par\vskip 4mm\par}
%\newenvironment{proofof}[1]{\bigskip \noindent {\bf Proof of #1:}\quad }
%{\par\vskip 4mm\par}

%==== The following code remove the Elsevier-Specific footer from the title page ====
\makeatletter
\def\ps@pprintTitle{%
 \let\@oddhead\@empty
 \let\@evenhead\@empty
 \def\@oddfoot{}%
 \let\@evenfoot\@oddfoot}
\makeatother
%=========================================

\begin{document}

%\begin{titlepage}
\title{
%An Axiomatic Approach to Core-Periphery Social Networks
%Core-Periphery Computational Model for Social Networks
%Efficient Distributed Computation in Core-Periphery Networks
%Distributed MST in Core-Periphery Networks
Distributed Computing on Core-Periphery Networks: Axiom-based Design 
}

\newcommand*\samethanks[1][\value{footnote}]{\footnotemark[#1]}

%\author{
%Chen Avin \inst{1}\fnmsep\thanks{Part of this work was done while the author was a visitor at ICERM, Brown university.}\fnmsep\thanks{Supported in part by the Israel Science Foundation (grant 1549/13).}
%\and
%Michael Borokhovich \inst{1}\fnmsep\samethanks
%\and
%Zvi Lotker \inst{1}\fnmsep\samethanks
%\and
%David Peleg \inst{2}\fnmsep\samethanks
%}

\author[bgu]{Chen Avin\fnref{fn1,fn2}}
\ead{avin@cse.bgu.ac.il}

\author[bgu]{Michael Borokhovich\fnref{fn2}}
\ead{borokhom@cse.bgu.ac.il}

\author[bgu]{Zvi Lotker\fnref{fn2}}
\ead{zvilo@cse.bgu.ac.il}

\author[weiz]{David Peleg\fnref{fn2}}
\ead{david.peleg@weizmann.ac.il}

\fntext[fn1]{Part of this work was done while the author was a visitor at ICERM, Brown university.}
\fntext[fn2]{Supported in part by the Israel Science Foundation (grant 1549/13).}

\address[bgu]{Ben-Gurion University of the Negev, Israel.}
\address[weiz]{The Weizmann Institute, Israel.}

%\institute{
%Ben-Gurion University of the Negev, Israel. \email{\{avin,borokhom,zvilo\}@cse.bgu.ac.il}
%\and
%The Weizmann Institute, Israel. 
%\email{david.peleg@weizmann.ac.il}
%}

%\footnotetext[1]{
%\noindent
%Department of Communication Systems Engineering, Ben Gurion University of the Negev,
%Beer-Sheva, Israel.\\
%E-mails:~{\tt \{avin,borokhom,zvilo\}@cse.bgu.ac.il}.
%}
%
%\footnotetext[2]{
%\noindent
%Department of Computer Science, The Weizmann Institute,
%Rehovot, Israel.  \hbox{E-mail}:~{\tt david.peleg@weizmann.ac.il}.
%Supported in part by the Israel Science Foundation (grant 894/09),
%the United States-Israel Binational Science Foundation (grant 2008348),
%the Israel Ministry of Science and Technology (infrastructures grant),
%and the Citi Foundation.}

\begin{abstract}
% !TEX root = core_per.tex

%\begin{abstract}

%Social networks have become a prominent part of our lives and their 
%study attracts increasing attention. Intensive research is directed at 
%understanding social networks, their structure and dynamics, and developing accurate
%and useful mathematical models for them.
%
%Common approaches to modeling social networks are based on defining (mostly randomized) 
%mathematical rules for constructing a member of a given class of social networks. 
%Examples are the {\em preferential attachment} model
%and various other network models based on random graph theory.
%A complementary approach promoted here is to characterize social networks 
%and their structural components using an {\em axiomatic} approach, 
%where the axioms aim at capturing key properties of the modeled structure.
%Axiomatic-based characterizations of social networks allow us to abstract away some 
%of the less significant details. They capture and formalize the fundamental 
%properties of the structure, which affect their analysis, dynamic behavior 
%and social implications, and may assist in developing efficient algorithms 
%for various purposes. Moreover, the axiomatic approach may allow us to assess 
%the usefulness of concrete models for constructing social networks.

Inspired by social networks and complex systems, we propose a 
{\em core-periphery} network architecture that supports fast computation 
for many distributed algorithms and is robust and efficient in number of links.
Rather than providing a concrete network model, we take an axiom-based design
approach. We provide three intuitive and independent algorithmic axioms 
and prove that any network that satisfies all axioms enjoys an efficient 
algorithm for a range of tasks (such as MST, sparse matrix multiplication, 
and more).
%Moreover, the axiomatic approach allow us to 
We also show the \emph{minimality} of our axiom set: for networks that
satisfy any subset of the axioms, the same efficiency cannot be guaranteed for \emph{any} deterministic algorithm.

\end{abstract}

%\date{}

\maketitle 
%\thispagestyle{empty}

%\end{titlepage}

\section{Introduction}
% !TEX root = core_per.tex

A fundamental goal in distributed computing concerns finding network architectures that allow fast running times for various distributed algorithms, but at the same time are cost-efficient, in terms of minimizing the number of communication links between the machines and the amount of memory used by each processor.

For illustration, let's consider three basic network topologies: a star, a clique and a constant degree expander. 
The \emph{star graph} has only a linear number of links, and can compute every computable function in one round of communication. 
%(assuming links of sufficiently high capacity). 
But clearly, such an architecture has two major disadvantages:
the memory requirements of the central node do not scale, and the network is not robust (in the sense that a failure of the central node is enough to disable the network).
The \emph{complete graph}, on the other hand, is very robust, and can support extremely high performance for tasks such as information dissemination, distributed sorting and minimum spanning tree, to name a few \cite{lotker2005minimum,Lenzen:2011,Lenzen:2013:Sorting}. Also, in a complete graph, the amount of memory used by a single processor is minimal. The main drawback of that architecture is the high number of links it uses.
%, an order of $n^2$. 
\emph{Constant degree expanders} are a family of graphs that support efficient computation for many tasks. They also have linear number of links, and can effectively balance the workload between many machines. But the diameter of these graphs is lower bounded by $\Omega(\log n)$, which implies a similar lower bound on the time required for most of the interesting tasks one can consider. 

Therefore, a natural question is whether there are other candidate topologies with guaranteed good performance. We are interested in the best compromise solution: a network on which distributed algorithms have low running times, memory requirements at each node are limited, the architecture is robust to link and node failures and the total number of links is minimized (preferably linear in the number of nodes). 

To try to answer this question, we adopt in this paper an \emph{axiomatic} approach to the design of efficient networks. 
In contrast to the direct approach to network design, which is based on providing a \emph{concrete} type of networks (by deterministic or randomized construction) and showing its efficiency, the axiomatic approach attempts to abstract away the algorithmic requirements that are imposed on the concrete model.
This allows one to isolate and identify the basic requirements that a network needs for a certain type of tasks.
While there are obvious similarities between the traditional direct modeling 
approach and our axiom-based one, there are also some marked differences.
Perhaps the main difference is that
whereas the direct modeling approach focuses on rules governing the concrete {\em structural} properties of a network, our axiomatic approach relies on defining necessary \emph{operational} or \emph{algorithmic} properties required from the network, without committing to any a specific topology. 
This approach may allow us to abstract away many of the less significant details of the structure, and thus enable us to derive a simpler analysis of the essential properties of the structure under consideration, developing efficient algorithms for various purposes,
%(such as information dissemination, information extraction, computing aggregate functions etc.) 
relying solely on those abstract features. 

A closely related distinction is that while the performance of distributed algorithms is usually expressed by specific {\em structural} network parameters (e.g., diameter, degree, etc.), the axioms proposed in this work
are expressed in terms of desired \emph{algorithmic} properties 
that the network should have.

Our axiomatic approach is also influenced by the recent concept of \emph{Software Defined Networks} (SDN) \cite{Feamster:SDN}.
The concept of SDN implies layering and abstractions in the networking control plane, which allows easy configuration of overlay architectures with a specific \emph{behavior} (as opposed to specific \emph{structure}). By only defining the desired behavior, we may allow many network implementations, as long as the desired behavior is maintained. Using our axioms, we are able to define the exact abstractions required for devising efficient distributed algorithms,thus decoupling them from specific topology details.

The axioms proposed in the current work are motivated and inspired by the {\em core-periphery} structure exhibited by many social networks and complex systems.
A core-periphery network is a network structured of two distinct groups of nodes, namely, a large, sparse and weakly connected group of nodes identified as the {\em periphery}, which is loosely organized around a small, cohesive and densely connected group identified as the {\em core}.
Such a dichotomic structure appears in many domains of our life, and
has been observed in many social organizations, including modern social networks \cite{Avin2012From}. It can also be found in urban and even global systems (e.g., in global economy, the wealthiest countries constitute the core, which is highly connected by trade and transportation routes) \cite{fujita2001spatial,krugman-1991,holme2005core}. 
An analysis conducted in \cite{maccormack2010architecture} for many software systems, revealed that $75-80\%$ of the systems examined possess a core-periphery structure. 
There are also peer-to-peer networks that use a similar hierarchical structure, e.g., FastTrack \cite{Liang2006842} and Skype \cite{skype2006}, in which the supernodes can be viewed as the core, while the regular users constitute the periphery. 
Various client-server systems can also be thought of in these terms.

The vast presence of the core-periphery structure in our life suggests that it is a natural and effective design pattern. Consequently, we argue that a distributed network architecture based on it may support distributed algorithms that are
easy to devise and reason about, given that we are used to thinking in ``core-periphery terms" (for instance, in employing information processes based on collecting inputs from various peripheral sources to the core nodes, processing the data centrally, and then sending the results back to the periphery).

The main technical contribution of this paper is in proposing a minimal set of simple core-periphery-oriented axioms, and demonstrating that networks satisfying these axioms achieve efficient running time for various distributed computing tasks, while being able to maintain a linear number of edges and limited memory use. 
We identify three basic, abstract and conceptually simple (parameterized) properties, which turn out to be highly relevant to the effective interplay between core and periphery. For each of these three properties, we propose a corresponding axiom, which in our opinion captures some intuitive aspect of the desired behavior expected of a network based on a core-periphery structure. 
Let us briefly describe our three properties, along with their ``real life" 
interpretation, technical formulation and associated axioms.

The three properties are: (i) balanced boundary between the core and 
periphery, (ii) clique-like structure of the core and (iii) fast convergecast from the periphery to the core.
The first property (i) concerns the boundary between the core and the periphery.
Drawing an analogy from the world of social networks, the core can be thought of as a highly influential group, that exerts its influence (in the form of instructions, opinions, or other means) on the periphery. The mechanism through which this is done is based on certain core nodes, which are each connected to many nodes in the periphery, and act as ``\emph{ambassadors}" of the core. 
Ambassadors serve as bidirectional channels, through which information flows into the core and influence flows from the core to the periphery. However, to be effective as an ambassador, the core node must maintain a balance between its interactions with the external periphery, and its interactions with the other core members, serving as its natural ``support"; a core node that is significantly more connected to the periphery than to the core, becomes ineffective as a channel of influence. In distributed computing terms, a core node that has many connections to the periphery has to be able to distribute all the information it collected from them, to other core nodes.
Hence the relevant property is having a {\em balanced boundary}:
a set $S$ of nodes is said to have an \emph{$\alpha$-balanced} boundary if for each of its nodes, the ratio between the sizes of its neighborhoods outside and inside $S$ is at most $O(\alpha)$.
The corresponding Axiom $\axiomBoundary$ states that the core must have 
a $\Theta(1)$-balanced boundary.

The second property (ii) deals with the flow of information within the core. 
It is guided by the key observation that to be influential, the core must be 
able to accomplish fast information dissemination internally among its members.
The extreme example of a dissemination-efficient network is the complete graph,
so the core's efficiency in information flow should naturally be measured 
against that benchmark. Formally, 
a set of nodes is said to be a {\em $\beta$-clique emulator} if it can 
accomplish full communication (namely, message exchange between every pair 
of its members) in $\beta$ time (communication rounds). 
The corresponding Axiom $\axiomClique$ postulates that the core must be
a $\Theta(1)$-clique emulator. 
Note that this requirement is stronger than just requiring the core to possess
a dense interconnection subgraph, since the latter permits the existence 
of ``bottlenecks", whereas the requirement dictated by the axiom disallows 
such bottlenecks.

The third and the last property (iii)
%The forth property
%Axiom $\axiomConvergecast$ also concerns the boundary between the core and periphery, but 
%in addition it refers also to the structure of the periphery, and 
focuses on the flow of information from the periphery to the core
and measures its efficiency.
The core-periphery structure of the network is said to be
a {\em $\gamma$-convergecaster} if this data collection operation 
can be performed in time $\gamma$. 
The corresponding Axiom $\axiomConvergecast$ postulates that 
information can flow from the periphery nodes to the core efficiently 
(i.e., in constant time), namely, the core and periphery must form a $\Theta(1)$-convergecaster.
Note that one implication of this requirement is that the presence of periphery nodes that are far away from the core, or bottleneck edges that bridge between many periphery nodes and the core, is forbidden.

\begin{table}[t]
\centering
\begin{tabular}{|l|c|c|c|}\hline

Task & Running time & 
%Necessary Axioms & 
\multicolumn{2}{c|}{Lower bounds} \\ 
\cline{3-4}
& on $\C\P$ networks & All  Axioms & Any 2 Axioms\\
\hline\hline

%$k$-dissemination (for $k=1$ we get broadcast) & $O(k)$ & 
%Worst case graph: $\Omega(D+k/d_{\min})$, &
%
%$\C\P: \Omega(k/d_{\min})$&
%$\axiomClique$,$\axiomConvergecast$
%\\\hline\hline

MST  * & $O(\log^2n)$ &
%$\axiomBoundary$,$\axiomClique$,$\axiomConvergecast$ & 
$\Omega(1)$ & $\tilde{\Omega}(\sqrt[4]{n})$ \\ \hline

Matrix transposition & 
$O(k)$ & 
%$\axiomBoundary$,$\axiomClique$,$\axiomConvergecast$ &
$\Omega(k)$ & $\Omega(n)$ 
\\\hline

Vector by matrix multiplication & 
$O(k)$ & 
%$\axiomBoundary$,$\axiomClique$,$\axiomConvergecast$ &
$\Omega(k/\log n)$ & $\Omega(n/\log n)$
\\\hline

Matrix multiplication & $O(k^2)$ & 
%$\axiomBoundary$,$\axiomClique$,$\axiomConvergecast$ &
$\Omega(k^2)$ & $\Omega(n/\log n)$
\\\hline

Rank finding & $O(1)$ & 
%$\axiomBoundary$,$\axiomClique$,$\axiomConvergecast$ &
$\Omega(1)$ & $\Omega(n)$
\\\hline

Median finding & $O(1)$ & 
%$\axiomBoundary$,$\axiomClique$,$\axiomConvergecast$ &
$\Omega(1)$ & $\Omega(\log n)$
\\\hline

Mode finding & $O(1)$ & 
%$\axiomBoundary$,$\axiomClique$,$\axiomConvergecast$ &
$\Omega(1)$ & $\Omega(n/ \log n)$
 \\\hline

Number of distinct values & $O(1)$ &
%$\axiomBoundary$,$\axiomClique$,$\axiomConvergecast$ &
$\Omega(1)$ & $\Omega(n/\log n)$
\\\hline

Top $r$ ranked by areas & 
$O(r)$ & 
%$\axiomBoundary$,$\axiomClique$,$\axiomConvergecast$ &
$\Omega(r)$ & $\Omega(r\sqrt{n})$
\\\hline

%$k$ largest values &
%% (for $k=1$ we get find max) & 
%$O(k)$ &
%$\axiomClique$,$\axiomConvergecast$ &
%$\Omega(k)$&
%$\Omega(n)$
%\\\hline
 \multicolumn{4}{l}{$k$ - maximum number of nonzero entries in a row or column. * - randomized algorithm}

%Find average value & $O(1)$ & $\Omega(1)$& $\axiomClique$,$\axiomConvergecast$	\\\hline
\end{tabular}
\caption{\label{tab:intro_algo_summary}
Summary of algorithms for core-periphery networks.}
\end{table}

One may raise the ``semi-philosophical" question whether 
the properties we defined should be referred to as ``axioms". Our answer is that adopting the axiomatic view yields the added benefit that it immediately raises the fundamental issues of minimality, independence and necessity, thus allowing us to carefully verify the role and usefulness of each property / axiom.
%
%One may raise the question whether our 
%four 
%axioms are all essential.
Indeed, we partially address these issues.
% in two ways. 
First, we establish the independence of our axioms, by showing that
neither of them is implied by the other two.
Second, for each task, we establish the necessity of the axioms. Specifically, we show that if at least one of the axioms required by the algorithm is omitted, then there exists a network that satisfies the other axioms, but for which the running time (of {\em any} distributed algorithm) is larger by at least a factor of $\log n$ and at most a factor $n$, see Table 1.

To support and justify our selection of axioms, we examine their usefulness
% in the sense of their relevance to
for effective distributed computations on core-periphery networks. We consider a collection of different types of tasks, and show that they can be efficiently solved on core-periphery networks, by providing a distributed algorithm for each task and bounding its running time. 
%\michael{The following sentence says exactly the same as the last sentence of the previous paragraph. So, I suggest to remove the following sentence.}
%Moreover, for each task we argue the necessity of all three axioms, by showing that if at least one of the axioms is not satisfied by the network under consideration, then the same efficiency 
%can not be guaranteed by \emph{any} algorithm for the given task.

Table \ref{tab:intro_algo_summary} provides a summary of the main tasks we studied, along with the upper and lower bounds on the running time when the network satisfies our axioms, and a worst case lower bound on the time required when at least one of the axioms is not satisfied.
For each task we provide an algorithm, and prove formally its running time and the necessity of the axioms. 
As it turns out, some of the necessity proofs make use of an interesting connection to known communication complexity results.

%\begin{thm_our_algo_intro}[Meta Theorem]
%%\label{thm:mst_runtime}
%%\noindent{\bf Theorem:}
%If network $G(V,E)$ satisfies Core-Periphery Axioms $\axiomBoundary$,$\axiomClique$,$\axiomConvergecast$, 
%then there exists a distributed algorithm $\mathcal{A}$ for task  $\mathcal{B}$
%that run in $O(x)$ rounds with high probability \footnote{Hereafter, ``with high probability (w.h.p.)'' means 
%with failure probability polynomially small in $n$.}.
%%probability\footnote{With high probability (w.h.p.) means 
%%with probability at least $1-\tfrac{1}{n}$.}.
%More over if at least one axiom is not satisfied, there exist a graph $G(V,E')$ for which any algorithm 
%will need $w(x)$ rounds w.h.p.
%\end{thm_our_algo_intro} 

The most \emph{technically} challenging part of the paper is the distributed construction of a \emph{minimum-weight spanning tree} (MST), a significant task 
in both the distributed systems world, cf. \cite{lynch-book,peleg-book}, and the social networks world 
\cite{adamic1999small,bonanno2003topology,chen2003visualizing}.
Thus, the main \emph{algorithmic} result of the current paper is proving that MST can  
be computed efficiently (in $O(\log^2n)$ rounds) on core-periphery 
networks, namely, networks that comply with our axioms (interestingly, our algorithm is 
randomized, which is a rarity in the distributed MST literature).
To position this result in context, let us briefly review 
the state of the art on the problem of distributed MST construction.
The problem was first studied in \cite{GHS,awerbuch}, where the main focus 
was on low communication costs, and the run-time was at least linear in $n$. 
The study of sublinear-time distributed MST construction was initiated 
in~\cite{garay1998sublinear}. 
Currently, the best upper bound for general graphs is 
$O\left(\sqrt{n} \log^*n+ D\right)$ \cite{kutten1998fast},
where $D$ denotes the network diameter. %\footnote{maximal inter-node distance}. 
Conversely, it was shown in \cite{peleg2000near} that there is a graph for which any deterministic algorithm for MST requires $\Omega(\sqrt{n}/\log n)$ time. 
%$\Omega\left(\frac{\sqrt{n}}{\log n}\right)$. 
More efficient algorithms exist for specific families of graphs.
For the complete graph $G=K_n$, an MST, can be constructed in a distributed manner in $O(\log \log n)$ time \cite{lotker2005minimum}. 
For the wider class of graphs, of diameter at most 2, this task can still 
be performed in time $O(\log n)$. In contrast, taking the next step, 
and considering graphs of diameter 3, drastically changes the picture, 
as there are examples of such graphs for which any distributed MST construction 
requires $\Omega\left( \sqrt[4]{n} \right)$ time \cite{lotker06distributed}. 

Let us now briefly review related work.
Core periphery structures in social networks provided the inspiration to our approach. A number of excellent books provide a general review of social networks. 
A brief description of the core-periphery structure, and the core's possible 
utilization as an interconnection mechanism among the network's users 
(albeit with no formal models or algorithmic procedures), 
can be found in Easley and Kleinberg's book \cite{kleinberg-book}.
%page 552.
%although the book mentions the description of core-periphery,
%and says that people find other people using the core
%there are no models in the book.
The first explicit treatment of the core-periphery structure in social networks
is given by Borgatti and Everett in \cite{borgatti2000models}, 
which provides a descriptive model for the core-periphery structure, 
and reviews some of its occurrences in a variety of social settings, 
lending support to our intuition regarding the centrality of 
the core-periphery structure in social networks.
%they say:
%"The intuitive conception entails a dense, cohesive core and a sparse, unconnected periphery".
%We paraphrase this sentence earlier in our introduction.
It does not, however, provide any systematic mathematical model 
for this structure.
Similar phenomena of core-periphery structure were shown to exist in economics, e.g., the core--periphery model of Krugman \cite{krugman-1991} and other network formation models \cite{hojman2008core}.
Several books present a formal rigorous analysis for the Preferential 
Attachment model. The interested reader is invited to look at, cf., 
Chapter 3 of \cite{chung2006complex}, Chapter 4 of \cite{bonato2008course}, 
or Chapter 14 of  \cite{newman2010networks}. 

Turning to the distributed realm, in recent years there have been many studies focusing on the analysis of distributed algorithms on the complete graph $K_n$, cf. \cite{patt2011round,berns2012super,dolev2012tri,jung2012distributed,elkin2006unconditional,lotker06distributed,peleg2000near,lotker2005minimum,Lenzen:2011}. We believe that some of those algorithms can be extended to, and applied in, the context of core-periphery networks, which provide a much larger family of graphs.

The rest of the paper is organized as follows. Section \ref{sec:axioms} formally describes core-periphery networks, the axioms and their 
basic structural implications. Section \ref{sec:mst} provides a description of the MST algorithm, and Section \ref{sec:more_algorithms} presents the rest of the tasks we study.

\section{Axiomatic design for core-periphery networks}\label{sec:axioms}
% !TEX root =  core_per.tex

\subsection{Preliminaries}

Let $G(V,E)$ denote our (simple undirected) network, where $V$ is the set 
of nodes, $\abs{V}=n$, and $E$ is the set of edges, $\abs{E}=m$.
The network can be thought of as representing a distributed system.
We assume the synchronous {\cal CONGEST} model (cf. \cite{peleg-book}), 
where communication proceeds in \emph{rounds}, and in each round each node 
can send a message of at most $O(\log n)$ bits to each of its neighbors.
Initially, each node has a unique ID of $O(\log n)$ bits. 

For a node $v$, let $N(v)$ denote its set of neighbors and $d(v)=\abs{N(v)}$ 
its degree. 
For a set $S \subset V$ and a node $v \in S$, let 
$N_{\mathrm{in}}(v, S) = N(v) \cap S$ denote its set of neighbors within $S$, 
and denote the number of neighbors of $v$ in the set $S$ by
$d_{\mathrm{in}}(v,S)=\abs{N_{\mathrm{in}}(v, S)}$.
Analogously, let $N_{\mathrm{out}}(v,S) = N(v) \cap V \setminus S$ denote $v$'s set of neighbors outside the set $S$, and let $d_{\mathrm{out}}(v)=\abs{N_{\mathrm{out}}(v,S)}$.
For two subsets $S, T \subseteq V$, % s.t. $T \cap S = \emptyset$
let $\partial(S,T)$ be the \emph{edge boundary} (or cut) of $S$ and $T$, 
namely, the set of edges with exactly one endpoint in $S$, one in $T$ 
and $\abs{\partial(S,T)} =  \sum_{v \in S} \abs{N_{\mathrm{out}}(v,S) \cap T}$.
Let $\partial(S)$ 
denote the boundary in the special case where $T=V\setminus S$.

\subsection{Core-periphery networks}

Given a network $G(V,E)$, a $\cp$-partition 
is a partition of the nodes of $V$ into two sets, 
the \emph{core} $\C$ and the \emph{periphery} $\P$. 
Denote the sizes of the core and the periphery by $\nc$ and $\np$, respectively.
To represent the partition along with the network itself, we denote the 
{\em partitioned network} by $\GVECP$.

Intuitively, the core $\C$ consists of a relatively small group 
of strong and highly connected machines, designed to act as central servers, 
whereas the periphery $\P$ consists of the remaining nodes, typically acting
as clients. The periphery machines are expected to be weaker and less 
well connected than the core machines, and they perform much of their 
communication via the dense interconnection network of the core.
In particular, a central component in many of our algorithms, 
for various coordination and computational tasks, is based on assigning 
each node $v$ a {\em representative} core node $r(v)$, which is
essentially a neighbor acting as a ``channel" between $v$ and the core. 
The representative chosen for each periphery node is fixed.

For a partitioned network to be effective, the $\cp$-partition must possess
certain desirable properties. In particular, a partitioned network $\GVECP$ 
is called a {\em core-periphery network}, or {\em $\C\P$-network} for short, 
if the $\cp$-partition satisfies three properties, defined formally later on, 
in the form of three axioms.

\subsection{Core-periphery properties and axioms}

We define certain key parametrized properties 
of node groups, in networks that are of particular relevance to the 
relationships between core and periphery in our partitioned 
network architectures.
We then state our axioms, which capture the expected behavior 
of those properties in core-periphery networks, and demonstrate 
their independence and necessity. Our three basic properties are the following.

\begin{definition_bal}%[$\alpha$-Balanced Boundary]
A subset of nodes $S$ is said to have an {\em $\alpha$-balanced boundary} iff
$\frac{d_{\mathrm{out}} (v, S)}{d_{\mathrm{in}} (v, S)+1} =   O(\alpha)$
for every node $v \in S$. 
\end{definition_bal}
\begin{definition_cli}%[$\beta$-Clique Emulation]
The task of {\em clique emulation} on an $n$-node graph $G$ involves delivering 
a distinct message $M_{v,w}$, from $v$ to $w$, for every pair of nodes 
$v,w$ in $V(G)$. An $n$-node graph $G$ is a {\em $\beta$-clique-emulator},
if it is possible to perform clique emulation on $G$ within $\beta$ rounds 
(in the CONGEST model).
\end{definition_cli}
\begin{definition_con}%[$\gamma$-convergecast] 
For $S,T\subseteq V$, 
the task of {\em $\langle S,T\rangle$-convergecast}, on a graph $G$, involves
delivering $\abs{S}$ distinct messages $M_v$, originating at the nodes $v\in S$,
to some nodes in $T$ (i.e., each message must reach at least one node in $T$).
The sets $S,T\subset V$ form a {\em $\gamma$-convergecaster} 
if it is possible to perform $\langle S,T\rangle$-convergecast on $G$ 
in $\gamma$ rounds (in the CONGEST model).
\end{definition_con}

Consider a partitioned network $\GVECP$.
We propose the following set of \emph{axioms}, concerning 
the core $\C$ and periphery $\P$.

\begin{itemize}
[labelindent=1.5em,labelsep=*,leftmargin=*,itemsep=4pt]

\item[$\axiomBoundary$.] \textbf{Core Boundary.} 
%\\ {\bf DP: Perhaps ``Bounded Influence Ratio''} \\
The core $\C$ has a $\Theta(1)$-balanced boundary.

\item[$\axiomClique$.] \textbf{Clique Emulation.}
The core $\C$ is a $\Theta(1)$-clique emulator.

\item[$\axiomConvergecast$.] \textbf{Periphery-Core Convergecast}.
The periphery $\P$ and the core $\C$  form a $\Theta(1)$-convergecaster.
\end{itemize}

Let us briefly explain the axioms.    
Axiom $\axiomBoundary$ talks about the boundary between the core and periphery.
Core nodes with a high out-degree (i.e., with many links to the 
periphery) are thought of as ambassadors of the core to the periphery. 
Axiom $\axiomBoundary$ states that while not all nodes in the core must serve 
as ambassadors, if a node is indeed an ambassador, then it must also have many 
links {\em within} the core. 
Axiom $\axiomClique$ talks about the flow of information within the core, and postulates that the core must be dense, and in a sense, behave almost like a complete graph: ``everyone must know almost everyone else". The clique-emulation requirement is actually stronger than just being a dense subgraph, since the latter permits the existence of ``bottlenecks" nodes, which a clique-emulator must avoid.
Axiom $\axiomConvergecast$ also concerns the boundary between the core and periphery, but in addition it refers also to the structure of the periphery. It postulates that information can flow efficiently from the periphery to the core. For example, it forbids the presence of periphery nodes that are far away from the core, or bottleneck edges that bridge between many periphery nodes and the core.
Fig. \ref{fig:examples}(I) 
provides an example for a $\C\P$-network satisfying the three axioms. 

We next show that the axioms are independent.
Later, we prove the necessity of the axioms for the efficient performance 
of a variety of computational tasks.

\begin{theorem}
\label{thm:independent}
Axioms $\axiomBoundary$, $\axiomClique$, $\axiomConvergecast$ are independent, 
namely, assuming any two of them does not imply the third.
\end{theorem}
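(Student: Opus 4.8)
The plan is to establish independence by exhibiting, for each pair of axioms, a partitioned network $\GVECP$ that satisfies those two axioms but violates the third. So I would construct three families of examples, one per ``missing axiom''.

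\emph{An example satisfying $\axiomClique$ and $\axiomConvergecast$ but not $\axiomBoundary$.} Here I want the core to emulate a clique in $O(1)$ rounds and the periphery to convergecast into the core in $O(1)$ rounds, yet have some core node whose out-degree to the periphery vastly exceeds its in-degree inside the core. The natural choice is to take the core $\C$ to be a clique $K_{\nc}$ (which trivially is a $1$-clique-emulator) of modest size, say $\nc=\Theta(\sqrt n)$, and attach \emph{all} $\np = n-\nc$ periphery nodes as pendant vertices to a \emph{single} core node $v_0$. Then convergecast is not $O(1)$ because that one edge is a bottleneck carrying $\Theta(n)$ messages — so I should instead spread the periphery across the core but \emph{unevenly}: attach $\Theta(n/\nc)$ periphery pendants to each core node for $\gamma$-convergecast to work in $O(1)$ rounds, except give one distinguished core node $v_0$ an extra $\omega(\nc)$ pendants (e.g.\ $\Theta(\sqrt n)$ extra while keeping $\nc$ small enough, or simply make $d_{\mathrm{out}}(v_0,\C)/d_{\mathrm{in}}(v_0,\C)=\omega(1)$). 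One must check convergecast still runs in $O(1)$: route each pendant's message directly to its core neighbor (one round), then use the clique emulation inside the core to redistribute — but $v_0$ alone receives $\omega(\nc)$ messages and can only ship out $O(1)$ per core-edge per round, so I actually need the \emph{extra} pendants on $v_0$ to still be only $O(\nc)$ in number while their \emph{ratio} to $d_{\mathrm{in}}(v_0,\C)$ is large; this is arranged by keeping $d_{\mathrm{in}}(v_0,\C)$ small, e.g.\ by making the core a sparse-looking but still clique-emulating graph — or, more cleanly, by taking $\nc$ itself large, $\nc=\Theta(n/\log n)$, core $=K_{\nc}$, periphery $=\Theta(n/\log n)$ nodes all pendant to $v_0$: then $\partial$ convergecast finishes in $O(1)$ rounds (the $\Theta(n/\log n)$ messages arriving at $v_0$ are spread over its $\Theta(n/\log n)$ core-edges in $O(1)$ rounds), $\axiomClique$ holds, but $d_{\mathrm{out}}(v_0,\C)/(d_{\mathrm{in}}(v_0,\C)+1)=\Theta(1)$ — so even this fails to break $\axiomBoundary$. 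The fix is to take the core to be $K_{\nc}$ with $\nc=\Theta(\log n)$ and hang $\Theta(n)$ periphery pendants on one core node: now $d_{\mathrm{out}}(v_0,\C)=\Theta(n)\gg\Theta(\log n)=d_{\mathrm{in}}(v_0,\C)$, so $\axiomBoundary$ fails; $\axiomClique$ holds since the core is a clique; but convergecast through the single node $v_0$ with only $\Theta(\log n)$ outgoing core-edges takes $\Theta(n/\log n)$ rounds, so $\axiomConvergecast$ also fails — not independent. The real resolution, which I'll carry out, is to spread the periphery \emph{evenly} over a clique core so that $\axiomConvergecast$ holds, and separately add \emph{one} extra core vertex $u$ that is adjacent to only $O(1)$ core vertices (so that its $d_{\mathrm{in}}(u,\C)=O(1)$) but to $\omega(1)$ periphery vertices, while ensuring $u$ can still participate in clique emulation in $O(1)$ rounds via its few core neighbors (possible if $u$'s degree is $O(1)$, since it only needs to exchange $O(1)$ messages). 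Then $\axiomClique$ and $\axiomConvergecast$ hold while $\axiomBoundary$ fails at $u$.

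\emph{An example satisfying $\axiomBoundary$ and $\axiomConvergecast$ but not $\axiomClique$.} Take the core to be a sparse graph with a bottleneck — e.g.\ a path, a cycle, or two cliques joined by a single edge — so that clique emulation inside the core requires $\omega(1)$ rounds, while keeping every core node's out-degree to the periphery comparable to its in-degree (e.g.\ give each core node $\Theta(1)$ periphery pendants, or none), so $\axiomBoundary$ holds, and make convergecast trivial (periphery is small, or each periphery node is pendant on a distinct core node with $O(1)$ load), so $\axiomConvergecast$ holds. A clean choice: core $=$ cycle $C_{\nc}$ with $\nc=\Theta(\sqrt n)$, periphery $=$ rest, distributed $O(\sqrt n)$ pendants per core node but routed in $O(1)$ rounds along... no — $O(\sqrt n)$ messages can't leave a degree-$3$ core node in $O(1)$ rounds. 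So instead: make the periphery \emph{empty} or of size $O(\nc)$ with $O(1)$ pendants per core node. Then $\axiomBoundary$ and $\axiomConvergecast$ hold vacuously/easily, but the core (a cycle) needs $\Theta(\nc)$ rounds to emulate a clique, violating $\axiomClique$.

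\emph{An example satisfying $\axiomBoundary$ and $\axiomClique$ but not $\axiomConvergecast$.} Take the core to be a clique $K_{\nc}$ ($\axiomClique$ holds), and build a periphery that convergecasts slowly: either a long path of periphery nodes dangling from one core node (distance $\omega(1)$ to the core, so $\axiomConvergecast$ fails), or a single periphery-to-core bottleneck edge through which $\omega(\nc)$ periphery nodes must funnel. To keep $\axiomBoundary$, ensure no core node has out-degree to the periphery exceeding $O(1)$ times its core-degree — easy, since the core is a clique of in-degree $\nc-1$, so a core node may safely have up to $\Theta(\nc)$ periphery neighbors. Concretely: core $=K_{\nc}$ with $\nc=\Theta(\log n)$; periphery $=$ a path $v_1 - v_2 - \cdots - v_{\np}$ of length $\np=n-\Theta(\log n)$, with only $v_1$ attached to one core node. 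Then $\axiomClique$ holds, $\axiomBoundary$ holds (each core node has $\le 1$ periphery neighbor and $\nc-1$ core neighbors), but the periphery node $v_{\np}$ is at distance $\Theta(n)$ from the core, so $\langle\P,\C\rangle$-convergecast takes $\Theta(n)$ rounds, violating $\axiomConvergecast$.

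Assembling these three constructions proves the theorem. In each case the verifications are: (a) the core is a $\Theta(1)$-clique-emulator exactly when it is a clique or otherwise has $O(1)$ ``diameter-and-congestion'' — routine; (b) the $\Theta(1)$-balanced-boundary check is a direct degree count; (c) the $\Theta(1)$-convergecast check amounts to a short load/distance argument. The main obstacle, as the meandering above shows, is the first case: pinning down a single network that genuinely \emph{decouples} $\axiomBoundary$ from the other two, since a node with huge out-degree tends to create either a convergecast bottleneck or a clique-emulation bottleneck at that very node. The trick that makes it work is to localize the imbalance at a \emph{single} low-core-degree, low-total-degree ``rogue ambassador'' vertex $u$: because $u$ has only $O(1)$ incident core edges it breaks $\axiomBoundary$, but because it also has only a \emph{bounded} total number of messages to move (we can cap $d_{\mathrm{out}}(u,\C)$ at any $\omega(1)$ growing function that is still $o$ of the convergecast budget, e.g.\ $\Theta(\log\log n)$), it neither slows clique emulation nor the convergecast. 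I would present that construction carefully and leave the other two, which are standard bottleneck/distance examples, as short paragraphs.
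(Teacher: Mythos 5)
Your second and third constructions are correct and are essentially the paper's own examples (a cycle core with $O(1)$ pendants per core node is the paper's ``sun'' graph, violating only $\axiomClique$; a clique core with a long periphery path hanging off one node is the paper's ``lollipop'', violating only $\axiomConvergecast$). The genuine gap is in your first case, the network that should violate only $\axiomBoundary$. The ``rogue ambassador'' construction you settle on --- a core vertex $u$ with $O(1)$ core neighbors and $\omega(1)$ (e.g.\ $\Theta(\log\log n)$) periphery neighbors sitting inside an otherwise large clique core --- does \emph{not} satisfy $\axiomClique$: clique emulation is an all-pairs message exchange among core nodes, so $u$ must send and receive $\nc-1$ distinct messages, and with only $O(1)$ core edges (its pendant periphery neighbors cannot relay toward the rest of the core) this takes $\Omega(\nc)$ rounds, which is $\omega(1)$ for any core of non-constant size. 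Your assertion that $u$ ``only needs to exchange $O(1)$ messages'' conflates the number of messages $u$ must handle in clique emulation (determined by $\nc$) with $u$'s degree.

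The reasoning that pushed you away from the simple, correct example rests on a misreading of the convergecast definition: $\langle \P,\C\rangle$-convergecast only requires each periphery message to reach \emph{some} node of $\C$, not to be disseminated further inside the core, and in CONGEST a node may receive one message on each incident edge in the same round. Hence a core node with $\Theta(n)$ pendant periphery neighbors absorbs all their messages in a single round and is not a bottleneck (for the same reason, your objection in case 2 that ``$O(\sqrt n)$ messages can't leave a degree-$3$ core node'' is moot --- they never need to leave it). With this reading, the construction you discarded works immediately, and the paper uses its extreme form: the dumbbell $D_n$, whose core is two adjacent star centers and whose periphery is the $n-2$ leaves; a two-node core trivially emulates a clique, every leaf delivers its message to a core node in one round, yet each center has $d_{\mathrm{out}}\approx n/2$ against $d_{\mathrm{in}}=1$, so $\axiomBoundary$ fails. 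Replacing your case-1 construction by this (or by any clique core in which some node carries $\omega(\nc)$ pendants) repairs the proof; the other two cases can stand as written.
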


\begin{figure}[t]
\centering
\begin{tabular}{cc}%35,45
\includegraphics[width=.38\columnwidth]{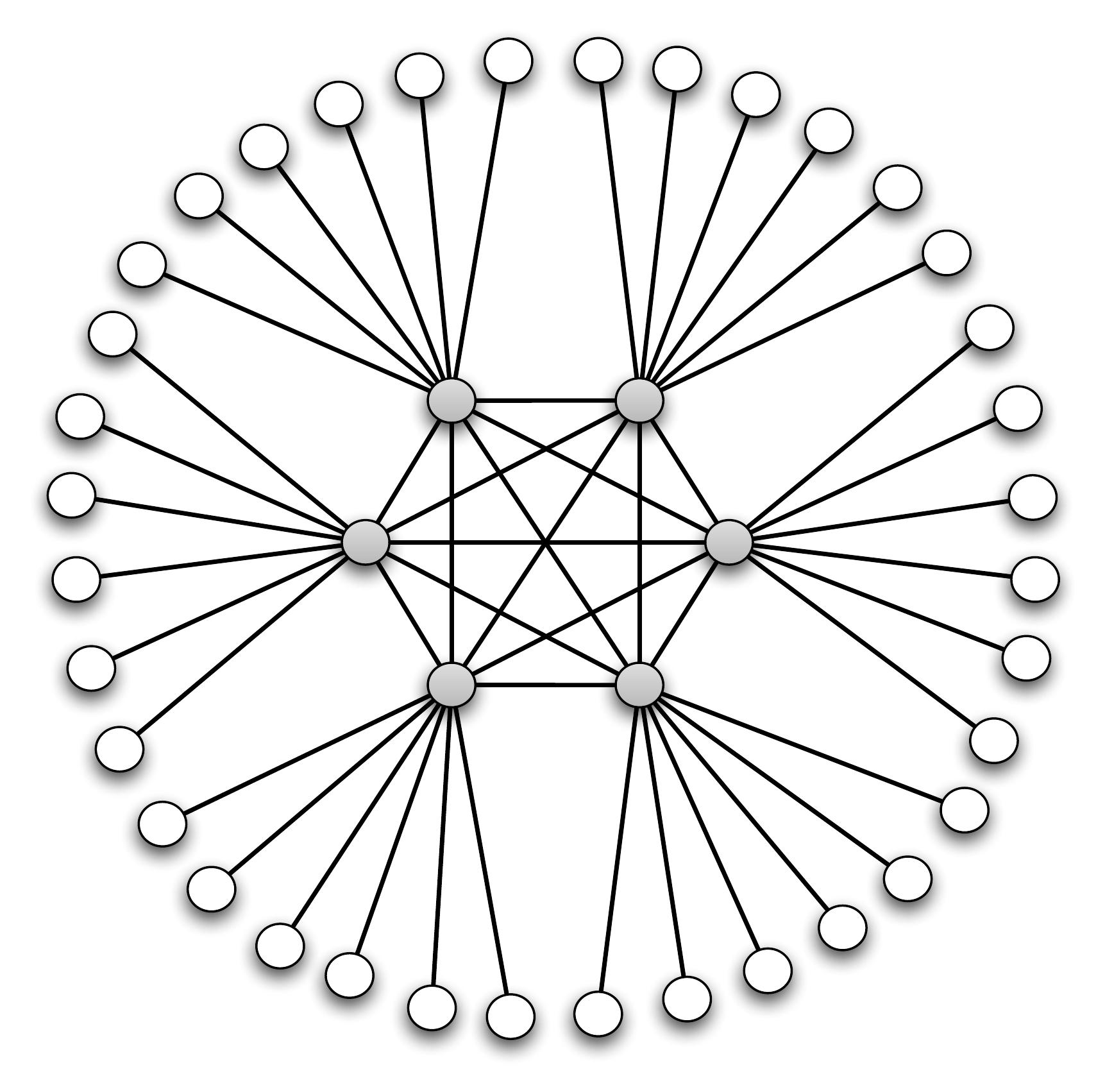} &
\includegraphics[width=.48\columnwidth]{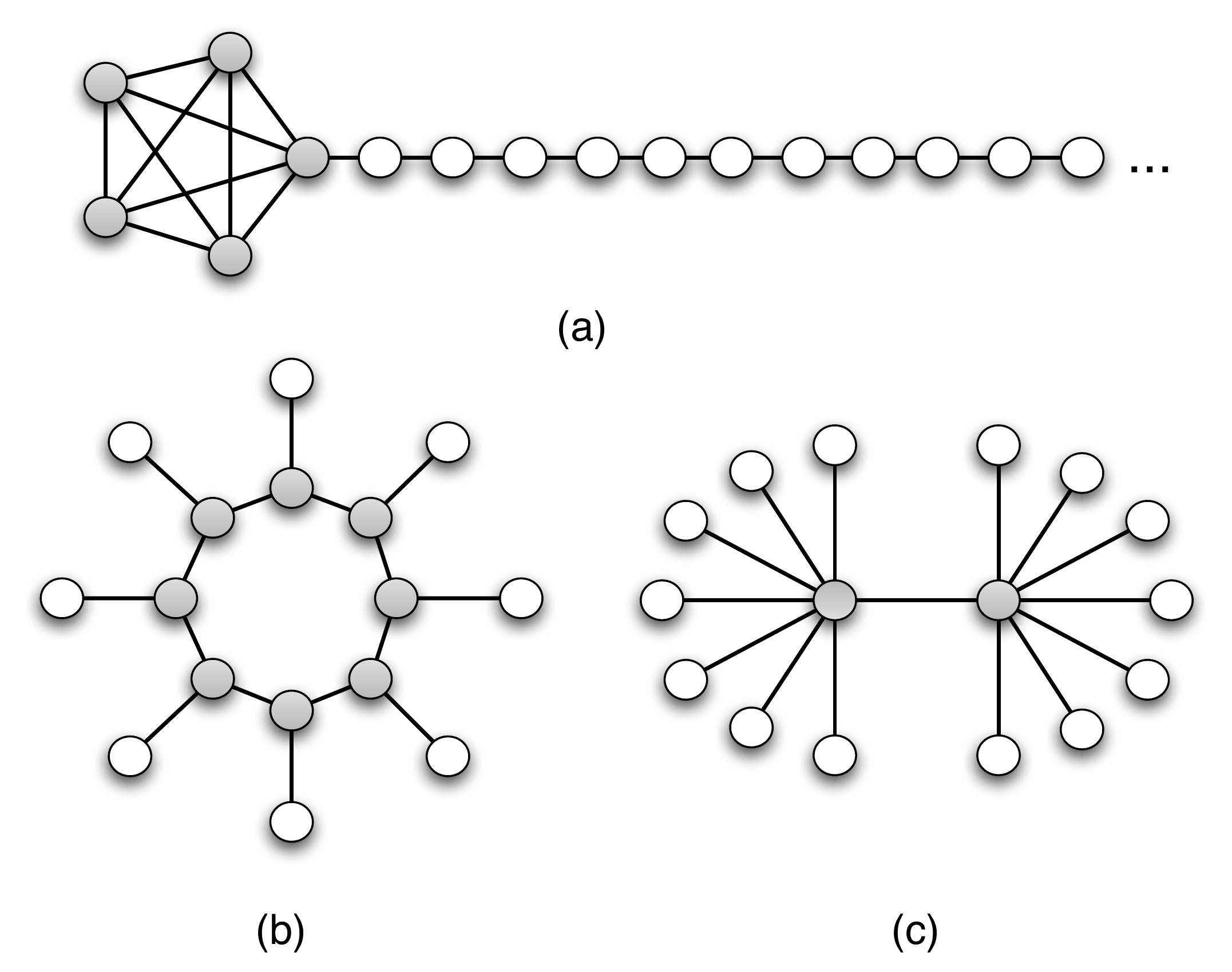} \\
(I)  &   (II)
\end{tabular}
\caption{
(I) An example for a 36-node $\C\P$-network that satisfies all three axioms. The 6 core nodes (in gray) are connected in clique.  In this example every core node is also an ambassador with equal number of edges to the core and outside the core. The core and periphery form a convergecaster since the periphery can send all its information to the core in one round.
(II) Networks used in proofs: (a) The ``lollipop partitioned" network $L_{25}$. (b) The ``sun partitioned" network $S_{16}$. (c) The ``dumbbell partitioned" network $D_{16}$ .}
\label{fig:examples}
\end{figure}

\begin{proof}
We prove the theorem by considering three examples of partitioned networks, described next. Each of these networks satisfies two of the axioms, but violates the third
(hence, they are not $\C\P$-networks), implying independence.

\paragraph{The lollipop partitioned network $L_n$ (Fig. \ref{fig:examples}(II)(a))}

The lollipop graph consists of a $\lfloor\sqrt{n}\rfloor$-node clique and a 
$n-\lfloor\sqrt{n}\rfloor$-node line, attached to some node of the clique. 
The corresponding partitioned network is obtained by setting its core $\C$
to be the clique, and its periphery $\P$ to be the line. 
Observe that $L_n$ is not a $\C\P$-network. Indeed, Axiom $\axiomClique$ holds on $L_n$, 
and Axiom $\axiomBoundary$ also holds since 
the outgoing degree of each node in the core is 0 or 1.
%except one node which has outgoing degree $1$. 
However, $\axiomConvergecast$ is not satisfied on $L_n$
since the periphery consists of  a line of length $n-\lfloor\sqrt{n}\rfloor$, 
so it will take linear time for the periphery $\P$ 
to convergecast to the core $\C$.

\paragraph{The sun partitioned network $S_n$ (Fig. \ref{fig:examples}(II)(b)) } 

The sun graph consists of an $\lceil n/2\rceil$-node cycle, with an additional leaf node
attached to each cycle node. 
The corresponding partitioned network is obtained by setting its core $\C$
to be the cycle, and its periphery $\P$ to contain all other $\lfloor n/2\rfloor$ nodes. 
Clearly, $\axiomConvergecast$ holds on $S_n$, since each node in $\P$ is only 
one hop away from some node in $\C$. Axiom $\axiomBoundary$ also holds, 
since the outgoing degree of each node in $\C$ is $1$. 
Axiom $\axiomClique$, however, does not hold, since the distance 
between two diametrically opposing nodes in the cycle is at least $n/4$, 
so it is not possible to perform $\Theta(1)$ clique emulation.

\paragraph{The dumbbell partitioned network $D_n$ (Fig. \ref{fig:examples}(II)(c)) } 
The dumbbell graph is composed of two stars, 
each consisting of a center node connected to $\lceil n/2\rceil-1$ leaves,
whose centers are connected by an edge. 
The corresponding partitioned network is obtained by setting its core $\C$
to be the two centers, and the periphery $\P$ to consist of the $n-2$ leaves 
of the two stars (each of degree $1$). 
It is easy to see that Axioms $\axiomClique$ and $\axiomConvergecast$ hold on $D_n$,
while Axiom $\axiomBoundary$ does not. 
\end{proof}

\subsection{Structural implications of the axioms}

The axioms imply a number of simple properties of the network structure.

\begin{theorem}
\label{thm:cp_property2}
If the partitioned network $\GVECP$ is a core-periphery network (i.e., it 
satisfies Axioms $\axiomBoundary$, $\axiomClique$ and $\axiomConvergecast$), 
then the following properties hold:
\begin{enumerate}
\item The core size satisfies $\Omega(\sqrt{n}) \le \nc\le O(\sqrt{m})$. 
%\Omega(\sqrt{n})$ and $\nc = O(\sqrt{m})$.
\item Every $v\in\C$ satisfies 
$d_{\mathrm{out}} (v,\C) = O(\nc)$ and $d_{\mathrm{in}}(v, \C)= \Omega(\nc)$.
\item The number of outgoing edges of the core is $\abs{\partial(\C)} = \Theta(\nc^2)$.
\item The core is dense, i.e., the number of edges in it is 
$\sum_{v \in \C} d_{\mathrm{in}} (v,\C) = \Theta(\nc^2)$.
\end{enumerate}
\end{theorem}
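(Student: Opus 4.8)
The plan is to derive all four items from the three axioms, using the per-node degree estimates of item~(2) as the engine and proceeding roughly in the order (2), then (4), then (1), and finally (3).

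For item~(2), fix a core node $v\in\C$. The lower bound $d_{\mathrm{in}}(v,\C)=\Omega(\nc)$ should come from Axiom~$\axiomClique$: during the $\beta=\Theta(1)$ rounds of a clique emulation on the core, $v$ must receive the $\nc-1$ distinct messages addressed to it by the other core nodes, while in the CONGEST model $v$ receives at most $d_{\mathrm{in}}(v,\C)$ messages per round over its core-incident edges; hence $\beta\cdot d_{\mathrm{in}}(v,\C)\ge \nc-1$, i.e.\ $d_{\mathrm{in}}(v,\C)=\Omega(\nc)$. The upper bound $d_{\mathrm{out}}(v,\C)=O(\nc)$ then follows at once from Axiom~$\axiomBoundary$, which gives $d_{\mathrm{out}}(v,\C)=O\big(d_{\mathrm{in}}(v,\C)+1\big)$, combined with the trivial $d_{\mathrm{in}}(v,\C)\le\nc-1$. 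Item~(4) is then just a sum over the $\nc$ core nodes: $\sum_{v\in\C}d_{\mathrm{in}}(v,\C)$ is $\Omega(\nc^2)$ by the lower bound just shown and trivially $O(\nc^2)$, hence $\Theta(\nc^2)$.

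Summing the $d_{\mathrm{out}}$ bound over the core gives the upper bound $|\partial(\C)|=\sum_{v\in\C}d_{\mathrm{out}}(v,\C)=O(\nc^2)$ in item~(3); and since the core contains $\tfrac12\sum_{v\in\C}d_{\mathrm{in}}(v,\C)=\Theta(\nc^2)$ edges, all of which lie in $E$, we get $m=\Omega(\nc^2)$, i.e.\ $\nc=O(\sqrt m)$. For the remaining bounds I would invoke Axiom~$\axiomConvergecast$: all $\np$ periphery messages must cross the cut $\partial(\C)$ within $\Theta(1)$ rounds, and at most $|\partial(\C)|$ messages cross per round, so $|\partial(\C)|=\Omega(\np)$. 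Combined with $|\partial(\C)|=O(\nc^2)$ this yields $\np=O(\nc^2)$, hence $n=\nc+\np=O(\nc^2)$ and $\nc=\Omega(\sqrt n)$, finishing item~(1). The remaining lower bound $|\partial(\C)|=\Omega(\nc^2)$ in item~(3) should follow by feeding the fact that the periphery comprises a constant fraction of the nodes (so $\np=\Theta(n)=\Theta(\nc^2)$) back into $|\partial(\C)|=\Omega(\np)$.

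I expect the main content to be the two CONGEST-model counting arguments — how many distinct messages can reach a single core node, and how many can cross the core boundary $\partial(\C)$, in $O(1)$ rounds — with everything else being degree-sum bookkeeping. The one delicate point is the lower half of item~(3): turning $|\partial(\C)|=\Omega(\np)$ into $\Omega(\nc^2)$ needs $\np=\Theta(\nc^2)$, equivalently $\nc=\Theta(\sqrt n)$, which in turn needs an upper bound $\nc=O(\sqrt n)$ (the core being the ``small side'', i.e.\ $m=O(\nc^2)$); so I would expect the write-up either to carry a standing assumption bounding the core size or to argue $|\partial(\C)|=\Omega(\nc^2)$ through a different route.
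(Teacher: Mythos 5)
Your proposal follows essentially the same route as the paper's proof: Axiom $\axiomClique$ forces $d_{\mathrm{in}}(v,\C)=\Omega(\nc)$ by the same message-counting argument (stated there more tersely), Axiom $\axiomBoundary$ then gives $d_{\mathrm{out}}(v,\C)=O(\nc)$, degree sums yield claim 4, the bound $\nc=O(\sqrt m)$ and the $O(\nc^2)$ half of claim 3, and Axiom $\axiomConvergecast$ gives the cut bound $\abs{\partial(\C)}=\Omega(n-\nc)$, whence $\nc=\Omega(\sqrt n)$. The ``delicate point'' you flag --- that the $\Omega(\nc^2)$ lower bound in claim 3 additionally needs $\np=\Omega(\nc^2)$, i.e.\ $\nc=O(\sqrt n)$ --- is genuine, and the paper does not close it either: it simply asserts that the lower bound of claim 3 follows from $\abs{\partial(\C)}=\Omega(n-\nc)$, so your treatment is, if anything, the more careful one on that step.
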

\begin{proof}
%Let $\nc$ be the core size. 
Axiom $\axiomClique$ necessitates that the inner degree 
of each node $v$ is $d_{\mathrm{in}} (v,\C) = \Omega(\nc)$ (or else 
it would not be possible to complete clique emulation in constant time),
implying the second part of claim 2. 
It follows that the number of edges in the core is $\sum_{v \in \C} d_{\mathrm{in}} (v,\C) = \Theta(\nc^2)$, hence it is dense; claim 4 follows.
Since also $\sum_{v \in \C} d_{\mathrm{in}} (v,\C) \le 2m$, we must have the upper bound of claim 1, that is, $\nc = O(\sqrt{m})$.
Axiom $\axiomBoundary$ yields that for every $v$,  $d_{\mathrm{out}} (v,\C) =  O(\nc)$, so the first part of claim 2 follows.
Note that $\abs{\partial(\C)} = \sum_{v \in \C} d_{\mathrm{out}} (v,\C) = O(\nc^2)$,
so the upper bound of claim 3 follows. 
To give a lower bound on $\nc$, note that by Axiom $\axiomConvergecast$ we have
$\abs{\partial (\C)} = \Omega(n-\nc)$ (otherwise the information from the $n-\nc$ nodes of $\P$ could not flow in $O(1)$ time to $\C$), so $\nc = \Omega(\sqrt{n})$ and the lower bounds of claims 1 and 3 follow.
\end{proof}

An interesting case for efficient networks is where the number of edges is linear in the number of nodes. In this case, Theorem  \ref{thm:cp_property2} implies the following.

\begin{coro}
In a core-periphery network $\GVECP$ where $m=O(n)$, the following properties hold:
\begin{enumerate}
\item The core size satisfies $\nc =\Theta(\sqrt{n})$.
\item The number of outgoing edges from the core is $\abs{\partial(\C)} = \Theta(n)$.
\item The number of edges in the core is  
$\sum_{v \in \C} d_{\mathrm{in}} (v,\C) = \Theta(n)$.
\end{enumerate}
\end{coro}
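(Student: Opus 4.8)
The plan is to derive the Corollary directly from Theorem~\ref{thm:cp_property2} by simply substituting $m = O(n)$ into each of its bounds; no new structural argument about the axioms is needed, since all the work has already been done.

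First I would recall from Theorem~\ref{thm:cp_property2}(1) that $\Omega(\sqrt{n}) \le \nc \le O(\sqrt{m})$. Plugging $m = O(n)$ into the upper bound gives $\nc \le O(\sqrt{m}) = O(\sqrt{n})$, and combined with the lower bound $\nc = \Omega(\sqrt{n})$ this yields $\nc = \Theta(\sqrt{n})$, which is claim~1 of the Corollary. Note that the lower bound $\nc = \Omega(\sqrt{n})$ comes purely from Axiom $\axiomConvergecast$ (via $\abs{\partial(\C)} = \Omega(n - \nc)$) and is unaffected by the value of $m$, so only the upper bound needs the hypothesis $m = O(n)$.

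Next, for claim~2, Theorem~\ref{thm:cp_property2}(3) gives $\abs{\partial(\C)} = \Theta(\nc^2)$; substituting $\nc = \Theta(\sqrt{n})$ from the previous step yields $\abs{\partial(\C)} = \Theta(n)$. (Alternatively, one can bypass $\nc$ entirely: the lower bound is $\abs{\partial(\C)} = \Omega(n - \nc) = \Omega(n)$ once we know $\nc = o(n)$, and the upper bound $\abs{\partial(\C)} = O(\nc^2) = O(n)$ follows from $\nc = O(\sqrt{n})$.) Similarly, for claim~3, Theorem~\ref{thm:cp_property2}(4) gives $\sum_{v \in \C} d_{\mathrm{in}}(v,\C) = \Theta(\nc^2) = \Theta(n)$, again by substitution of $\nc = \Theta(\sqrt{n})$.

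There is no real obstacle here — the Corollary is a routine specialization. The only point requiring a moment's care is to make sure the chain of implications is non-circular: we must establish $\nc = \Theta(\sqrt{n})$ first (using the $m=O(n)$ hypothesis only in the upper bound), and then feed that into the $\Theta(\nc^2)$ expressions for $\abs{\partial(\C)}$ and the core's edge count. I would write the proof in three short sentences, one per claim, each citing the relevant part of Theorem~\ref{thm:cp_property2} and the substitution $m = O(n)$.
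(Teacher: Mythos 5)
Your proposal is correct and matches the paper's intent exactly: the paper states the Corollary as an immediate specialization of Theorem~\ref{thm:cp_property2} to the case $m=O(n)$, with no separate proof, and your substitution argument (first $\nc=\Theta(\sqrt{n})$ from claim~1 of the theorem, then feeding this into the $\Theta(\nc^2)$ bounds of claims~3 and~4) is precisely that routine specialization, with the non-circularity point correctly observed.
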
 

Now we show a key property relating our axioms to the network diameter.

\begin{clm}
\label{clm:1}
If the partitioned network $\GVECP$ 
satisfies Axioms $\axiomClique$ and $\axiomConvergecast$, 
then its diameter is $\Theta(1)$.
\end{clm}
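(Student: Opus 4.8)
The plan is to bound the diameter by exhibiting, for any two nodes $u,w \in V$, a path of constant length connecting them, using only Axioms $\axiomClique$ and $\axiomConvergecast$. The key observation is that both axioms are statements about the existence of \emph{communication schedules} completing in $O(1)$ rounds, and any such schedule implicitly certifies short paths in the graph: a message cannot traverse more than one edge per round, so if a message originating at $v$ is delivered to some node in $T$ within $\gamma$ rounds, then $v$ is within distance $\gamma$ of $T$.

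First I would use Axiom $\axiomConvergecast$: since the pair $\langle \P, \C\rangle$ forms a $\Theta(1)$-convergecaster, every periphery node $v\in\P$ must be able to route its message $M_v$ to some node of $\C$ within $\gamma = \Theta(1)$ rounds; hence $\mathrm{dist}(v,\C)\le\gamma$ for every $v\in\P$. (Nodes of $\C$ are trivially at distance $0$ from $\C$.) So every node of $V$ is within constant distance of the core. Next I would use Axiom $\axiomClique$: since $\C$ is a $\Theta(1)$-clique-emulator, for any two core nodes $x,y$ the message $M_{x,y}$ is delivered from $x$ to $y$ within $\beta=\Theta(1)$ rounds, so $\mathrm{dist}(x,y)\le\beta$; that is, the core has diameter $O(1)$ \emph{in $G$} (in fact the emulation schedule might even use periphery nodes, but that only helps — it still certifies a short path in $G$). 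Combining: for arbitrary $u,w\in V$, pick core neighbors-via-convergecast $x\in\C$ with $\mathrm{dist}(u,x)\le\gamma$ and $y\in\C$ with $\mathrm{dist}(w,y)\le\gamma$; then
\[
\mathrm{dist}(u,w)\;\le\;\mathrm{dist}(u,x)+\mathrm{dist}(x,y)+\mathrm{dist}(y,w)\;\le\;\gamma+\beta+\gamma\;=\;\Theta(1).
\]
Since this holds for all pairs, the diameter is $O(1)$; and it is trivially $\Omega(1)$, so the diameter is $\Theta(1)$.

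The only delicate point — and the part worth stating carefully — is the direction "fast communication schedule $\Rightarrow$ short path," i.e., that a convergecast completing in $\gamma$ rounds forces $\mathrm{dist}(v,\C)\le\gamma$, and likewise for clique emulation. This is immediate from the CONGEST model semantics (information propagates at most one hop per round), but it does rely on reading the axioms as genuine worst-case guarantees: the schedules exist for these specific source/target sets regardless of message contents, so in particular a "token" placed at $v$ reaches its destination set in the allotted time, which is exactly a distance bound. No congestion or bandwidth subtlety enters, because we only need the existence of \emph{one} short path per pair, not a simultaneous routing. So I expect no real obstacle; the proof is a two-line triangle-inequality argument once the axiom-to-distance translation is made explicit. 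Note also that Axiom $\axiomBoundary$ is not needed here, consistent with the claim's hypothesis.
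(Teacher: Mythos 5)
Your proposal is correct and follows essentially the same route as the paper's proof: convergecast gives every node constant distance to the core, clique emulation gives constant distance between core nodes, and the triangle inequality finishes. Your explicit remark that a $\gamma$-round schedule certifies a distance bound is a fair elaboration of a step the paper leaves implicit, but the argument is the same.
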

\begin{proof}
Suppose the partitioned network $\GVECP$ satisfies Axioms 
$\axiomClique$ and $\axiomConvergecast$. 
Let $u,v$ be two nodes in $V$. There are three cases to consider: 
(1) Both $u,v \in \C$: then Axiom $\axiomClique$ ensures $O(1)$ time message 
delivery, and thus $O(1)$ distance. 
(2) $u\in\P$ and $v\in \C$: Axiom $\axiomConvergecast$ implies that 
there must be a node in $w\in\C$ such that $dist(u,w)=O(1)$. 
By Axiom $\axiomClique$, and $dist(w,v)=O(1)$, thus $dist(u,v)\le dist(u,w)+dist(w,v)=O(1)$.
(3) Both $u,v\in P$: then there must be $w,x\in\C$ 
such that $dist(u,w)=O(1)$ and $dist(x,v)=O(1)$. Since $dist(w,x)=O(1)$ 
by Axiom $\axiomClique$, it follows that $dist(u,v)\le dist(u,w)+dist(w,x)+dist(x,v)=O(1)$.
Hence $dist(u,v)=O(1)$ for any $u,v\in V$, so the diameter of $G(V,E)$ is constant.
\end{proof}

\noindent The following claim shows that the above conditions are necessary.

\begin{clm}
\label{clm:exist_graph_diam_n} 
For $X\in\{E,C\}$, there exists a family of $n$-node partitioned networks 
$G_X(V,E,\C,\P)$, of diameter $\Omega(n)$, 
which satisfy all axioms except $\mathcal{A}_{X}$.
\end{clm}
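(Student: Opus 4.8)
The plan is to exhibit, for each $X \in \{E, C\}$, an explicit family of partitioned networks that satisfies the two axioms other than $\mathcal{A}_X$ but has large diameter; by Claim~\ref{clm:1}, violating either $\axiomClique$ or $\axiomConvergecast$ is exactly what lets the diameter blow up, so I expect the two examples already constructed in the proof of Theorem~\ref{thm:independent} to do the job with only a diameter computation added. Concretely, for $X = C$ (i.e.\ we want a network satisfying $\axiomBoundary$ and $\axiomConvergecast$ but not $\axiomClique$) I would take $G_C = S_n$, the sun partitioned network: its core is an $\lceil n/2 \rceil$-node cycle with a pendant leaf on every cycle node forming the periphery. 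As shown earlier, $\axiomConvergecast$ holds (each periphery node is one hop from the core) and $\axiomBoundary$ holds (each core node has outgoing degree $1$), while $\axiomClique$ fails. The new content is just that $\mathrm{diam}(S_n) = \Omega(n)$: two diametrically opposite cycle nodes are at distance $\lfloor \lceil n/2\rceil / 2 \rfloor$ inside the cycle, and since the only other edges are pendant leaves (which cannot shortcut anything), this is the graph distance; hence the diameter is $\Omega(n)$.

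For $X = E$ (we want $\axiomBoundary$ and $\axiomClique$ but not $\axiomConvergecast$), the lollipop network $L_n$ from Theorem~\ref{thm:independent} is a candidate but I should double-check the balanced-boundary bound: there, core nodes have outgoing degree $0$ or $1$ so $\axiomBoundary$ holds, $\axiomClique$ holds because the core is a clique, and $\axiomConvergecast$ fails because the periphery is a line of length $n - \lfloor\sqrt n\rfloor$. Its diameter is clearly $\Omega(n)$ — the far endpoint of the line is $\Omega(n)$ hops from everything — so $G_E = L_n$ works directly. (If one wanted the core size to obey the $\Omega(\sqrt n)$ lower bound forced by the other axioms, the $\sqrt n$-clique already gives $\nc = \Theta(\sqrt n)$, so there is no inconsistency.)

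So the steps, in order, are: (1) recall the definitions of $S_n$ and $L_n$ and the verification (already done) that each satisfies exactly two of the three axioms, with $S_n$ missing $\axiomClique$ and $L_n$ missing $\axiomConvergecast$; (2) for $S_n$, argue that no non-cycle edge can shorten a path between two cycle nodes, hence the distance between antipodal cycle nodes is $\Theta(n)$, giving diameter $\Omega(n)$; (3) for $L_n$, observe the line contributes a path of length $n - \lfloor\sqrt n\rfloor$ with no shortcuts, giving diameter $\Omega(n)$; (4) set $G_C = S_n$ and $G_E = L_n$ and conclude. The only mild subtlety — and the one place I would be careful rather than hand-wavy — is step (2)/(3): arguing rigorously that the pendant leaves (resp.\ the clique end of the lollipop) cannot provide a shortcut. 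This is immediate because every leaf has degree $1$, so any shortest path is forced to stay on the cycle (resp.\ traverse the whole line), but it is worth stating explicitly so the $\Omega(n)$ diameter bound is not merely asserted. No new machinery is needed beyond what the excerpt already provides.
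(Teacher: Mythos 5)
Your proposal is essentially the paper's own proof: reuse the lollipop $L_n$ and sun $S_n$ networks from the independence theorem and add the (easy) observation that each has diameter $\Omega(n)$. The one thing to fix is the labelling: in the paper's notation $\mathcal{A}_E$ denotes the clique-emulation axiom $\axiomClique$ and $\mathcal{A}_C$ denotes the convergecast axiom $\axiomConvergecast$, so the network for $X=E$ must be the one violating clique emulation, namely the sun $S_n$, and the network for $X=C$ must be the one violating convergecast, namely the lollipop $L_n$ --- exactly the opposite of your assignment $G_C=S_n$, $G_E=L_n$. Since you exhibit both networks with the correct axiom verifications and diameter bounds, swapping the two labels makes your argument coincide with the paper's; there is no mathematical gap beyond this notational slip.
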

\begin{proof}
For $X=C$, let $G_C(V,E,\C,\P)$ be the lollipop partitioned network $L_n$.
As mentioned before, for this network Axiom $\axiomConvergecast$ is violated, 
while the others are not. Also note that the diameter of $G_C$ is $\Omega(n)$.

For $X=E$, let $G_E(V,E,\C,\P)$ be the sun partitioned network $S_n$.
As mentioned before, for this network Axiom $\axiomClique$ is violated, 
while the others are not. Also note that the diameter of $G_E$ is $\Omega(n)$.
\end{proof}

\section{MST on a Core-Periphery Network}\label{sec:mst}
% !TEX root = core_per.tex

In this section we present a time-efficient randomized distributed 
algorithm $\C\P$-MST for computing a \emph{minimum-weight spanning tree} (MST) 
on a core-periphery network.
% The output should be distributed, namely, 
In particular, we consider a $n$-node core periphery network $\GVECP$,
% with a core $\C$ and a periphery $\P$ 
namely, a partitioned network satisfying all three axioms, 
and show that a MST can be computed in a distributed manner on such a network 
in $O(\log^2 n)$ rounds with high probability. 
%(i.e., with failure probability polynomially small in $n$).
Upon termination, each node knows which of its edges belong to the MST. 
We also show that Axioms $\axiomBoundary$, $\axiomClique$, and 
$\axiomConvergecast$ are indeed necessary, for our 
distributed MST algorithm to be efficient. 

%%%%%%%%%%%%%%%%%%%%%%%%%%%%
\subsection{Axiom necessity}
%%%%%%%%%%%%%%%%%%%%%%%%%%%%

\begin{theorem}\label{thm:MST-necessity}
For each $X \in\{B,E,C\}$ there exists a family of $n$-node partitioned networks
${\cal F}_X = \{ G_X(V,E,\C,\P)(n)\}$,
that do not satisfy Axiom $\mathcal{A}_X$, but satisfy the other two axioms; 
and the time complexity of \emph{any} distributed MST algorithm on ${\cal F}_X$ is $\Omega(n^{\alpha_X})$, for some constant $\alpha_X>0$.
\end{theorem}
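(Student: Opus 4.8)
The plan is to prove the theorem by exhibiting, for each axiom $\mathcal{A}_X$, a family of partitioned networks that violate $\mathcal{A}_X$, satisfy the other two axioms, and are hard instances for MST. For the two ``diameter'' axioms, $X=C$ and $X=E$, the lower bound should follow almost immediately from Claim~\ref{clm:exist_graph_diam_n}: the lollipop partitioned network $L_n$ (violating $\mathcal{A}_C$) and the sun partitioned network $S_n$ (violating $\mathcal{A}_E$) both have diameter $\Omega(n)$. Since computing an MST requires every node to learn which of its incident edges are in the tree, and information must propagate across the graph, any MST algorithm needs at least $\Omega(D)$ rounds on a graph of diameter $D$; hence $\Omega(n)$ rounds on these families, giving $\alpha_C=\alpha_E=1$. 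A clean way to make the $\Omega(D)$ claim rigorous is a standard indistinguishability argument: plant the minimum-weight edge at one far end of the line/cycle and observe that its status (in or out of the MST) cannot be known at the far end before $D$ rounds.

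The substantive case is $X=B$, where the core-boundary axiom fails. Here I cannot reuse the diameter trick, because a graph violating only $\mathcal{A}_B$ (like the dumbbell $D_n$) may still have small diameter. The plan is to build a family $\mathcal{F}_B$ whose core contains an ``overloaded ambassador'' — a single core node $a$ with $\Theta(n)$ periphery neighbors but only $O(1)$ core neighbors — so that $\mathcal{A}_B$ is violated while $\mathcal{A}_E$ (clique emulation on the core) and $\mathcal{A}_C$ (fast convergecast) can still be arranged by giving the periphery nodes additional short routes to the core that bypass $a$. Into this topology I would embed a known hard MST instance: attach to the periphery a gadget of $\Theta(\sqrt n)$ nodes whose interconnection forces MST decisions that require distinguishing which of $\Theta(\sqrt n)$ candidate edges, each carrying only $O(\log n)$ bits of weight information, is cheapest, where all the relevant information must funnel through the bottleneck at $a$ (or through a cut of $O(1)$ edges). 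A congestion/counting argument then shows $\Omega(n^{\alpha_B})$ rounds are needed for $a$ to both collect and redistribute this information — targeting $\alpha_B = 1/4$ in analogy with the $\tilde\Omega(\sqrt[4]{n})$ bound of \cite{lotker06distributed} mentioned in Table~\ref{tab:intro_algo_summary}, by choosing the gadget so that $\Theta(\sqrt n)$ bits must cross an edge cut of size $O(\sqrt n)$ that is itself $\Theta(\sqrt n)$ hops deep.

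Concretely, the steps in order: (1) state the indistinguishability lemma that MST on a graph of diameter $D$ needs $\Omega(D)$ rounds, and instantiate it on $L_n$ and $S_n$ via Claim~\ref{clm:exist_graph_diam_n}, yielding the $X=C$ and $X=E$ cases; (2) describe the construction of $\mathcal{F}_B$: a $\Theta(\sqrt n)$-clique core with one designated ambassador $a$ holding $\Theta(n)$ periphery leaves, plus enough auxiliary edges from the periphery to non-ambassador core nodes to certify $\mathcal{A}_E$ and $\mathcal{A}_C$, and verify these two axioms hold while $\mathcal{A}_B$ fails at $a$; (3) embed the MST-hardness gadget behind the bottleneck and argue, by a communication-complexity or direct congestion lower bound on the narrow cut, that $\Omega(n^{1/4})$ rounds are required; (4) conclude by setting $\alpha_X$ appropriately for each $X$. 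I expect step (3) — engineering the gadget so that it simultaneously (a) is compatible with the satisfied axioms, (b) forces genuinely $n^{\alpha_B}$-hard MST decisions, and (c) admits a clean reduction from a known lower bound rather than an ad hoc argument — to be the main obstacle; the diameter-based cases are routine, and verifying the axioms on the constructed family is mechanical.
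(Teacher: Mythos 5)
Your plan for the two ``diameter'' cases rests on a lemma that is false in the generality you need: it is not true that MST requires $\Omega(D)$ rounds on \emph{every} graph of diameter $D$, and your chosen instance for $X=C$, the lollipop $L_n$, is exactly a counterexample. Every edge of the long path in $L_n$ is a bridge, hence belongs to every spanning tree irrespective of the weights, so the path nodes can output their answer with no communication (they can even recognize their role locally from their degree, given the family), and the only nontrivial work is an MST on the $\sqrt{n}$-clique, solvable in $O(\log\log n)$ rounds \cite{lotker2005minimum}. Your sketched indistinguishability argument also misfires: the \emph{minimum}-weight edge is always in the MST, and a far-away node never needs to know the status of a non-incident edge, so nothing weight-dependent has to cross the long path. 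This is why the paper does not reuse Claim \ref{clm:exist_graph_diam_n} here but builds $G_C$, whose periphery contains a long \emph{cycle}: there the membership of a node's own cycle edge depends on weights $\Omega(n)$ hops away, and a single message must genuinely travel that far. Your $X=E$ case is salvageable for the same reason --- on the sun $S_n$ the core is a cycle, and whether a given cycle edge is the maximum-weight one is a weight-dependent decision requiring $\Omega(n)$ rounds --- but the argument must be the max-weight-cycle-edge one, not the generic $\Omega(D)$ lemma; incidentally this would give a stronger bound than the paper's own $X=E$ construction ($G_E$ has constant diameter and the paper proves $\Omega(n^{1/3}/\log n)$ by a congestion argument at a hub of degree $k$).

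The $X=B$ case also has a concrete flaw and a deferred core. As described, your ambassador $a$ has ``only $O(1)$ core neighbors'' in a core of size $\Theta(\sqrt n)$; but then clique emulation forces $a$ to exchange $\Theta(\sqrt n)$ distinct messages over $O(1)$ edges, so $\axiomClique$ fails too, violating the requirement that only $\axiomBoundary$ be dropped. To break $\axiomBoundary$ alone you must keep $d_{\mathrm{in}}(a)=\Omega(\nc)$ and instead blow up $d_{\mathrm{out}}(a)$, which is exactly what the paper's $G_B$ does: a $k$-clique core with $k^3$ periphery leaves per core node ($k=\Theta(n^{1/4})$), so $\axiomClique$ and $\axiomConvergecast$ hold while $d_{\mathrm{out}}/d_{\mathrm{in}}=\Theta(k^2)$. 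Moreover, the step you yourself flag as the main obstacle --- engineering and proving the $\Omega(n^{1/4})$-hard gadget --- is the entire content of this case; the paper discharges it by taking $G_B$ to be precisely the constant-diameter hard instance of \cite{lotker06distributed} and citing its $\Omega(\sqrt[4]{n}/\sqrt{\log n})$ bound, rather than re-deriving a congestion bound around a single bottleneck node. As it stands, your proposal proves none of the three cases: $X=C$ is based on a false claim, $X=E$ needs a different (though available) argument than the one you state, and $X=B$ is both misconfigured with respect to the axioms and incomplete where the real difficulty lies.
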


\begin{proof}
For $X=B$, consider the graph $G_B$ on Figure \ref{fig:g1g2}(a), in which Core is a clique of size $k$, and each node in the Core is connected to $k^3$ Periphery nodes (one node in Core is also connected to $s$, so it has $k^3 +1$ Periphery neighbors). The number of nodes in $G_B$ is thus $n=k+k\cdot k^3 + 1=\Theta(k^4)$. 
In \cite{lotker06distributed}, it was shown that any distributed algorithm will take at least $\Omega \left(\sqrt[4]{n}/\sqrt{\log n}\right)$ time on $G_B$. Since Core is a clique, $G_B$ satisfies Axiom $\axiomClique$. Since every node in Periphery has a direct edge to the Core, $G_B$ satisfies Axiom $\axiomConvergecast$, i.e., it is possible to perform a convergecast in $O(1)$ time. But notice that $d_{\mathrm{in}}=\sqrt[4]{n}$ while $d_{\mathrm{out}}=\sqrt[4]{n^3}$ and thus $G_B$ does not satisfy Axiom $\axiomBoundary$.

%\item 
For $X=E$, consider the graph $G_E$ on Figure \ref{fig:g3g3mst}(a), in which Core is a collection of $k$ cliques, each of size $k$, where a single node in each clique is connected to a special Core node $u$, and there are no edges between cliques. 
The $k^3$ Periphery nodes are arranged in $k$ columns of $k^2$ nodes each.
Each node in the Core (except $u$) is connected to $k$ Periphery nodes such that the nodes in a specific clique $i$ are connected to all the Periphery nodes that reside in a specific column $i$. One Core node (from the leftmost clique) is additionally connected to $s$, and another Core node (from the rightmost clique) is connected to $r$. The number of nodes in $G_E$ is thus $n=k\cdot k+k\cdot k^2 + 2=\Theta(k^3)$. 

%Let's take a look at the nodes $s$ and $r$, and 
Assume the following weight assignment. All the edges between Core and Periphery have weight $10$, except for the two edges that come from $s$ and $r$. The weights of all the edges incident to $s$ are $2$, and the weights of all the edges incident to $r$ are $3$. Assume also that the weights of all the rest of the edges in Periphery are $1$. It's easy to see that such a weight assignment will yield an MST as illustrated in Figure \ref{fig:g3g3mst}(b). Notice that increasing the weight of some edge incident to $s$ (say, to $5$), will cause this edge to be removed from the MST, and the corresponding edge incident to $r$ to be included. Thus, in order for $r$ to know which of its edges belong to the MST, it needs to receive information regarding the weights of all the edges incident to $s$, i.e., at least $k^2$ edge weights should be delivered from $s$ to $r$. 
In the CONGEST model, at most $O(\log n)$ edge weights can be sent in a single message, hence $\Omega(k^2/\log n)$ messages must be delivered from $s$ to $r$.
Next, we show that delivering $k^2/\log n$ messages from $s$ to $r$ will require at least $\Omega(k/(\log n))$ time. First, note that any path $s\rightarrow r$ that is not passing via the node $u$ has length at least $k$, thus if any of the messages avoids $u$, we are done. So now assume that all the messages take paths via $u$. Observe that the edge cut of the node $u$ (i.e., its degree) is $k$, and thus in $k/(2\log n)$ time units, it can forward at most $k^2/(2\log n)$ messages, which is not sufficient for completing the MST task. Thus, any MST algorithm on the graph $G_E$ will take at least $\Omega(k/\log n)=\Omega(\sqrt[3]{n}/\log n)$ time.

It is left to show that $G_E$ satisfies Axioms $\axiomBoundary$ and $\axiomConvergecast$, but not $\axiomClique$. For every node in the Core, $d_{\mathrm{in}}=k$ and $d_{\mathrm{out}}=k$, except the node $u$, for which $d_{\mathrm{out}}=0$. So, for each node in the core $d_{\mathrm{out}}/(d_{\mathrm{in}}+1)=O(1)$, which means that $\axiomBoundary$ is satisfied. Since every node in Periphery has a direct edge to the Core, $G_E$ satisfies Axiom $\axiomConvergecast$, i.e., it is possible to perform a convergecast in $O(1)$. It is also easy to see that the Core does not support $O(1)$-clique emulation ($\axiomClique$), since sending $k$ messages out of any clique in Core to any other clique in Core requires $k$ time, as there is only one edge connecting any clique to the node $u$.

%\item 
Finally, for $X=C$, consider a graph $G_C$ on Figure \ref{fig:g1g2}(b), in which Core is a clique of size $k$, and each node in the Core is connected to $k/2$ Periphery nodes. One Core node is additionally connected to a cycle of size $k^2/2$ that resides in Periphery. The number of nodes in $G_C$ is thus $n=k+k\cdot k/2 + k^2/2=\Theta(k^2)$. 
It is easy to see that Axioms $\axiomBoundary$ and $\axiomClique$ are satisfied, but Axiom $\axiomConvergecast$ is not. For a suitable weight assignment, the decision regarding which edge of $r$ to include in the MST depends on the weights of the edges incident to $s$. The last observation implies that at least one message has to be delivered from $s$ to $r$ which will take $\Omega(k^2)=\Omega(n)$ time.
\end{proof}

\begin{figure}[!ht]
\centering
\includegraphics[width=.35\columnwidth]{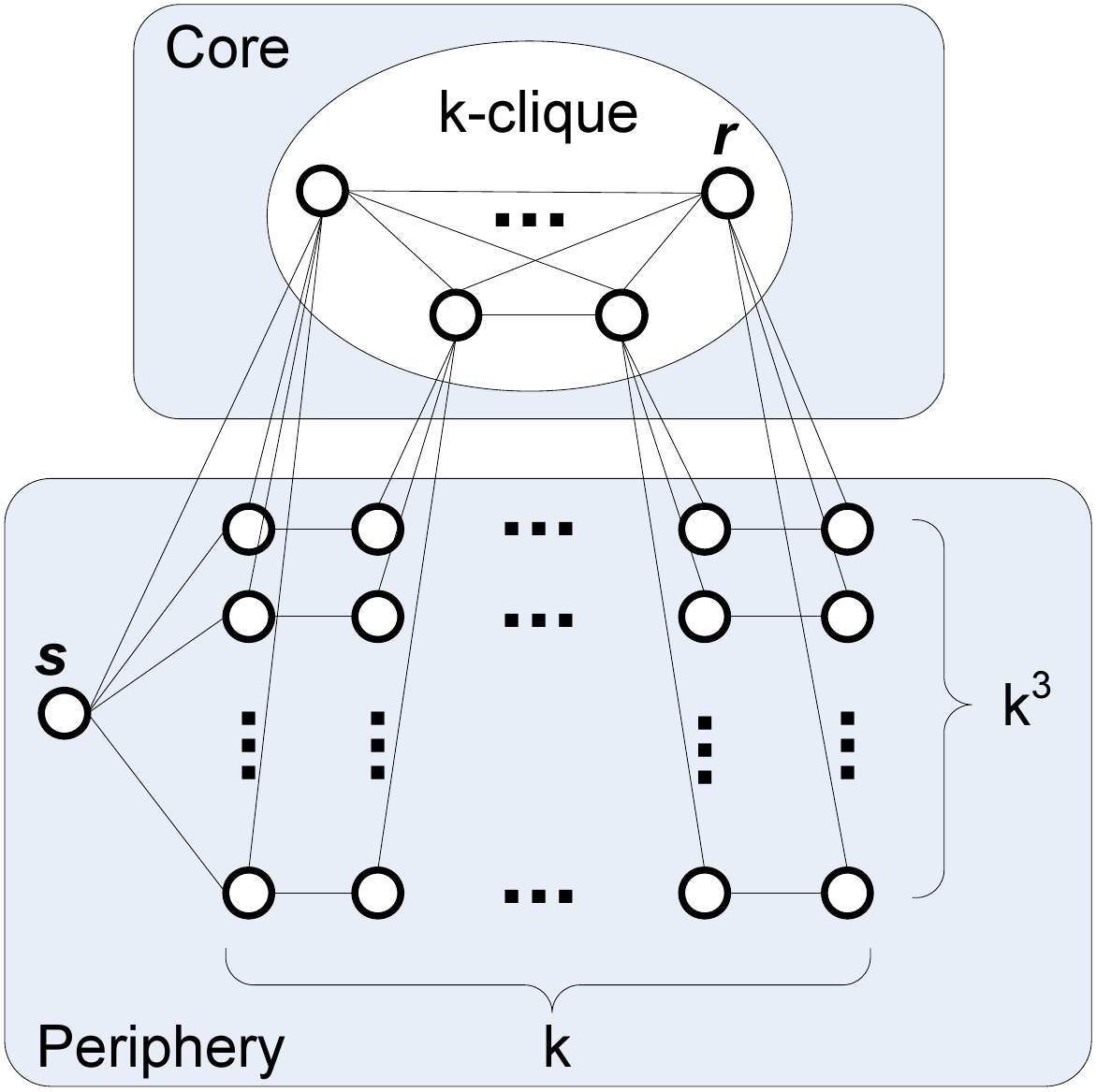}~~~~~~~
\includegraphics[width=.35\columnwidth]{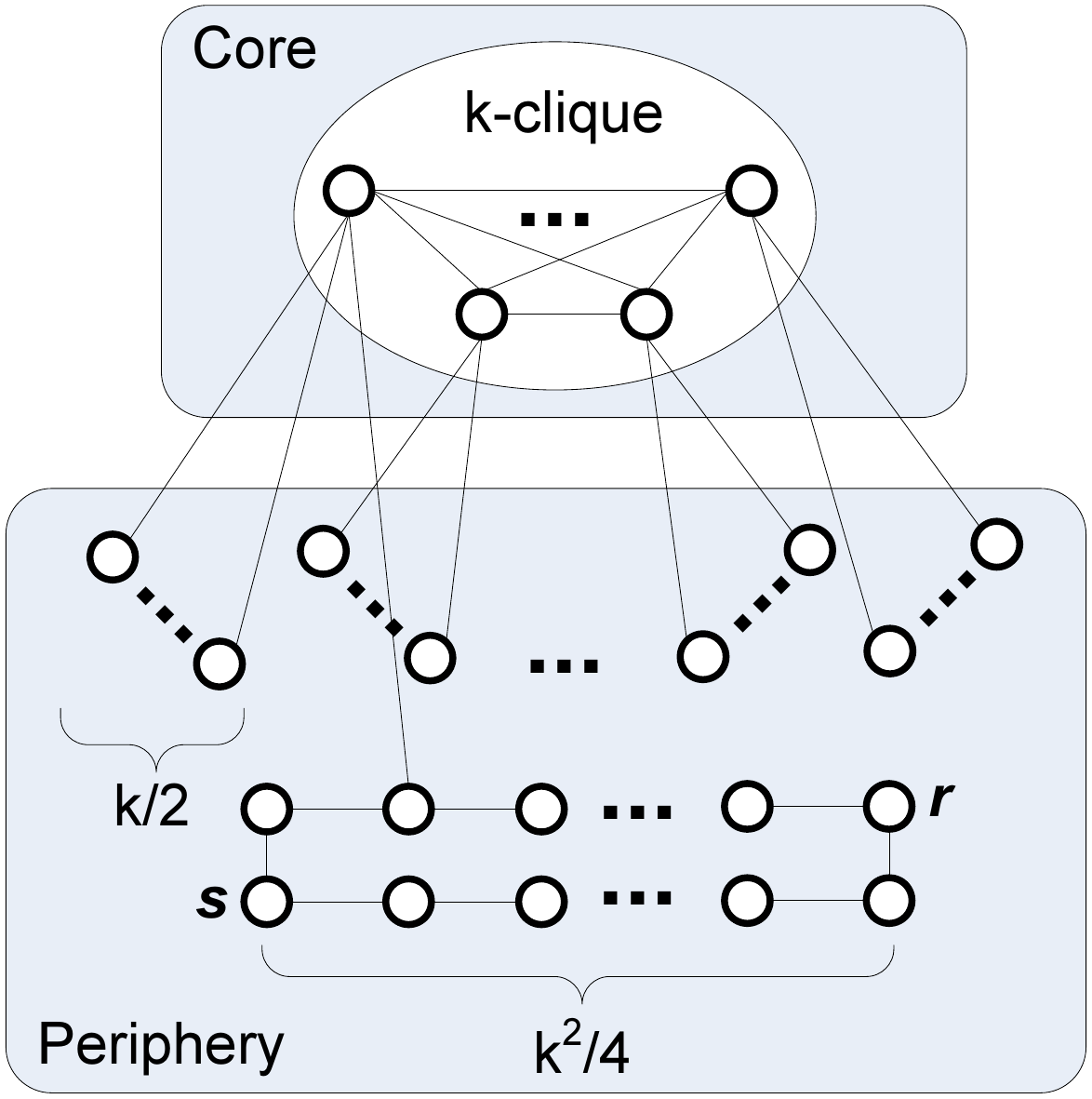} \\\vspace{1mm}
{\small(a) Graph $G_B$\hspace{0.25\columnwidth}(b) Graph $G_C$}\\
\caption{(a) Graph $G_B$: each node in the Core is connected to $k^3$ Periphery nodes. (b) Graph $G_C$: each node in the Core is connected to $k/2$ Periphery nodes, and one Core node is connected to cycle of length $k^2/2$ in Periphery.}\label{fig:g1g2}
\end{figure}

\begin{figure}[!ht]
\centering
\includegraphics[width=.49\columnwidth]{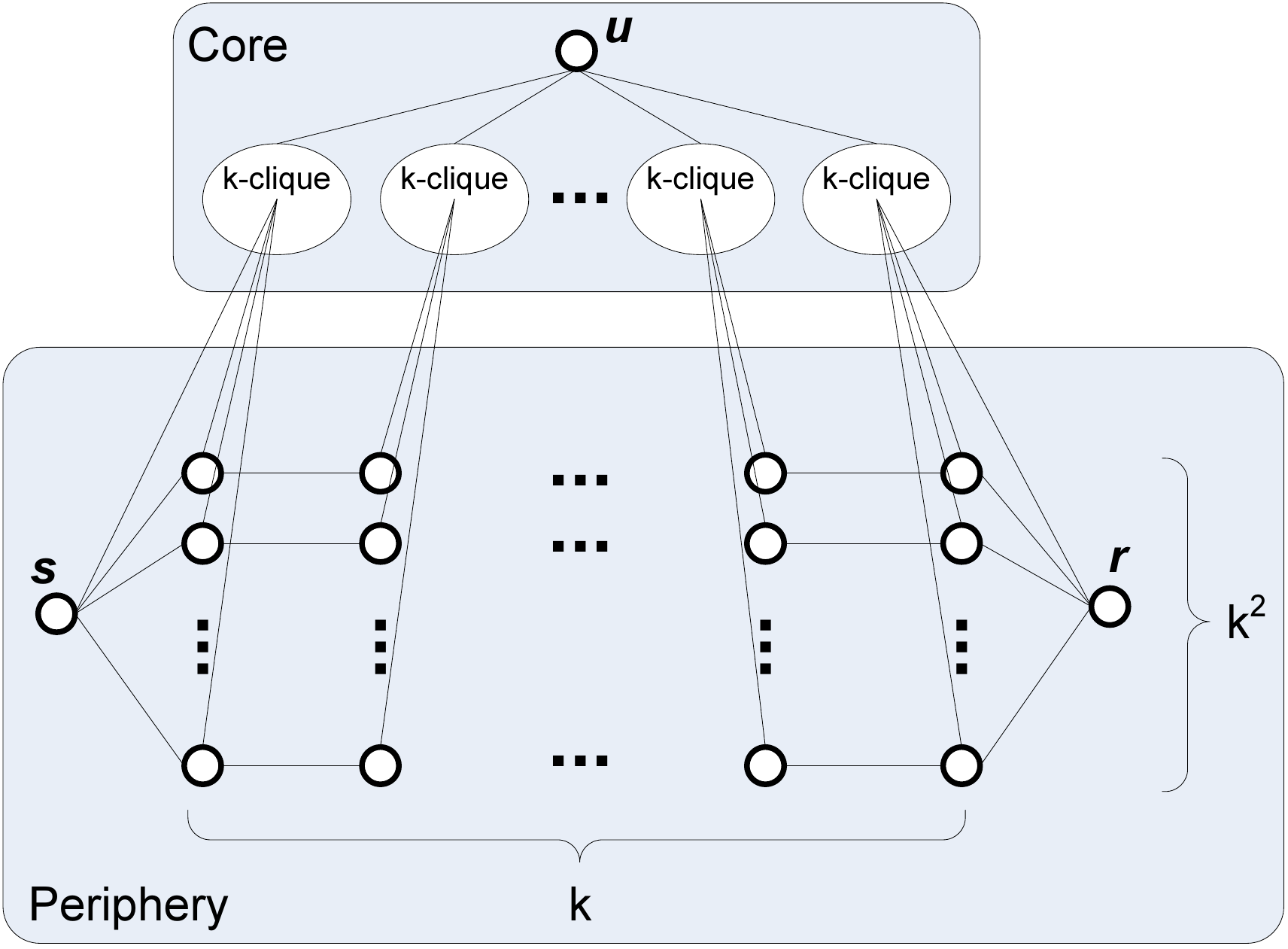}
\includegraphics[width=.49\columnwidth]{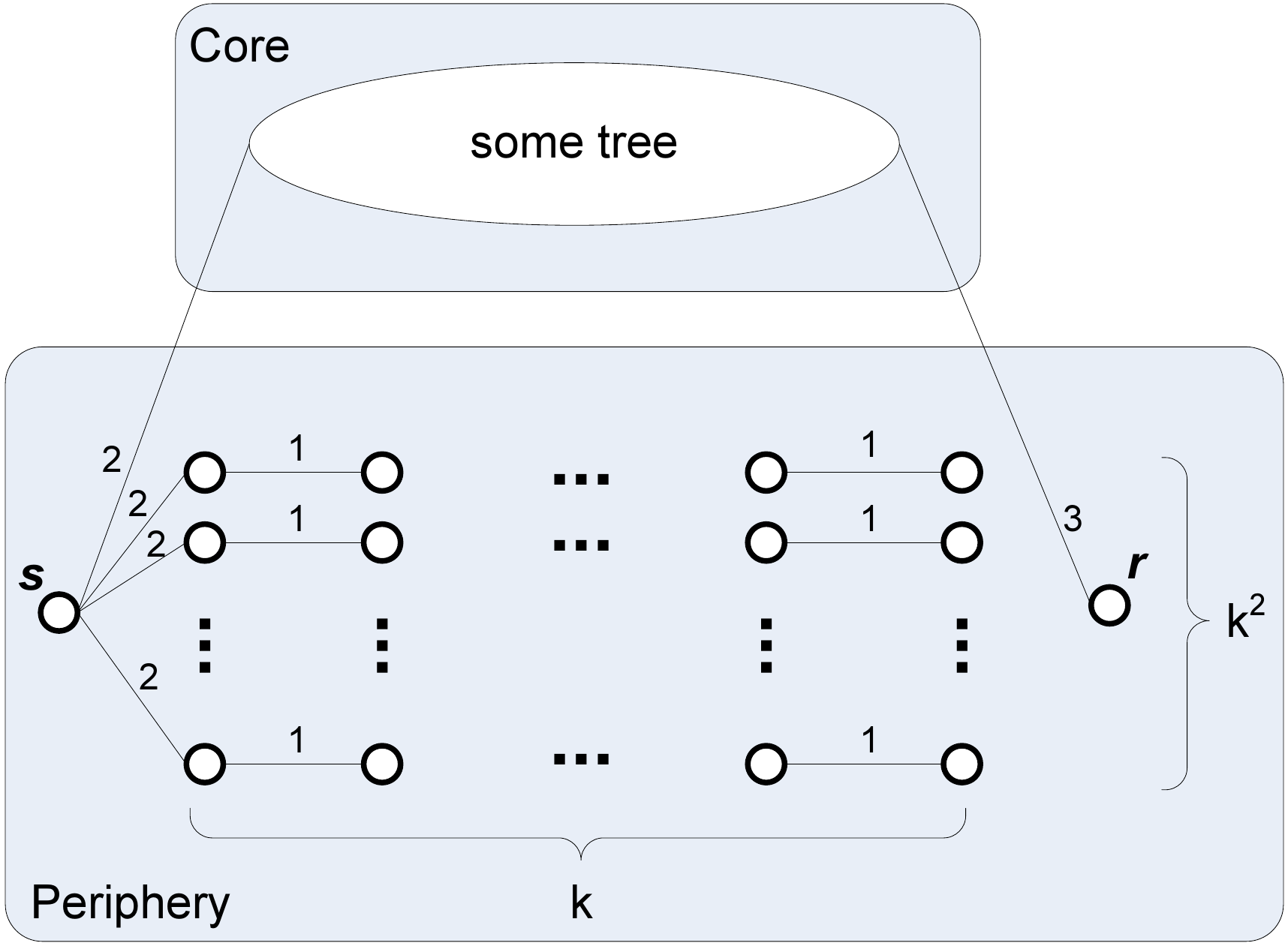} \\\vspace{1mm}
{\small(a) Graph $G_E$\hspace{0.35\columnwidth}(b) MST}\\
\caption{(a) Graph $G_E$: Core consists of $k$ cliques each of size $k$. Each node in the Core (except $u$) is connected to $k$ nodes in Periphery. (b) Possible MST of $G_E$.}\label{fig:g3g3mst}
\end{figure}

%%%%%%%%%%%%%%%%%%%%%%%%%%
\subsection{Description of the $\C\P$-MST algorithm}
%%%%%%%%%%%%%%%%%%%%%%%%%%

Let us now give a high level description of our $\C\P$-MST algorithm. 
The algorithm is based on Boruvka's MST algorithm \cite{Boruvka:1926}, and 
%similarly to that algorithm it 
runs in $O(\log n)$ phases, each consisting of several steps.
The algorithm proceeds by maintaining a forest of {\em tree fragments} 
(initially singletons) and gradually merging fragments, until the forest 
converges to a single tree. 
Throughout the execution, each node has two \emph{officials}, namely, 
core nodes that represent it. In particular, recall that
%at the beginning of the execution, 
each node $v$ is assigned a {\em representative} core neighbor $r(v)$, 
passing information between $v$ and the core. 
In addition, $v$ is also managed by the \emph{leader} $l(i)$ of its 
current fragment $i$. An important distinction between these two roles is that 
the representative of each node is fixed, while its fragment leader may change 
in each phase (as its fragment grows). 
%During the execution, the algorithm maintains the invariant that 
At the beginning of each phase, every node knows the IDs of its fragment 
and its leader. 
Then, every node considers all its outgoing edges (i.e., edges with the second endpoint belonging to another fragment), and finds its minimum weight outgoing edge. This information is delivered to the core by the means of the representative nodes, which receive the information, aggregate it (as much as possible) and forward it to the leaders of the appropriate fragments. The leaders decide on the fragment merging, and inform all the nodes about new fragments IDs. 

The main challenges in obtaining the proof were in bounding the running time, 
which required careful analysis. 
There are two major sources of potential delays in the algorithm.
The first involves sending information between officials (representatives 
to leaders and vice versa).
%, in steps 3 and 5 in the main algorithm
Note that there are only $O(\sqrt{m})$ officials, but they may need to send 
information about $m$ edges, which can lead to congestion. 
For example, if more than $\alpha \cdot \sqrt{m}$ messages need to be sent
to an official of degree $\sqrt{m}$, then this will take at least $\alpha$ 
rounds. We use randomization of leaders and the property of clique emulation to avoid 
this situation, and make sure that officials do not have to send, or receive, 
more than $O(\sqrt{m} \log m)$ messages in a phase.

The second source for delays is the fragment merging procedure. 
This further splits into two types of problems.
The first is that a chain of fragments that need to be merged could be long, 
and in the basic distributed Boruvka's algorithm will take long time 
(up to $n$) to resolve. This problem is overcome by using a modified 
pointer jumping technique, similar to \cite{lotker06distributed}. 
The second problem is that the number of fragments that need to be merged 
could be large, resulting in a large number of \emph{merging} messages 
that contain, for example, the new fragment ID. This problem is overcome 
by using randomization, and by reducing the number of messages needed 
for resolving a merge.
%as described in detail in the next subsection.

\def\AppendixPseudocodes{
%\subsection{Pseudocodes}
\renewcommand{\thealgorithm}{}
\renewcommand{\algorithmicrequire}{Executed every phase by every leader $w\in\C$, for each fragment $i\in F_{lead}(w)$.}
\begin{algorithm}[H]
\caption{$\MF(i)$}
\label{alg:merge_frags}	
    \begin{algorithmic}[1]
		\REQUIRE
		\item[]
    		\IF {$state(i) = active$}
    		\STATE $next \gets \mp(i)$
    		\STATE $state(i) \gets frozen$
    		\STATE $[isFound, rootFrag]\gets \textsc{FindRoot}(i)$
    		\IF {$isFound = \textsc{true}$}
    		\STATE $state(rootFrag) \gets root$
    		\ENDIF
    		\ENDIF
    		\item[]
    		\IF {$state(i)=frozen$}
    		\STATE $[isFinished, next] \gets \pj(i,next,2)$	
    		\STATE {$next(F_{lead}^{\mp(i)}(w)) \gets next$}
    		\IF {$isFinished = \textsc{true}$ }
    		\STATE {$state(F_{lead}^{\mp(i)}(w)) \gets waiting$}
    		\ENDIF
    		\ENDIF
    		\item[]
    		\IF {$state(i)=waiting$}
    		\STATE receive merge-requests
    		\STATE send merge-replies with $next$ (which now points to the $root$)
    		\STATE wait for FIN msg from $root$ with $newID$ and $newLead$
    		\IF {FIN msg received}
    		\STATE $newID(F_{lead}^{\mp(i)}(w)) \gets newID$
    		\STATE $state(F_{lead}^{\mp(i)}(w))\gets active$
    		\ENDIF
    		\ENDIF
    		\item[]
    		\IF {$state(i)=root$}
    		\STATE wait for incoming merge-requests
    		\STATE store the sources of the requests
    		\STATE reply on all requests with $\textsc{null}$
    		\IF {num of requests $= 0$ and size of merge-tree $\le 2^{2+phase}$}
    		\STATE $newID \gets$ random ID among all fragments in the merge-tree
    		\STATE $newLead \gets$ random node in $\C$
    		\STATE send FIN msg with $newID$ and $newLead$ to all the stored sources
    		\STATE $state(i)\gets active$
    		\ENDIF
    		\ENDIF
    \end{algorithmic} 
\end{algorithm}

\renewcommand{\algorithmicrequire}{Executed by each fragment $i$ in the $frozen$ state}
\renewcommand{\algorithmicensure}{\textbf{Input:} $next$ -- first fragment to try, $iter$ -- how many pointer-jumps to perform}
\begin{algorithm}[H]
    \caption{$\pj(i,next,iter)$ (pointer jumping)}
\label{alg:ptr_jump}
    \begin{algorithmic}[1]
    		\REQUIRE
    		\ENSURE
    		\renewcommand{\algorithmicensure}{\textbf{Output 1:} indication whether the $root$ was reached}
    		\ENSURE
    		\renewcommand{\algorithmicensure}{\textbf{Output 2:} pointer to the $root$ or to the next fragment in the chain}
    		\ENSURE
    		\item[]
   			\WHILE {$iter > 0$}
   			\IF {$i=\spk^{\mp(i)}(w)$}
    		\STATE send merge-request to $next$
    		\ENDIF
    		\item[]
    		\STATE receive merge-requests
    		\STATE send merge-replies with $next$
    		\item[]
    		\IF {$i=\spk^{\mp(i)}(w)$}
    		\STATE receive merge-reply with $next'$
    		\IF {$next' = \textsc{null}$}
    		\RETURN $[\textsc{true},next]$
    		\ENDIF 
    		\STATE $next \gets next'$
    		\STATE $iter \gets iter - 1$ 
    		\ENDIF
    		\ENDWHILE
    		\item[]
    		\RETURN $[\textsc{false},next]$
    \end{algorithmic}
\end{algorithm}

\renewcommand{\algorithmicrequire}{On \textbf{reply} reception:}
\renewcommand{\algorithmicensure}{\textbf{Input:}}
\begin{algorithm}[H]
    \caption{$\textsc{FindRoot}(i)$}
\label{alg:isroot}
    \begin{algorithmic}[1]
    		\IF {$i=\spk^{\mp(i)}(w)$}
    		\STATE send merge-request to $\mp(i)$
    		\ENDIF
    		\item[]
    		\STATE receive merge-requests
    		\STATE send merge-replies with $\mp(i)$
    		\item[]
    		\IF {$i=\spk^{\mp(i)}(w)$}
    		\STATE receive merge-reply with $next'$
    		\IF {$next'\in F_{lead}^{\mp(i)}(w)$}
    		\IF {$next' \le \mp(i)$}
    		\RETURN $[\textsc{true}, next']$
    		\ENDIF 
    		\ENDIF
    		\ENDIF
    		\item[]
    		\RETURN $[\textsc{false},\textsc{NULL}]$
    \end{algorithmic}
\end{algorithm}

}%\AppendixPseudocodes

Before describing the subsequent phases of the algorithm, 
a few definitions are in place.
%First recall that $\mathrm{id}(u)$ is the ID of node $u$ after renaming, $r(u)$ is the representative of $u$ in the core and for a fragment $i \in [1,\ldots,n]$, $l(i)$ is $i$'s leader in the core.
At any point throughout the execution, let $f(u)$ denote the fragment that $u$ belongs to. Dually, 
let $V^i$ denote the set of nodes in fragment $i$, and let $V^i(w)$ denote 
the subset of $V^i$ consisting of the nodes that are represented by $w$.
For a representative $w\in\C$, 
%$w$ in the core, 
let $F_{rep}(w)$ be the set of fragments 
that $w$ represents, namely, $F_{rep}(w) = \{i \mid V^i(w)\ne\emptyset\}$,
%Alternative definitions:
%$F_{rep}(w) = \{i\mid\exists u\in\F(i)~\mbox{ s.t. }~r(u)=w\}$,
%$F_{rep}(w) = \{i\mid\exists u~\mbox{ s.t. }~r(u)=w~\mbox{ and }~f(u)=i\}$,
%$i \in F_{rep}(w)$ if there exist a node $u$ for which $r(u)=w$ and $f(u)=i$, 
and let $F_{lead}(w)$ be the set of fragments that $w$ leads, namely, 
$F_{lead}(w) = \{i\mid l(i)= w\}$.
% $i \in F_{lead}(w)$ if $l(i) = w$.
For a set of nodes $S^i$ belonging to the same fragment $i$, 
an {\em outgoing edge} is one whose second endpoint belongs to a different 
fragment. Let $\mwoe(S^i)$ be the \emph{minimum weight outgoing edge} of $S^i$.
For a node $u$, a fragment $i$ and a representative $w$, we may occasionally 
refer to the fragment's $\mwoe$ as either $\mwoe(u)$, $\mwoe(V^i)$ or 
$\mwoe(V^i(w))$.
The \emph{merge-partner} of fragment $i$, denoted $\mp(i)$, is the fragment 
of the second endpoint of the edge $\mwoe(V^i)$. 
%\par
Define $F_{lead}^j(w)\subseteq F_{lead}(w)$ to be the set of fragments led 
by $w$ that attempt to merge with the fragment $j$, i.e., 
$F_{lead}^j(w)=\{i\mid i\in F_{lead}(w) ~and~ \mp(i)=j\}$. 
%Let us also 
Define a \emph{speaker} fragment $\spk^j(w)=\min F_{lead}^j(w)$,
that is responsible for sending merge-requests on behalf of all the fragments 
in $F_{lead}^j(w)$, and updating them upon the reception of merge-replies.

We now proceed with the description of the algorithm.

\paragraph{Phase 0 -- Initialization}
\begin{enumerate}
\item \textbf{Obtaining a Representative.} Each node $u \in V$ obtains a representative $r(u)\in \C$ in the core.
In particular, if $u\in \C$, it represents itself, i.e., 
$r(u)=u$. Each periphery node $u\in \P$ sends a ``representative-request''
message towards the core $\C$ with its ID. This is done in parallel,
using a $\gamma$-convergecast protocol on $\P$ and $\C$,
which ensures that each such message is received by some node in $\C$.
%These ``representative-requests" are flooded in the network (since the route from $u$ to some node in $\P$ is not known at this stage). 
Once a node $w \in \C$ receives such a message, it replies to $u$ on the same 
route, and $u$ sets $r(u)=w$. 

%(Michael: using $\gamma$-emulation.)
\item \textbf{Renaming.} Each node $u \in V$ receives a unique ID, $\mathrm{id}(u) \in [1,\ldots,n]$. This step can be performed
%in constant number of rounds 
in the following simple way: 
each node sends to its representative its ID, and each representative sends 
its own ID and the number of nodes it represents, to all core members. 
Now, every core member can sort the core IDs and 
reserve a sufficiently large range of IDs
%leave enough room of IDs 
for each representative. Each node in the core can now set its own new ID, 
and send unique new IDs in the range $[1\dots n]$ to the nodes it represents. 
We assume nodes in the core $\C$ take IDs $[1 \dots \nc]$. 

\item \textbf{Fragment ID Initialization.} Each node $u \in V$ forms a singleton fragment with its unique $\mathrm{id}(u)$.

\item \textbf{Obtaining a Leader.} Each initial fragment $i=f(u)$ (which is a singleton at this phase) obtains a 
leader by asking the representative $r(u)$ of node $u$, to select a random Core member $w$ uniformly at random, and declare it as a leader of $i$, $l(i)=w$. 
This is done, in a balanced way, by picking a random permutation and assigning 
leaders according to it, hence every node in $\C$ becomes the 
leader of $O(n_c)$ fragments.
 
%(Michael: by asking its representative $r(u)$ to chose a leader for it.)
\item \textbf{Fragment State Initialization.} Each leader keeps a state ({\em active / frozen / root / waiting}) for each of its fragments. 
The initial state of all fragments is {\em active}.
\end{enumerate}

\paragraph{Phase $b\in \{1 \ldots B\}$ (similar to Boruvka's phases)}

\begin{enumerate}
\item \textbf{Finding \textsc{mwoe}.} 
Each $u \in V$ finds an edge $(u,v)=\mwoe(u)$, and obtains 
$f(v)$ and $l(f(v))$.
\item \textbf{Periphery to Representatives.} 
Each node $u \in V$ sends $(u,v)=\mwoe(u)$, $f(u)$, $l(f(u))$, $f(v)$ and $l(f(v))$ to its representative $r(u)\in \C$.

\item \textbf{Representatives to Leaders.} 
Each representative $w\in \C$, for each fragment $i \in F_{rep}(w)$, 
sends $(u,v)=\mwoe(V^i(w))$, $i$, $f(v)$, and $l(f(v))$ 
to the leader $l(i)$ of $i$.

\item \textbf{Leaders Merge Fragments.} 
Each leader $w \in \C$, for each fragment $i\in F_{lead}(w)$, 
finds $(u,v)=\mwoe(V^i)$ and $\mp(i)=f(v)$, and then executes $\MF(i)$.

%If the result is an \emph{active} state it obtains a new name $newID(i)$ for the fragment.

%\item \textbf{Selecting new leaders.} 
%Each leader $w \in \C$,  for each \emph{active} fragment $i \in F_{lead}(w)$, obtains a new (random) leader for $newID(i)$. 

\item \textbf{Leaders to Representatives.} 
Each leader $w \in \C$, for each \emph{active} fragment $i \in F_{lead}(w)$, 
sends an update message with the new fragment name $newID(i)$,
the new leader node $l(newID(i))$ and the edge to add to the MST, 
to all the representatives of the nodes in $V^i$. 
%If $i \neq newID(i)$, then the fragment $i$ is removed from $F_{lead}(u)$.
If $w \neq l(newID(i))$, then the fragment $i$ is removed from $F_{lead}(w)$. 

\item \textbf{Representatives to Periphery.} 
Each representative $w \in \C$, for each $i \in F_{rep}(w)$, for which 
the update message with $newID(i)$ and $l(newID(i))$ was received, 
forwards it to all the nodes of $V^i(w)$.
\end{enumerate}

%\subsection{\texorpdfstring{$\MF$}{MergFrag} procedure}
\subsection{$\MF$ procedure}
The $\MF$ procedure is the essential part of our algorithm, executed at each 
phase $b$. The procedure is executed by each leader $w\in \C$, for each fragment
$i\in F_{lead}(w)$. For a fragment $i$, its leader maintains a state parameter 
$state(i)\in\{active, frozen, root,waiting\}$. 
Each fragment $i$ attempts to merge with some other fragment $\mp(i)$.
Towards that, the leader of $i$ initiates a merge-request to a leader of the 
fragment $\mp(i)$ (the fragment at the other end of $\mwoe(i)$). 
Since these requests do not have to be reciprocal, merge requests usually form 
a {\em merge-tree}, whose nodes are fragments and whose directed edges represent
merge-requests (see Figure \ref{fig:merge_tree} for illustration). 
In order to minimize the number of merge-request messages sent by fragment 
leaders, we propose to designate, for each set of fragments sharing the same 
leader that attempt to merge with the same target fragment, a \emph{speaker} 
fragment, that will act on behalf of all the fragments in the set, and update 
all of them upon reception of merge-replies.

\begin{figure}[h!]
\centering
\includegraphics[width=11cm]{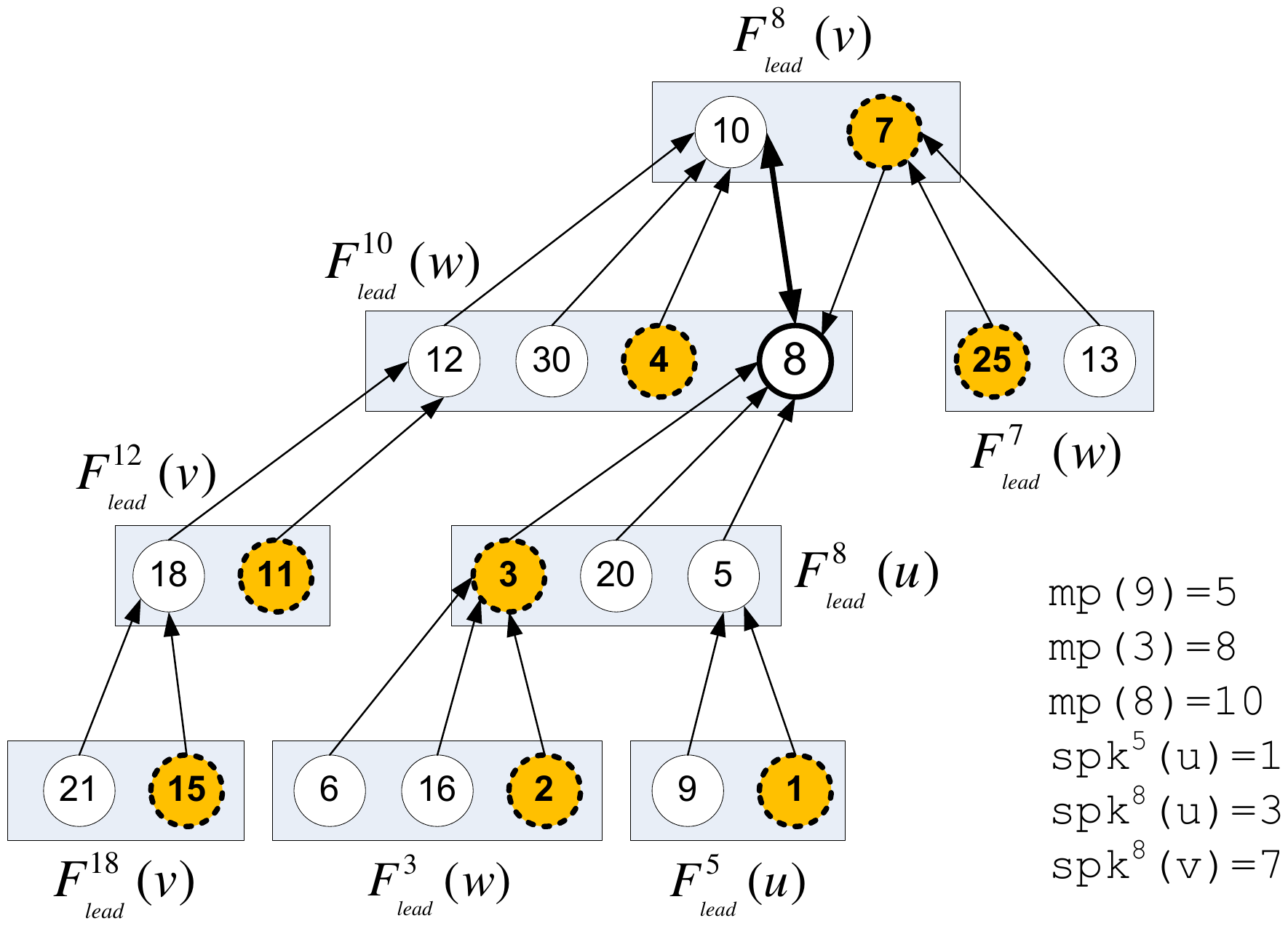}
\caption{Illustration of a fragments merge-tree. An arrow $i\rightarrow j$ 
means that fragment $i$ attempts to merge with fragment $j$, i.e., $j=\mp(i)$. 
The root of the merge-tree is fragment $8$, since it has a reciprocal arrow 
with fragment $10$ and $8<10$.}
\label{fig:merge_tree}
\end{figure}

The root of that tree is a fragment that received a reciprocal merge-request 
(actually, there are two such fragments, so the one with the smaller ID is 
selected as a root). However, since merge-requests are not sent by every 
fragment, only by \emph{speakers}, the root node is detected by the 
\emph{speaker}, and not the root fragment itself (except for the case when
the root is the \emph{speaker}). For example, in Figure \ref{fig:merge_tree}, 
fragment $4$ sends a merge request to fragment $10$, and gets a merge-reply 
with the \emph{next} pointer of $10$, which is $8$ (the \emph{next} pointer 
of fragment $i$ is always set initially to $\mp(i)$). 
Fragment $4$ then realizes that $8$ belongs to $F_{lead}^{10}(w)$, and thus
identifies the reciprocity between $8$ and $10$. Fragment $4$ 
(the \emph{speaker}) then notifies $8$ that it should be the root 
($7$ does not notify $10$ since $8<10$). For a detailed description of 
the root finding procedure see Algorithm $\textsc{FindRoot}(i)$
%\ref{alg:isroot} 
in \ref{app:MST-Pseudocodes}.

When a fragment $i$ that is led by $w$ is in the $active$ state and 
attempts to merge with another fragment ($\mp(i)$), it first tries to find 
the root using the procedure $\textsc{FindRoot}$ (see  \ref{app:MST-Pseudocodes} for the pseudocode). By the end of the 
$\textsc{FindRoot}$ procedure, $i$ may not find a root, in which case its state
will become $frozen$; $i$ may find that the root is another fragment $k$ in 
$F_{lead}^{\mp(i)}(w)$, and then $i$ will notify $k$, but $i$'s state will become 
$frozen$; $i$ may find that it is a root by itself, in which case its state 
will become $root$; and finally, $i$ may be notified by a \emph{speaker} of 
$F_{lead}^{\mp(i)}(w)$ and $i$'s state will become $root$.

%Identified root may be the fragment $i$ itself of some other fragment $k$ in $F_{lead}(w,\mp(i))$. If $i$ is root it sets its own state to $root$, if $i$ found that $k$ should be root, it notifies it, i.e., sets $k$'s state to $root$. In case when $i$ is not $root$ it's state is changed to $frozen$. 

Once a fragment enters the $root$ state, it starts waiting for all the tree 
fragments to send it merge-requests. These merge-requests are sent by each 
fragment, using the pointer-jumping procedure $\pj$ (see  \ref{app:MST-Pseudocodes} for the pseudocode), while it is in the $frozen$ state. 
Once the requests of all the tree fragments reach the $root$ 
(using pointer-jumping), it chooses a new random ID ($newID$) for the fragment, 
among all the fragments in the tree, and a random Core node to be the new leader
($newLead$) for this fragment, and sends this information back to all of them. 
At this point, the merge-tree is considered to be resolved, and all its 
fragments (including the $root$) change their state to $active$. 
The simple state diagram of Algorithm $\MF$ can be found in Figure 
\ref{fig:merge_frag_states}, 
and a detailed pseudocode in \ref{app:MST-Pseudocodes}.

\begin{figure}[h!]
\centering
\includegraphics[width=12.2cm]{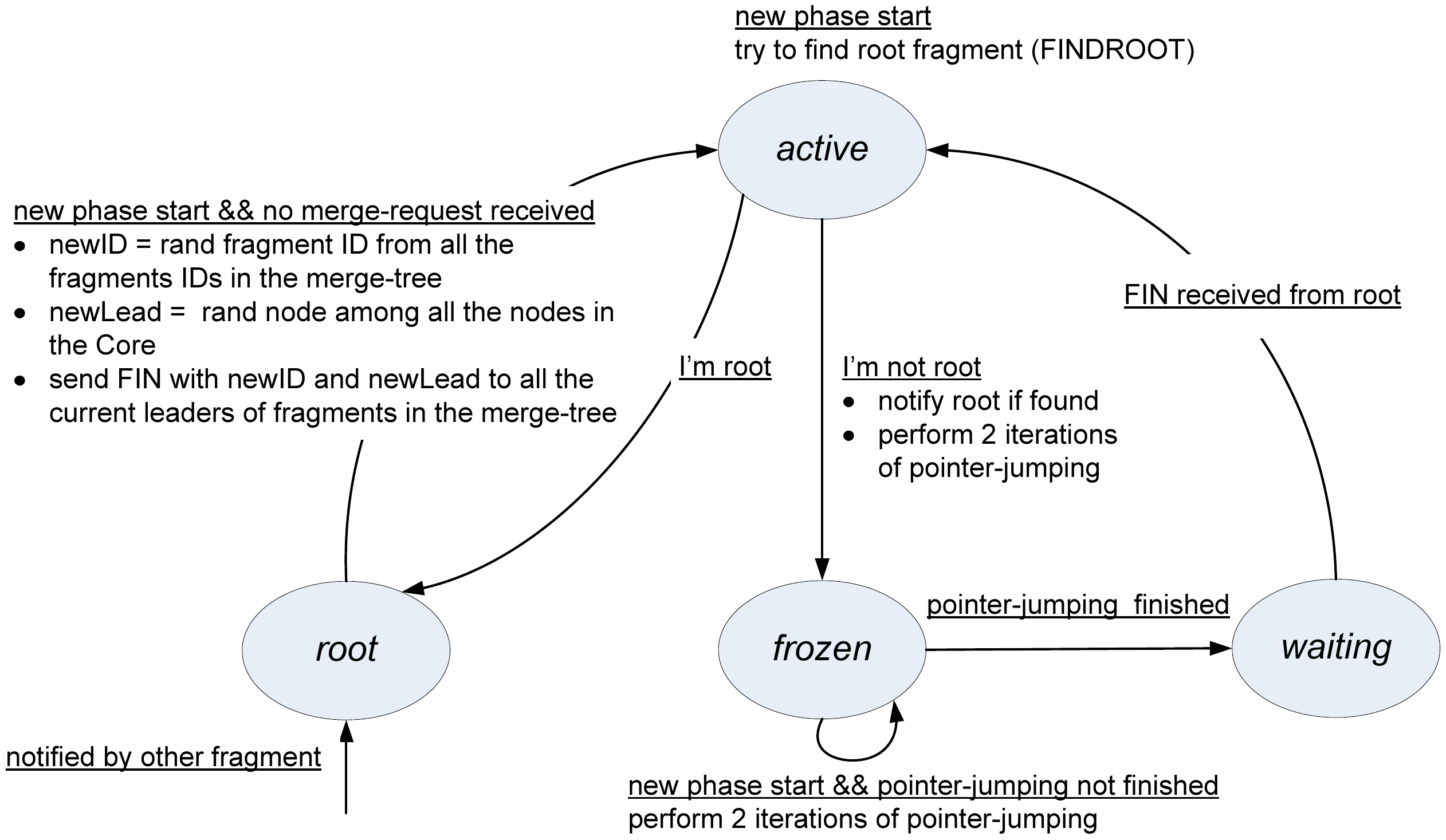}~~
\caption{State diagram of Algorithm $\MF$. The algorithm is executed by every 
leader for every fragment it leads. Each title indicates an event, and the text 
below it is the action performed upon that event.}
\label{fig:merge_frag_states}
\end{figure}

Now we briefly describe the pointer-jumping approach used to resolve 
fragment \emph{merge-trees}. 
Pointer-jumping is a technique (developed in \cite{wyllie1979complexity} 
and often used in parallel algorithms) for contracting a given linked list 
of length $k$, causing all its elements to point to the last element,
in $O(\log k)$ steps. We use the pointer-jumping approach for resolving 
merge-trees, viewing the fragments as the nodes in a linked list,
with each fragment $i$ initially pointing at $\mp(i)$.
%Their ``next" pointers are the destinations of the merge requests they send. 
Each fragment chain can be of length at most $O(n)$, and thus can be 
contracted in $\log n$ rounds, resulting in a $\log n$ time overhead per phase.
In order to overcome this, we use a technique first introduced in 
\cite{lotker06distributed}, called ``amortized pointer-jumping", in which 
the contraction of long chains is deferred to later phases, while in each phase
only a small constant number of pointer-jumps is performed. 
The argument for the correctness of this approach is that if the chain 
(or tree) is large, then the resulting fragment, once resolved, is large 
enough to satisfy the fragment growth rate needed to complete the algorithm 
in $B=O(\log n)$ phases (see Claim \ref{clm:cpmst:logn}). 

\subsection{Correctness of the $\C\P$-MST Algorithm}

We now show that our $\C\P$-MST algorithm is correct, i.e., it results in 
an MST. The following claim shows that the $\MF$ algorithm indeed resolves 
a merge-tree. 

\begin{clm}\label{clm:correct1}
Once a merge-tree becomes {\em active}, all the (old) fragments in the tree 
have the same (new) fragment ID.
\end{clm}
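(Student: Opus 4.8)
The plan is to trace the life cycle of a single merge-tree through the states of Algorithm $\MF$ and argue that, by the time its fragments return to the \emph{active} state, they have all been assigned a common $newID$ broadcast by the root. First I would fix a merge-tree $T$, i.e., a maximal set of fragments connected via the $\mp(\cdot)$ pointers, and recall from the algorithm's description that $T$ has exactly one pair of fragments with reciprocal merge-requests, of which the one with smaller ID becomes the $root$ (this is where Axiom $\axiomClique$ is used implicitly — all fragment leaders sit in the core and can exchange the relevant messages in $O(1)$ rounds per step). I would then observe the following invariant: once a fragment enters the $frozen$ state it stays frozen until it receives a FIN message, and once the $root$ enters the $root$ state it stays there until it has collected merge-requests from \emph{every} fragment of $T$. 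The claim will follow once I show (a) every non-root fragment of $T$ eventually reaches the $root$ via pointer-jumping and thereby receives the FIN message, and (b) the $root$ does not send FIN until all of $T$ has checked in.

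For step (a), the key point is that $\pj(i,next,2)$ performs amortized pointer-jumping: each invocation advances the $next$ pointer of the (speaker of the) chain by up to two hops toward the $root$, and the $next'$ value returned is always the current $next$ pointer of the queried fragment, which monotonically moves toward the $root$. Hence the composition of these pointers along any chain strictly shortens over phases and, since $\pj$ returns $[\textsc{true},next]$ precisely when it hits a fragment whose reply is $\textsc{null}$ (which, by inspection of the $root$-state code, only the $root$ sends), every chain is eventually fully contracted and its speaker obtains a direct pointer to the $root$. At that point the $waiting$-state code has the fragment send a merge-request carrying $next$ (now pointing at the $root$) and wait for the FIN message; I would note that the speaker propagates the received $next$ to all fragments in $F_{lead}^{\mp(i)}(w)$, so \emph{every} fragment of $T$, not just speakers, ends up pointing at the $root$ and waiting. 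For step (b), I would read off the $root$-state guard: the $root$ transitions to $active$ and emits FIN only when ``num of requests $= 0$'' in the current round and the accumulated merge-tree size is $\le 2^{2+phase}$; since requests arrive monotonically as chains contract, a round with zero new requests means all of $T$ has already registered, so FIN reaches all stored sources. Because $newID$ and $newLead$ are chosen once by the $root$ and sent to all of them, and each recipient sets $newID(F_{lead}^{\mp(i)}(w)) \gets newID$, all old fragments of $T$ adopt the identical new ID before going $active$.

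The main obstacle I anticipate is the bookkeeping around \emph{speakers}: a fragment that is not a speaker never directly queries or is queried during $\pj$ and $\textsc{FindRoot}$, so I must carefully justify that the speaker faithfully relays every intermediate $next$ pointer and the final root pointer to all members of $F_{lead}^{\mp(i)}(w)$, so that non-speaker fragments are never ``orphaned'' with a stale pointer when FIN is distributed. A secondary subtlety is ruling out the degenerate situation in which two fragments inside $T$ both believe they are the root: this cannot happen because root detection compares the reciprocal pair and keeps only the smaller ID (as in Figure \ref{fig:merge_tree}), and I would state this as a one-line sub-argument. Once these two points are nailed down, the claim is immediate; I would not belabor the $O(1)$-per-round communication feasibility, which is exactly what Axioms $\axiomClique$ and $\axiomConvergecast$ guarantee via Claim \ref{clm:1}.
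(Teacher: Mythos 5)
Your proposal is correct and follows essentially the same argument as the paper's (much terser) proof: pointer-jumping makes monotone progress toward the root, so a phase in which the root receives no merge-requests means every other fragment of the tree has already registered and is waiting, at which point the single $newID$ chosen by the root is distributed via the FIN messages (and relayed by speakers/leaders to their whole groups). Your added detail on speaker bookkeeping and root uniqueness just makes explicit what the paper delegates to the algorithm description.
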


%\begin{clm_correct1}[restated]
%Once a fragment tree becomes {\em active}, all the (old) fragments in the tree 
%have the same (new) fragment ID.
%\end{clm_correct1}

\begin{proof}
The claim follows directly from the description above, and the observation that 
in the pointer-jumping procedure, at every step, at least one more node points 
to the $root$.
Thus, if at some phase the $root$ of the merge-tree does not receive any 
merge-request, then every other fragment in the tree is in $waiting$ state, 
i.e., points to the $root$. Consequently, the $root$ knows all the fragments 
in the tree, and can inform their leaders about the new fragment ID, $newID$, 
and the new leader node $l(newID)$.
\end{proof}

\begin{clm}\label{clm:correct2}
The $\C\P$-MST algorithm emulates Boruvka's MST algorithm 
and thus constructs an MST for the network.
\end{clm}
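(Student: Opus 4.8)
The plan is to show that the $\C\P$-MST algorithm reproduces, phase by phase, the fragment structure of Boruvka's algorithm, and then invoke the classical correctness of Boruvka. First I would fix, as is standard, a tie-breaking rule on edge weights (e.g.\ by the IDs of the two endpoints) so that the minimum-weight outgoing edge of every vertex subset is unique and the MST is unique; this makes ``the $\mwoe$ of a fragment'' well defined and guarantees that the union of MWOE edges over any collection of disjoint fragments is acyclic. The proof then proceeds by induction over the phases $b=1,\ldots,B$, maintaining the invariant that at the start of phase $b$ the fragments maintained by the algorithm form exactly the partition of $V$ into the connected components induced by the MST edges selected so far, and that every selected edge belongs to the (unique) MST.

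For the inductive step I would verify three things. (i) \emph{Correct $\mwoe$ computation:} in Step 1 each node $u$ finds $\mwoe(u)$ among its outgoing edges, and the aggregation in Steps 2--3 computes, for each fragment $i$, $\mwoe(V^i)=\min_{w}\mwoe(V^i(w))=\min_{u\in V^i}\mwoe(u)$, since $\{V^i(w)\}_w$ partitions $V^i$ and each node reports to its representative, which forwards to the leader $l(i)$; hence the leader of $i$ learns the true MWOE of fragment $i$. (ii) \emph{Correct merge structure:} in the ``merge graph'' where every unresolved fragment $i$ points to $\mp(i)$, every node has out-degree one, so each weakly connected component contains exactly one cycle; by the distinct-weights argument that cycle has length exactly two (a reciprocal pair), and its smaller-ID fragment is the $root$, matching the behavior of $\textsc{FindRoot}$. (iii) \emph{Correct contraction:} by Claim \ref{clm:correct1}, once a merge-tree returns to the $active$ state all of its old fragments share one new fragment ID and one new leader; hence the updated partition is precisely the one obtained by contracting each merge-tree, i.e.\ by adding the corresponding MWOE edges — all of which lie in the MST by the cut property — to the current forest. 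This re-establishes the invariant.

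Finally I would account for the amortized pointer-jumping: a merge-tree need not be resolved within a single phase, so a fragment may stay $frozen$ over several consecutive phases while its chain is being contracted. During that time it issues no new merge-requests and takes part in no new merge, hence it behaves exactly like one Boruvka fragment whose MWOE step is merely ``paused'', and once it becomes $active$ its identity is the fully contracted fragment. Thus the algorithm emulates a lazy execution of Boruvka: the sequence of fragment partitions it produces is a subsequence of the one Boruvka produces, every chosen edge lies in the MST, and when the process terminates with a single fragment — which, by Claim~\ref{clm:cpmst:logn}, happens within $B=O(\log n)$ phases — the chosen edges form a spanning tree, hence the MST. The main obstacle I expect is item (iii): arguing carefully that the deferred/partial pointer-jumping never conflates two distinct merge-trees nor splits one, and that the $newID$/$newLead$ information reaches every fragment of a tree; this is exactly what Claim~\ref{clm:correct1} is designed to provide, so most of the work lies in cleanly reducing (iii) to that claim.
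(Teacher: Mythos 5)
Your proposal is correct and takes essentially the same route as the paper: the paper's own proof is just the brief observation that Boruvka's algorithm remains correct when merges are performed in any order, as long as two fragments are merged only when they share an edge that is the $\mwoe$ of at least one of them, and that the $\C\P$-MST algorithm satisfies this property. Your induction (tie-breaking for unique weights, correct $\mwoe$ aggregation through representatives and leaders, the reciprocal-pair structure of the merge graph, and the reduction of the lazy, multi-phase contraction step to Claim~\ref{clm:correct1}) is a careful expansion of exactly that argument rather than a different one.
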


%\begin{clm_correct2}[restated]
%The $\C\P$-MST algorithm emulates Boruvka's MST algorithm and thus results in MST.
%\end{clm_correct2}
\begin{proof}
In Boruvka's algorithm, fragment merges can be performed in any order. What's important is that a merge between any two fragments will occur if, and only if, they share an edge that is an $\mwoe$, for at least one of the fragments. Since our algorithm satisfies this property, it results in an MST.
\end{proof}

%\subsection{Running time analysis of the \texorpdfstring{$\C\P$}{CP}-MST algorithm}
\subsection{Running time analysis of the $\C\P$-MST algorithm}

We now analyze the running time of Algorithm $\C\P$-MST in a Core-Periphery network.
%that satisfies the Core-Periphery axioms.
%In order to prove this theorem, 
To do that, we analyze each part of the algorithm separately, and prove
%The Theorem follows directly by proving 
the following Claims 
\ref{clm:init_runtime},  \ref{clm:cpmst:logn} and \ref{clm:steps_runtime}.
We start with the initialization phase.

\begin{clm}
\label{clm:init_runtime}
The initialization phase (Phase 0) takes $O(1)$ rounds.
\end{clm}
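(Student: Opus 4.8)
The plan is to go through Phase 0 step by step and argue that each of its five sub-steps completes in $O(1)$ rounds, using Claim \ref{clm:1} (constant diameter) together with the three axioms. The running time of Phase 0 is the sum of the running times of its sub-steps, so it suffices to bound each one by a constant.

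First, for the \textbf{Obtaining a Representative} step, each periphery node sends a representative-request toward the core via a $\gamma$-convergecast protocol; by Axiom $\axiomConvergecast$ this is a $\Theta(1)$-convergecaster, so all requests reach the core in $O(1)$ rounds, and the replies travel back along the same (length-$O(1)$) routes, again in $O(1)$ rounds. Next, for the \textbf{Renaming} step, each node sends its ID to its representative ($O(1)$ by the previous paragraph's routing), each representative broadcasts its own ID and the count of nodes it represents to all core members; by Axiom $\axiomClique$ the core is a $\Theta(1)$-clique-emulator, so this all-to-all exchange among the $\nc$ core nodes (each sending $O(\nc)$-size aggregate information, which by Theorem \ref{thm:cp_property2} fits into the clique-emulation budget — here I would note that each representative sends only a constant number of $O(\log n)$-bit words to each core node, so a single round of clique emulation suffices) completes in $O(1)$ rounds; the core members then locally sort and assign ID ranges (no communication), and send the new IDs back to represented nodes in $O(1)$ rounds. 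The \textbf{Fragment ID Initialization} step is purely local and takes $0$ rounds. For \textbf{Obtaining a Leader}, each singleton fragment asks its representative to pick a uniformly random core node; the balanced assignment via a random permutation is coordinated within the core using clique emulation (Axiom $\axiomClique$) in $O(1)$ rounds, and the leader identity is then propagated back to each node in $O(1)$ rounds along its representative route. Finally, \textbf{Fragment State Initialization} is local and takes $0$ rounds.

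Summing these constants gives $O(1)$ total rounds for Phase 0, proving the claim. The main point requiring care — and the step I expect to be the only real obstacle — is verifying that the aggregate information exchanged within the core during Renaming and Leader assignment genuinely fits within a constant number of clique-emulation rounds in the CONGEST model: each core node may represent $\Theta(\nc)$ periphery nodes (by Theorem \ref{thm:cp_property2}), but the information it must disseminate about them (an ID-range request, or a random-permutation entry) is only $O(\log n)$ bits per core node per destination, so one clique-emulation round — which by Axiom $\axiomClique$ delivers one $O(\log n)$-bit message between every ordered pair of core nodes — is enough. I would spell out this accounting explicitly so that the constant-round bound is not accidentally hiding a $\log n$ or $\nc$ factor. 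Everything else is immediate from the axioms and from Claim \ref{clm:1}.
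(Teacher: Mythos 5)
Your proposal is correct and follows essentially the same route as the paper's (much terser) proof: Axiom $\axiomConvergecast$ handles the periphery--core exchanges in steps 1, 2 and 4, Axiom $\axiomClique$ handles the core-internal exchange in the renaming and leader-assignment steps, and the remaining steps are local. Your extra accounting that each core node only needs to send $O(\log n)$ bits to each other core node (so one clique-emulation phase suffices) is a welcome elaboration of what the paper leaves implicit, not a deviation from it.
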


%\begin{clm_init_runtime}[restated]
%The initialization phase (Phase 0) takes $O(1)$ rounds.
%\end{clm_init_runtime}

\begin{proof}
Due to Axiom $\axiomConvergecast$, the convergecast process employed in step 1 
of Phase 0 requires $O(1)$ rounds.
The renaming step (step 2 of Phase 0) can be done in $O(1)$ rounds, due to Axiom $\axiomClique$. 
The operation of obtaining a leader (step 3 of Phase 0) requires $O(1)$ rounds due to the Axiom $\axiomConvergecast$.
\end{proof}

Now we show that the number of Boruvka phases needed in algorithm $\C\P$-MST 
is $B=O(\log n)$. 

\begin{clm}
\label{clm:cpmst:logn}
Algorithm $\C\P$-MST takes $O(\log n)$ phases, i.e., $B=O(\log n)$. 
\end{clm}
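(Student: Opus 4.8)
The plan is to establish that the forest of tree fragments contracts to a single tree within $O(\log n)$ Boruvka-style phases, by showing that fragments grow at a geometric rate. Write $|F|$ for the \emph{size} of a fragment $F$, i.e.\ the number of original singletons it contains. The target is a growth invariant of the form: for a suitable constant $c>0$, every fragment that is \emph{active} at the start of phase $b$ has $|F|\ge 2^{b/c}$. Once such a fragment reaches size $n$ it is the entire spanning tree and the algorithm has terminated, so the invariant forces $B=O(\log n)$. Note this part of the analysis uses only the combinatorics of the merge procedure; the axioms enter only when bounding the time spent per phase (Claim~\ref{clm:steps_runtime}).

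I would prove the invariant by induction on $b$, with the base case being the singletons of Phase~0. For the step, recall that a fragment $F$ active at phase $b$ is produced by resolving a merge-tree $T$, and that fragments merge by disjoint union, so $|F|$ equals the sum of the sizes of $T$'s constituent fragments; in particular $|F|\ge k$, where $k$ is the number of constituents of $T$, and $|F|\ge 2s$, where $s$ is the size of the smallest constituent (and $s\ge 2^{q/c}$ by the induction hypothesis, $q$ being that constituent's creation phase). If merge-trees resolved instantaneously this would be the classical Boruvka argument: each surviving fragment at least doubles per phase, giving $O(\log n)$ phases. The complication is the amortized pointer-jumping of $\pj$, which performs only a constant number of jumps per phase, together with the size test $2^{2+\text{phase}}$ enforced by the root of $\MF$: resolving $T$ takes $\Theta(\log k)$ phases rather than one — a tree of depth at most $k$ contracts onto its root in $O(\log k)$ pointer-jumping phases, and the root is barred from finalizing before phase $\log_2 k-2$. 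Hence a merge-tree resolved at phase $b$ has at most $2^{O(b)}$ constituents, and the $\Theta(\log k)$-phase delay it incurs is ``paid for'' by the factor-$k$ growth of the fragment it yields: $k$ fragments collapsing to one in $O(\log k)$ phases is an average per-phase shrink factor $k^{1/O(\log k)}=2^{\Omega(1)}$, matching Boruvka. Assembling these observations by splitting into two cases — if $T$ is small and shallow then $b\le q+O(1)$ and $|F|\ge 2s\ge 2\cdot 2^{q/c}\ge 2^{b/c}$ once $c$ absorbs the additive constant; if $T$ is large then the $\log_2 k$ term governs, $b=O(\log k)$, and $|F|\ge k\ge 2^{\Omega(b)}$ directly — restores the invariant, and the claim follows.

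The step I expect to be the real obstacle is making the amortization rigorous when a merge-tree spans several phases: one must track how the deferred contraction of long chains (and the late attachment of further fragments to an ongoing merge-tree) interacts across phases, verify that postponing the resolution of a long chain or tree is always compensated by the size of the fragment eventually produced, and pin down the constant $c$ uniformly — this is exactly where both the $O(1)$-jumps-per-phase cap of $\pj$ and the $2^{2+\text{phase}}$ threshold of $\MF$ are needed. This accounting is the ``amortized pointer-jumping'' technique of \cite{lotker06distributed}, adapted here to merge-trees and to the explicit size threshold of $\MF$; the justification quoted immediately after the statement of the claim (``if the chain or tree is large, then the resulting fragment is large enough to satisfy the fragment growth rate'') is precisely the invariant that this accounting must make precise.
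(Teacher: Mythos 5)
Your overall plan is the paper's: induction over phases with a geometric growth invariant for active fragments, where the delay caused by amortized pointer jumping is paid for by the size of the merge-tree being resolved. However, the quantitative step that makes this work is not correctly executed, and it is exactly the step you flag as ``the real obstacle.'' Your two-case split does not cover the general situation, and case 2 is wrong as stated: from ``$T$ is large'' you conclude $b=O(\log k)$, but $b$ is the absolute phase index, not the resolution delay; a merge-tree can be created at a late phase $q=\Theta(\log n)$ with, say, $k=\Theta(\log n)$ constituents and non-constant depth, so that neither $b\le q+O(1)$ (case 1) nor $b=O(\log k)$ (case 2) holds. What is needed, and what the paper proves, is a bound on the \emph{delay}: with two pointer jumps per phase, a tree of diameter $D$ created at phase $q$ resolves at a phase $j$ with $j-q\le\lceil\log D\rceil/2$, while the resolved fragment contains at least $D$ constituents, each of size at least $2^{q}$ by induction, so its size is at least $2^{q}D=2^{q+\log D}\ge 2^{q+\lceil\log D\rceil/2}\cdot 2^{\lceil \log D\rceil /2 - \lceil\log D\rceil + \log D}\ge 2^{j}$. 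This single multiplicative inequality replaces your case analysis and handles all regimes uniformly; your informal remark about an ``average per-phase shrink factor $2^{\Omega(1)}$'' gestures at it but is never turned into the inequality.

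Two smaller points. First, the $2^{2+\mathrm{phase}}$ threshold in $\MF$ is not what drives this claim: the paper's proof of Claim \ref{clm:cpmst:logn} does not use it (it is used in the running-time analysis of Claim \ref{clm:steps_runtime} to \emph{upper}-bound the size of a newly activated fragment); if the threshold ever delays a root beyond the pointer-jumping time, the tree has more than $2^{2+\mathrm{phase}}$ constituents and the invariant holds trivially, so it cannot hurt the lower bound, but presenting it as a needed ingredient of the growth argument is a misreading. Second, your termination sentence is slightly short: after phase $\lceil\log n\rceil$ there may be no active fragments because all remaining ones sit in pending merge-trees; the paper closes this by observing that at most $\log n$ further phases suffice to resolve any such tree, after which the unique fragment has size $n$, keeping $B=O(\log n)$.
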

\begin{proof}
The proof is by induction.
Assume that every active fragment $f$ at phase $x \le i$ has size (in nodes) 
$|f| > \min(2^x,n)$.
We show that in the phase $j>i$ at which $f$ becomes active again, 
its size will be at least $\min(2^j,n)$.
In phase $i$, $f$ joins a merge-tree that was created at some phase 
$k\le i$, and according to induction assumption, every fragment in this tree 
has size at least $\min(2^k,n)$.
That tree will be resolved in phase $j$, i.e., after $j-k$ phases. 
Let $D$ be the diameter of the tree in phase $j$. 
Since the algorithm uses pointer jumping with two iterations at each phase, 
it follows that $j-k \le \left\lceil \log D\right\rceil / 2$.
%\begin{align}
%j-k \le \frac{\left\lceil \log D\right\rceil}{2}.
%\end{align} 
The size of the resolved tree is at least $\min(2^k, D)$, since it comprises 
of at least $D$ fragments, each of size at least $\min(2^k,n)$.
Clearly,
$$2^k D ~=~ 2^{k+\log D} ~\ge~ 2^{k+\frac{\left\lceil \log D\right\rceil}{2}} ~\ge~ 2^j,$$
and thus $\left|f\right|\ge \min(2^j,n)$.
%\begin{align}
%\left|f\right|\ge \min(2^j,n).
%\end{align} 
So each active fragment at phase $j$ is of size at least $\min(2^j,n)$. 
If in phase $\lceil\log n\rceil$ there are no active fragments, then the algorithm waits 
for at most $\log n$ time, which is sufficient to resolve any fragments tree, 
and then, the size of the fragment is $\min(2^{2\log n},n) = n$, 
which means that the algorithm has terminated.
\end{proof}

Finally, we analyze the steps performed in phases $b\in[1,\ldots,B]$. 
First, we give the following auxiliary lemma.
The result of this lemma is well known, and its proof is analogous to the proof of Lemma 5.1 in \cite{Upfal_Book_2005}. 
%Since we present this lemma in a slightly more variation, we present it with the proof. 
%, and can be found in \ref{app:MST}.

\begin{lemma}
\label{lem:balls_and_bins}
%When up to $O( n \log n)$ balls are thrown independently and uniformly 
%at random into $\Omega(n)$ bins, the maximum loaded bin has $O(\log n)$ balls 
%with probability at least $1-1/n^8$.
For every real $x > 0$, when up to $k$ balls are thrown independently and uniformly 
at random into at least $w$ bins, the maximum loaded bin has at most $O(k/w + \log x)$ with probability at least $1-w/x^c$, where $c$ is an arbitrary constant.
%
%
% $l$ balls 
%with probability at least $1-w\cdot(3k/(wl))^l$. In particular, we describe two cases: (i) if $k=O(x)$ and $w=\Omega(x)$, then $l=O(\log x)$ w.p. at least $1-(1/x^c)$, where $c$ is an arbitrary constant, and (ii) if $k=O(x)$ and $w=\Omega(\sqrt{x})$, then $l=O(\sqrt{x})$ w.p. at least $1-(1/x^c)$.
\end{lemma}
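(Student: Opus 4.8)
\textbf{Proof plan for Lemma~\ref{lem:balls_and_bins}.}
The plan is to use the standard Chernoff-bound-plus-union-bound argument for the maximum load in a balls-and-bins process. Fix a target bin $j$ among the (at least) $w$ bins and let $X_j$ denote the number of balls that land in it. Since at most $k$ balls are thrown independently and each lands in bin $j$ with probability at most $1/w$, we have $\E[X_j] \le k/w$; moreover $X_j$ is a sum of independent indicator variables, so a Chernoff/Poisson-tail bound applies. I would set the threshold $t = \Theta(k/w + \log x)$ and bound $\Pr[X_j > t]$. The key inequality is the multiplicative Chernoff bound in the form $\Pr[X_j \ge t] \le (e\,\E[X_j]/t)^{t}$, or equivalently the tail estimate $\Pr[X_j \ge t] \le 2^{-t}$ once $t \ge 2e\,\E[X_j]$; with $t$ at least a constant multiple of $\log x$ this gives $\Pr[X_j > t] \le x^{-c'}$ for a suitably large constant depending on the constant hidden in $t$.

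The second step is a union bound over all bins. There are at least $w$ bins, but since every ball must land somewhere, at most $k$ bins are nonempty, so it suffices to union-bound over the $\min(w, k)\le w$ relevant bins (in fact over all $w$ bins, which only helps). This yields $\Pr[\exists j : X_j > t] \le w \cdot x^{-c'} = w/x^{c}$ after renaming the constant, which is exactly the claimed failure probability. Combining the two steps gives that with probability at least $1 - w/x^{c}$ every bin receives at most $O(k/w + \log x)$ balls, as required.

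The only mildly delicate point is choosing the constant inside the $O(k/w + \log x)$ large enough that the Chernoff exponent beats $\log w + c\log x$ uniformly in the regime where $k/w$ and $\log x$ may be of very different magnitudes; the standard trick is to split into the case $\log x \ge k/w$ (the additive $\log x$ term dominates and carries the tail) and the case $\log x < k/w$ (the multiplicative deviation is a constant factor above the mean, which already gives an exponentially small tail in $k/w \ge \log x$). I expect this case analysis to be the main obstacle, though it is entirely routine and matches the proof of Lemma~5.1 in \cite{Upfal_Book_2005} cited in the statement; no new ideas are needed beyond bookkeeping of constants.
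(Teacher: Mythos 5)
Your proposal is correct and follows essentially the same route as the paper: the paper bounds $\Pr(X_i \ge l) \le \binom{k}{l} w^{-l} \le (ek/(wl))^l$, which is exactly the $(e\,\E[X_j]/t)^t$ Chernoff-form tail you invoke, then sets $l = c_1(k/w + \log x)$ and union-bounds over the $w$ bins to get failure probability $w/x^c$. The case split you anticipate at the end is not even needed, since the union bound already carries the factor $w$ separately and the per-bin exponent only has to beat $c\log x$.
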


\begin{proof}
Let $X_i$ be the random variable representing the number of balls in bin $i$. 
For integer $l\ge 0$,
\begin{eqnarray*}
\nonumber
\Pr (X_i \ge l) &\le& {\binom{k}{l}} \cdot {\left(\frac{1}{w}\right)}^l 
~\le~ \frac{k^l}{l!}\cdot \frac{1}{w^l}
~=~ \left(\frac{k}{w}\right)^l \cdot \frac{1}{l!} \\
&\le& \left(\frac{k}{w}\right)^l \cdot \left(\frac{e}{l}\right)^l 
~=~ \left(\frac{ek}{wl}\right)^l
\end{eqnarray*}
For $l=c_1(k/w+\log x)$, we obtain
$$\Pr (X_i \ge c_1(k/w+\log x)) 
~\le~ \left( \frac{e}{c_1}\right)^{c_1(k/w+\log x)} 
~\le~ \frac{1}{x^c}~,$$
where $c=c(c_1)$ is an arbitrary constant.

By taking union bound over all the $w$ bins, we obtain that the probability that any bin has at most $O(k/w + \log x)$ balls with probability of at least $1-w/x^c$.
\end{proof}

\begin{clm}
\label{clm:steps_runtime}
For every phase $b\in[1,\ldots,B]$, the running times of the main steps $1,2$ and $6$ are bounded by $O(1)$, and of steps $3,4$ and $5$ by $O(\log n)$.
Thus, every phase $b$ takes $O(\log n)$ rounds.
\end{clm}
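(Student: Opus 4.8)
\textbf{Proof proposal for Claim~\ref{clm:steps_runtime}.}
The plan is to go through the six steps of a generic phase $b$ one at a time, and for each one identify where the messages travel and how many messages must cross each bottleneck, then invoke the axioms (via Claim~\ref{clm:1} and Theorem~\ref{thm:cp_property2}) together with Lemma~\ref{lem:balls_and_bins} to bound the congestion. Steps~1, 2 and~6 are local: in step~1 each node $u$ inspects its own incident edges to find $\mwoe(u)$ and needs $f(v),l(f(v))$ for the other endpoint, which is one exchange along each edge; in step~2 each $u$ sends a constant-size packet to its fixed representative $r(u)$, and by Axiom $\axiomConvergecast$ (and the fact that the convergecast routes are fixed from Phase~0) this takes $O(1)$ rounds; step~6 is the reverse convergecast from representatives back to the periphery nodes they represent, again $O(1)$ by the same axiom. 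The slightly delicate point here is that a representative may represent many periphery nodes and must send each of them a message, but since the $\gamma$-convergecaster structure is what delivered those requests inward in $O(1)$ rounds, running it in reverse delivers the replies in $O(1)$ rounds as well; I would state this as a direct consequence of Axiom $\axiomConvergecast$.

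For steps~3, 4 and~5 the work happens inside the core, so the relevant tool is Axiom $\axiomClique$: the core is a $\Theta(1)$-clique emulator, so \emph{one} round of all-to-all core communication (each core node sending a distinct $O(\log n)$-bit message to each other core node) costs $O(1)$ rounds. Hence it suffices to show that in each of these steps every core node needs to send and receive only $O(\nc\log n)=O(\sqrt m\log n)$ messages, which is $O(\log n)$ ``clique-emulation rounds.'' In step~3, representative $w$ forwards one $\mwoe$ record per fragment in $F_{rep}(w)$; the total number of (representative, fragment) incidences is at most $n$ (each node contributes its own fragment to its own representative), and a representative $w$ represents $d_{\mathrm{out}}(w,\C)+1 = O(\nc)$ nodes by Theorem~\ref{thm:cp_property2}(2), so $w$ sends $O(\nc)$ messages; on the receiving side, a leader $l(i)$ of fragment $i$ can receive up to one message per representative that holds a node of $i$, i.e. at most $\nc$ messages per fragment it leads. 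Since leaders were assigned by a random permutation in Phase~0, each core node leads $O(\nc)$ of the initial fragments, and more importantly each phase re-randomizes the new leader; here is where Lemma~\ref{lem:balls_and_bins} enters, with the (at most $n$) fragments as balls and the $\nc=\Omega(\sqrt n)$ core nodes as bins, giving that no core node leads more than $O(n/\nc+\log n)=O(\sqrt m+\log n)$ fragments w.h.p. Combining, in step~3 each core node sends and receives $O(\nc\log n)$ messages, i.e. $O(\log n)$ rounds. Step~5 is symmetric: each leader sends the new-ID/new-leader update to all representatives holding a node of each active fragment it leads, again $O(\nc)$ records per fragment and $O(\nc)$ fragments per leader w.h.p.

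Step~4 is where I expect the real obstacle, and it is handled by appealing to the structure of $\MF$ rather than by a one-line counting argument. Running $\MF(i)$ for all fragments $i$ consists of: (i) a constant number of pointer-jumping iterations (the ``amortized pointer-jumping'' of Claim~\ref{clm:cpmst:logn}), each of which is a constant number of merge-request/merge-reply exchanges between leaders, hence $O(1)$ clique-emulation rounds per exchange; and (ii) the $\textsc{FindRoot}$ and FIN-message rounds, again a constant number of leader-to-leader exchanges. So the per-exchange cost is governed by how many merge-request messages a single leader must send or absorb in one iteration. The \emph{speaker} mechanism is exactly what controls the send side: for a fixed target fragment $j$ and fixed leader $w$, only $\spk^j(w)=\min F^j_{lead}(w)$ sends a request, so $w$ sends at most one request per distinct target fragment, i.e. at most $|F_{lead}(w)| = O(\sqrt m+\log n)$ requests w.h.p.\ by the balls-and-bins bound above. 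The receive side is the genuinely subtle part: a single popular target fragment could in principle be pointed at by many speakers, creating a hot spot; I would argue that the number of merge-requests arriving at any one leader in a given pointer-jumping iteration is still $O(\nc\log n)$ because each arriving request corresponds to a distinct (source fragment) whose $next$ pointer currently points into that leader's fragments, and the number of fragments is $O(n)$ while the targets are spread over $\nc=\Omega(\sqrt n)$ randomly-chosen leaders — another application of Lemma~\ref{lem:balls_and_bins}. Once both the send and receive counts per leader per iteration are $O(\nc\log n)=O(\sqrt m\log n)$, Axiom $\axiomClique$ converts this to $O(\log n)$ real rounds, and since there are only $O(1)$ iterations and sub-steps inside $\MF$, step~4 runs in $O(\log n)$ rounds. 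Summing the six steps, a phase takes $O(1)+O(1)+O(\log n)+O(\log n)+O(\log n)+O(1)=O(\log n)$ rounds, which is the claim. The part I would be most careful about in the full write-up is making the ``hot target fragment'' argument airtight — i.e. bounding the in-degree of a leader in the union of all merge-trees during a single iteration — since that is the one place where the speaker optimization alone does not obviously suffice and the randomization of leaders is doing the real work.
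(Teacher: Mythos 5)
Your overall route (step-by-step analysis, convergecast for steps 1, 2, 6, balls-and-bins plus clique emulation for the core steps, the speaker mechanism and $O(1)$ pointer-jumping iterations for step 4) is the same as the paper's, but there are genuine gaps in how you convert message counts into rounds and in what you count. First, your central reduction -- ``every core node sends and receives $O(\nc\log n)$ messages, hence $O(\log n)$ clique-emulation rounds'' -- does not follow from Axiom $\axiomClique$ alone: clique emulation delivers one message per ordered pair of core nodes per $O(1)$ rounds, so a per-node \emph{total} of $O(\nc\log n)$ could still take $\Theta(\nc\log n)$ rounds if all those messages share one destination. You either need a bound on the load of each individual sender-receiver pair, or you need the routing result of Theorem~\ref{thm:lenzen_stoc2011}, which you never invoke. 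The paper takes the first route: since each representative/leader sends $O(\nc)$ (resp.\ $O(\sqrt{n})$) messages whose destinations are \emph{independently random} leaders, Lemma~\ref{lem:balls_and_bins} bounds the load of every single core edge by $O(\log n)$ w.h.p., and a union bound over senders and phases finishes the argument. Relatedly, your step-4 receive-side sketch applies balls-and-bins as if request destinations were independent, but requests aimed at the same popular target fragment all go to the same leader, so they are not independent throws; the correct fix (the paper's) is to count on the sender side -- the speaker rule means a leader $w$ sends at most one request per distinct target fragment, i.e.\ at most $|F_{lead}(w)|=O(\sqrt{n})$ requests to independently random leaders -- which again gives $O(\log n)$ per edge and makes the ``hot fragment'' issue disappear, since its many requests arrive on distinct edges.

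Second, you omit the analysis of the FIN messages sent by root fragments, and this piece is not covered by your ``constant number of leader-to-leader exchanges governed by the merge-request counts'' framing: a root must send a message for \emph{every fragment in its merge-tree}, a quantity unrelated to $|F_{lead}(w)|$ and potentially much larger than the number of requests that leader ever sent. The paper needs a separate argument here, using Claim~\ref{clm:cpmst:logn} to bound active-fragment sizes between $2^j$ and $2^{j+1}$, hence the number of merge-trees resolved in a phase by $n/2^{i+1}$, and then a two-case analysis ($2^{i+1}\le\sqrt{n}$ versus $2^{i+1}>\sqrt{n}$) with Lemma~\ref{lem:balls_and_bins} to show each leader sends only $O(\sqrt{n}\log n)$ (resp.\ $O(\nc\log n)$) FIN-related messages with random destinations, and therefore again only $O(\log n)$ per edge. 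Without this piece, and without the per-edge (or Lenzen-routing) conversion above, the claimed $O(\log n)$ bound for step 4 is not established.
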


%Now we are ready to proceed with the proof of Claim \ref{clm:steps_runtime}.
\begin{proof}
In step 1, every node sends 
a single message to all its neighbors, so the running time is $O(1)$.
In step 2, each node $u \in V$ sends $\mwoe(u)$ to $r(u)\in \C$ using $\gamma$-convergecast.
By Axiom $\axiomConvergecast$, the running time is $O(\gamma)=O(1)$.
Next, consider step 3. Since the network satisfies Axiom $\axiomClique$, one may assume that $\C$ is a clique.
To derive the running time of this step we have to calculate how many messages 
are sent between a representative $u$ and a leader $v$ in $\C$. It suffices 
to look only at Core edges, since this step involves communication only 
between nodes in $\C$ (representatives and leaders). 
By Theorem \ref{thm:cp_property2}(2), $d_{out}(u)=O(\nc)$, 
and since $\P$ and $\C$ form a $\Theta(1)$-convergecaster, it follows that 
on each edge towards $\P$, $u\in \C$ receives a constant number of 
``representative-requests'' at the initialization phase. The last claim 
implies that $u$ represents $O(\nc)$ nodes, and thus at most $O(\nc)$ fragments.
Hence, every representative node has to send $O(\nc)$ messages, each destined
to a leader of a specific fragment. Since the ``leadership'' on a fragment 
is assigned independently at random to the nodes in $\C$, sending messages 
from representative to leaders is analogous to throwing $O(\nc)$ balls into 
$\nc$ bins. Hence by Lemma \ref{lem:balls_and_bins} with an appropriate constant $c$, the most loaded edge (bin) from a representative $u$ to some leader $v$ 
handles $O(\log \nc)$ messages (balls),
with probability at least $1-1/\nc^8$.
Applying the union bound over all $O(\nc)$ representative nodes and all 
$O(\log n)=O(\log \nc)$ phases of the algorithm, we get that the most loaded 
edge in step 3, is at most $O(\log n)$ with probability 
at least $1-1/n$. Thus, this step takes $O(\log n)$ time.

In step 4, every execution of Procedure $\MF$ 
requires sending / receiving merge-request/reply messages from every leader 
$u\in\C$, for each fragment $i\in F_{lead}(u)$. For each merge-request there is 
exactly one merge-reply, so it suffices to count only merge-requests. 
Moreover, if there are multiple fragments that have the same leader node
and need to send a merge-request to the same other fragment, only one message 
will actually be sent by the \emph{speaker} fragment. 
The last observation implies that every request message sent by a leader 
is destined to a different fragment (i.e., to its leader). As in the analysis 
of the previous step, since ``leadership'' is assigned independently at random 
to the nodes in $\C$, sending messages from leaders to leaders is analogous to 
throwing $A$ balls into $\nc$ bins, where $A$ is the number of fragments that 
the node $u$ leads. Using Lemma \ref{lem:balls_and_bins} with an appropriate constant $c$, $A$ can be bounded with high probability by $O(\sqrt{n})$, 
since up to $n$ fragments (balls) are assigned 
to $\nc=\Omega(\sqrt{n})$ Core nodes (bins).

We now apply Lemma \ref{lem:balls_and_bins} with $O(\sqrt{n})$ balls 
(fragments led by a node), and $\nc=\Omega(\sqrt{n})$ bins (edges towards 
other Core nodes), and conclude that the most loaded edge from a leader $u$ 
to some other leader $v$ carries $O(\log n)$ messages, with probability 
at least $1-1/(\sqrt{n})^8 =1-1/n^4$.
Applying the union bound over all the $O(\sqrt{m})$ leaders, and all 
$O(\log n)$ phases of the algorithm, we get that the most loaded edge 
in the process of sending merge-requests carries at most $O(\log n)$ messages, 
with probability at least $1-1/n^c$. 

The last part of step 4 is when the root fragment 
sends the FIN (``finish'') message to all the merge-tree members. 
at the beginning of phase $j$, the size of each active fragment is at least 
$2^j$ (see Claim \ref{clm:cpmst:logn}) and at most $2^{j+1}$ (as the root 
does not release a tree at phase $j-1$ if it is too large).
Thus, the number of merge-trees resolved at phase $i$ is at most 
$n/2^{i+1}$ (every resolved tree becomes an active fragment 
at the next phase). In case $2^{i+1} \le \sqrt{n}$, it follows from the 
Lemma \ref{lem:balls_and_bins} that a leader node $u\in \C$ has at most 
$O(\sqrt{n}/2^{i+1} + \log n)$ roots (at most $n/2^{i+1}$ balls into at least $\sqrt{n}$ bins). For each root, a leader has to send 
a message, for each fragment in its tree. The number of fragments in the tree 
is bounded by the number of nodes in the tree, which is $2^{i+2}$ 
(this is because the tree becomes an active fragment at the beginning 
of the next phase $j=i+1$, and its size is limited by $2^{j+1}$). 
Thus, a leader has to send 
$O(\sqrt{n}/2^{i+1} + \log n)\cdot 2^{i+2}=O(\sqrt{n}\log n)$ messages. 
Each message is destined to a leader of some fragment, which is located 
at the randomly chosen node in $\C$. So, again, by Lemma \ref{lem:balls_and_bins},
we have that the most loaded edge carries $O(\log n)$ messages 
with high probability. 

In case $2^{i+1} > \sqrt{n}$, Lemma \ref{lem:balls_and_bins} yields 
that a leader $u\in \C$ has at most $O(\log n)$ roots (at most $n/2^{i+1}$ balls into at least $\sqrt{n}$ bins). 
Since a root has to send at most one message to each leader 
(even if the node is a leader of multiple fragments of the tree), 
the total number of messages needed to be sent by a leader is $O(\nc\log n)$. 
Since every message is destined to a random leader, by Lemma \ref{lem:balls_and_bins} we obtain a bound of 
$O(\log n)$ on the maximum edge load, with high probability.

Overall, step 4 takes $O(\log n)$ time.
Step 5 obviously takes the same 
time ($O(\log n)$) as step 3, since it involves the 
transfer of the same amount of information (except in the opposite direction).
Step 6 takes the same time ($O(1)$) 
as step 2, since again it involves
transferring the same amount of information (in the opposite direction).
\end{proof}

%We now summarize the running time of the $\C\P$-MST algorithm:

%\begin{proof}[Proof of Theorem \ref{thm:mst_runtime}]
%The total time bound of $O(\log^2 n)$ rounds follows directly from Claims 
%\ref{clm:init_runtime}, \ref{clm:steps_runtime}, and \ref{clm:cpmst:logn}.
%\end{proof}

We have established the following theorem.
\begin{theorem}
\label{thm:mst_runtime}
%\begin{thm_our_algo}[restated]
On a $\C\P$-network $\GVECP$, Algorithm $\C\P$-MST 
constructs a MST in $O(\log^2 n)$ rounds, with high probability.
%terminates in $O(\log^2 n)$ rounds with high probability.
%\end{thm_our_algo} 
\end{theorem}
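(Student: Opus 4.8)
The plan is to assemble Theorem~\ref{thm:mst_runtime} directly from the claims already proved, so the ``proof'' is essentially a bookkeeping argument that combines the phase count with the per-phase cost. First I would invoke Claim~\ref{clm:correct1} and Claim~\ref{clm:correct2} to assert correctness: the $\MF$ procedure resolves every merge-tree so that all fragments in it adopt a common new ID, and since a merge between two fragments occurs exactly when they share an $\mwoe$ of at least one of them, the algorithm faithfully emulates Boruvka's algorithm and therefore outputs an MST. That takes care of the qualitative part of the statement.

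Next I would handle the running time. By Claim~\ref{clm:init_runtime}, Phase~0 costs $O(1)$ rounds. By Claim~\ref{clm:cpmst:logn}, the number of Boruvka phases is $B = O(\log n)$, because every active fragment at phase $j$ has size at least $\min(2^j, n)$, so by phase $\lceil \log n\rceil$ (plus an $O(\log n)$ tail to flush any outstanding merge-tree) the forest has collapsed to a single tree. By Claim~\ref{clm:steps_runtime}, each of the $B$ phases costs $O(\log n)$ rounds, since steps $1,2,6$ are $O(1)$ and steps $3,4,5$ are $O(\log n)$ (all the latter bounds holding with high probability via the balls-and-bins Lemma~\ref{lem:balls_and_bins} and union bounds over the $O(\sqrt m)$ officials and $O(\log n)$ phases). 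Multiplying, the total is $O(1) + B\cdot O(\log n) = O(\log^2 n)$ rounds.

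Finally I would address the ``with high probability'' qualifier carefully, since this is the only place any real argument is needed rather than pure citation. The randomness enters only through the random assignment of fragment leaders (in Phase~0 and in the $newLead$ choices inside $\MF$), and the high-probability bounds inside Claim~\ref{clm:steps_runtime} were each stated with failure probability polynomially small in $n$ \emph{after} taking union bounds over all representatives/leaders and all $O(\log n)$ phases. So I would simply take one more union bound over the constantly many randomized steps per phase to conclude that with probability $1 - 1/n^{c}$ for a suitable constant $c$, every step of every phase meets its claimed bound simultaneously; on that event the run is $O(\log^2 n)$ rounds, which is exactly ``with high probability''.

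The main obstacle, such as it is, is not in this theorem's proof but upstream: the substance lives entirely in Claim~\ref{clm:steps_runtime} (the congestion analysis for steps $3$--$5$, especially the delicate case split in step~$4$ on whether $2^{i+1} \lessgtr \sqrt n$) and in Claim~\ref{clm:cpmst:logn} (the amortized pointer-jumping growth argument). Given those, the present theorem is a one-paragraph synthesis; the only thing to be slightly careful about is ensuring the union bound over phases has already been absorbed into the cited claims so it is not double-counted, and that the additive $O(\log n)$ flushing tail from Claim~\ref{clm:cpmst:logn} is dominated by the $O(\log^2 n)$ total and hence harmless.
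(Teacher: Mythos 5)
Your proposal is correct and matches the paper's own (implicit) argument: the paper simply states that the theorem follows from Claims \ref{clm:correct1}, \ref{clm:correct2}, \ref{clm:init_runtime}, \ref{clm:cpmst:logn} and \ref{clm:steps_runtime}, exactly the synthesis you perform, with the union bounds over phases already absorbed into Claim \ref{clm:steps_runtime}. Your extra care about not double-counting the union bound and the $O(\log n)$ flushing tail is sound and consistent with the paper.
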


\section{Additional Algorithms in Core-Periphery Networks}
\label{sec:more_algorithms}
% !TEX root = core_per.tex

In addition to MST, we have considered a number of other distributed problems 
of different types, and developed algorithms for these problems that can be 
efficiently executed on core-periphery networks.
In particular, we dealt with the following set of tasks, related to matrix 
operations:
(M1) Sparse matrix transposition. 
(M2) Multiplication of a sparse matrix by a vector.
(M3) Multiplication of two sparse matrices.

We then considered problems related to calculating aggregate functions 
of initial values, initially stored one at each node in $V$. 
In particular, we developed efficient algorithms for the following problems:
% the largest $k$ values in the network, 
(A1) Finding the rank of each value, assuming the values are ordered.
(As output, each node should know the rank of the element it stores.) 
(A2) Finding the median of the values.
(A3) Finding the (statistical) mode, namely, the most frequent value.
(A4) Finding the number of distinct values stored in the network. 
Each of these problems requires $\Omega(D)$ rounds on general networks
of diameter $D$. 
We show that on a $\C\P$-network these tasks can be performed in $O(1)$ rounds. 
%(A5) Finding the top $r$ values in each area (i.e., group), assuming that each value belongs to a singe area of a total $O(\sqrt{n})$ areas.

An additional interesting task is defined in a setting where the initial 
values are split into disjoint groups, and requires finding the $r$ largest 
values of each group. This task can be used, for example, for finding 
the most popular headlines in each area of news.
Here, there is an $O(r)$ round solution on a $\C\P$-network.
(The diameter is a lower bound for this task in general networks.)

In all of these problems, we also establish the necessity of all 3 axioms, 
by showing that there are network families satisfying 2 of the 3 axioms, 
for which the general lower bound holds.

%%%%%%%%%%%%%%%%%%%%%%%%%%%%%%%%%%%
\subsection{Technical preliminaries}
%%%%%%%%%%%%%%%%%%%%%%%%%%%%%%%%%%%

A few definitions are in place. Let $A$ be a matrix, in which each entry 
$A(i,j)$ can be represented by $O(\log n)$ bits 
(i.e., it fits in a single message in the CONGEST model).
Denote by $A_{i,*}$ (respectively, $A_{*,i}$) the $i$th row (resp., column) 
of $A$. Denote the $i$th entry of a vector $s$ by $s(i)$.
We assume that the nodes in $\C$ have IDs $[1,\ldots,\nc]$, 
and this is known to all of them.
A square $n\times n$ matrix $A$ with $O(k)$ nonzero entries in 
each row and each column is hereafter referred to as 
an {\em $O(k)$-sparse matrix}.

Our algorithms make extensive use of the following theorem of \cite{Lenzen:2011}.

\begin{theorem}[\cite{Lenzen:2011}]\label{thm:lenzen_stoc2011}
Consider a fully connected network of $\nc$ nodes, where each node is given up to $M_s$ messages to send,
%(the same message to $x$ different destinations are considered as $x$ messages).
 and each node is the destination of at most $M_r$ messages.
There exists an algorithm $\sndmsg$ that delivers all the messages to their destinations in the CONGEST model in $O((M_s+M_r)/\nc)$ rounds with high probability.
\end{theorem}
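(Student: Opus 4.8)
The plan is to establish the upper bound by a randomized two-phase routing scheme in the spirit of Valiant, exploiting the fact that a fully connected network of $\nc$ nodes provides $\Theta(\nc^2)$ units of bandwidth per round (one $O(\log n)$-bit message across each of the $\binom{\nc}{2}$ links). Before giving the algorithm it is worth recording the matching intuition for why $O((M_s+M_r)/\nc)$ is the right target: a node holding $M_s$ messages has only $\nc-1$ incident links and hence needs at least $M_s/(\nc-1)$ rounds merely to push them out, and dually a node that is the destination of $M_r$ messages needs at least $M_r/(\nc-1)$ rounds to absorb them; so $\Omega((M_s+M_r)/\nc)$ rounds are unavoidable, and the task is to show this is achievable.

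The algorithm $\sndmsg$ itself has two phases. In the \emph{scatter} phase, each of the (at most $M_s$ per node) messages independently chooses a uniformly random relay node in the clique and is forwarded there. In the \emph{gather} phase, each message is forwarded from its relay to its true destination. Correctness is immediate, since every message eventually reaches its intended target; the entire content of the proof is bounding the per-edge congestion in each phase, as the number of rounds a phase consumes equals the maximum number of messages crossing any single link (links being usable one message per round).

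First I would analyze the scatter phase. Fix a source $u$ and a relay $w$: the messages of $u$ assigned to $w$ form a balls-into-bins instance with at most $M_s$ balls and $\nc$ bins, so by a Chernoff bound the load on link $(u,w)$ is $O(M_s/\nc + \log n)$ with probability $1 - n^{-\Theta(1)}$, and a union bound over all $O(\nc^2)$ links shows the scatter phase finishes in $O(M_s/\nc + \log n)$ rounds w.h.p. The gather phase is symmetric: since each message picks its relay independently of its destination, the messages destined for a fixed node $d$ --- of which there are at most $M_r$ --- land at independent uniform relays, so for each relay $w$ the number of $d$-bound messages held at $w$ is again a balls-into-bins quantity ($M_r$ balls, $\nc$ bins) bounded by $O(M_r/\nc + \log n)$ w.h.p.; this is exactly the load that link $(w,d)$ must carry, so the gather phase finishes in $O(M_r/\nc + \log n)$ rounds w.h.p. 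Summing the two phases yields $O((M_s+M_r)/\nc + \log n)$ rounds with high probability.

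The main obstacle is the residual additive $\log n$ term, which must be eliminated to obtain the clean bound $O((M_s+M_r)/\nc)$ stated in the theorem. In the regime $M_s + M_r = \Omega(\nc\log n)$ the term $(M_s+M_r)/\nc$ already dominates and the bound follows directly. The remaining small-load regime is where the real work lies: here one cannot afford the slack inherent in independent random relaying, and the additive logarithm is removed by replacing the oblivious scatter with a load-balanced, near-deterministic distribution of messages across relays (realized in \cite{Lenzen:2011} through an optimal congested-clique sorting / token-distribution primitive), which guarantees that every link carries at most $\lceil (M_s+M_r)/\nc\rceil + O(1)$ messages. I would treat this balanced-relaying step as the technical heart of the argument and defer its details to the cited construction.
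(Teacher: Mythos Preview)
This theorem is not proved in the paper at all: it is quoted verbatim as an external result of \cite{Lenzen:2011} and used as a black box throughout Section~\ref{sec:more_algorithms}. There is therefore no ``paper's own proof'' against which to compare your attempt.

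As a standalone sketch, your two-phase Valiant routing with a balls-into-bins analysis is the standard opening move and correctly yields $O((M_s+M_r)/\nc+\log n)$ rounds w.h.p. You are also right that the additive $\log n$ is the real obstacle---the paper genuinely relies on the clean $O((M_s+M_r)/\nc)$ form to obtain its $O(1)$-round claims (e.g., Step~2 in the proof of Theorem~\ref{thm:vec_by_matrix_runtime}). However, your proposal does not remove that term: you explicitly ``defer its details to the cited construction'', which is precisely the reference the theorem is imported from. So your write-up reproduces the easy half and then invokes \cite{Lenzen:2011} for the hard half, which is circular as a proof but perfectly in line with how the present paper treats the statement. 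One small correction: the deterministic sorting / token-distribution primitive you gesture at in the last paragraph is the later result \cite{Lenzen:2013:Sorting} (stated separately here as Theorem~\ref{thm:lenzen_podc2013_sorting}); the 2011 result being cited is the randomized one, consistent with the ``with high probability'' clause.
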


This theorem provides a time-efficient procedure for messages delivery 
in a core that satisfies Axiom $\axiomClique$. 
Note that the result of the theorem holds with high probability, 
which implies that it exploits a randomized algorithm. 
Nevertheless, our algorithms, presented below, can be considered 
``mostly deterministic'', in the sense that all the {\em decisions} they make 
are deterministic. 
That is, the choices concerning which messages should be send where
during different stages of our algorithms are made deterministically,
and randomness is used only in the information delivery algorithm $\sndmsg$
of Theorem \ref{thm:lenzen_stoc2011}, which is used as a low-level procedure 
for routing messages from sources to destinations over a complete network. 
Hence the time bound of each of our algorithms holds with the same probability 
as those of the calls to $\sndmsg$ used in that algorithm.

In some of our algorithms,
% for calculating aggregate functions
we use the following result on distributed sorting in a complete network, presented in \cite{Lenzen:2013:Sorting}.

\begin{theorem}[\cite{Lenzen:2013:Sorting}]\label{thm:lenzen_podc2013_sorting}
Consider a complete network $G(V,E)$ with node ID's $[1,\ldots,n]$. 
Each node is given $n$ values. For simplicity assume all $n^2$ values are distinct\footnote{This limitation can be eliminated by chaining each value, with the node ID and its order at the node. Thus, each input value becomes unique.}. The following tasks can be performed deterministically in $\Theta(1)$ rounds.
\begin{enumerate}
\item Value learning (VL): Node $i$ needs to learn
the values with indices $[i(n - 1) + 1,\ldots, in]$ according to the total
order of all values.
\item Index Learning (IL): Node $i$ needs to determine the indices of its input (initial) values in the total order of all values.
\end{enumerate}
%
%Even if each node is required to determine the index of its input keys in the total order of the union of all input keys, the
%task still can be solved deterministically in a constant number of
%rounds.
\end{theorem}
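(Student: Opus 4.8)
The plan is to reduce both tasks (VL and IL) to a single subroutine that computes, for every one of the $n^2$ input values, its rank in the global total order, and then to move the values (or their computed ranks) to their destinations using an $O(1)$-round all-to-all routing primitive on the clique. Throughout I may assume, as stated, that the values are distinct (otherwise replace the value $v$ held as the $\ell$-th value of node $i$ by the triple $(v,i,\ell)$, which makes the order total and is trivially reversible). The routing primitive I would use is delivery of up to $O(\nc)$ messages per source and per destination in $O(1)$ rounds — the regime $M_s,M_r=O(\nc)$ of Theorem~\ref{thm:lenzen_stoc2011} (for the deterministic claim one invokes the deterministic variant of $\sndmsg$ that achieves the same bound when $M_s,M_r\le\nc$).

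\textbf{Reduction, assuming balanced splitters.} Suppose we have $n-1$ \emph{splitters} $q_1<\cdots<q_{n-1}$ (set $q_0=-\infty$, $q_n=+\infty$), residing one per node, such that each bucket $B_k=\{x: q_{k-1}<x\le q_k\}$ contains at most $O(\nc)$ input values; assign $B_k$ to node $k$. One all-to-all round (each node sends its splitter to every other node, one message per edge) makes all of $q_1,\dots,q_{n-1}$ known everywhere; then each node locally tags each of its $n$ values with its bucket index and routes it there. Each node sends its $n$ values and each node $k$ receives $|B_k|=O(n)$ values, so this is one call to the routing primitive, $O(1)$ rounds. Node $k$ sorts $B_k$ locally; a single all-to-all exchange of the sizes $|B_1|,\dots,|B_n|$ lets node $k$ compute the offset $\sum_{j<k}|B_j|$ and hence the exact global rank of every element of $B_k$. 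For IL, route each pair (original owner, rank) back to its origin ($O(n)$ per source and destination, $O(1)$ rounds). For VL, the value of global rank $r$ is currently held by the node owning its bucket, so route it to node $\lceil r/n\rceil$: each node sends $O(n)$ values, each node receives exactly $n$, again $O(1)$ rounds.

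\textbf{The obstacle: balanced splitters in $O(1)$ rounds.} This is the only remaining ingredient and is where I expect essentially all the difficulty to lie. A randomized construction is easy but loses a logarithmic factor: let each node mark each of its values independently with probability $\Theta(\log n/n)$, so $\Theta(n\log n)$ values are marked; by a Chernoff bound every interval delimited by $\Theta(\log n)$ consecutive marked values contains $O(n)$ input values with high probability, so taking every $\Theta(\log n)$-th marked value (after sorting the marked set) yields the desired splitters — but sorting $\Theta(n\log n)$ marked values, e.g.\ by having each node broadcast its $O(\log n)$ marks, costs $\Theta(\log n)$ rounds. Driving this down to $O(1)$ requires a \emph{deterministic} near-balanced splitter construction: each node sorts its $n$ values locally, a single $O(1)$-round transpose re-sorts them ``columnwise'', and one argues that a suitably chosen set of $n$ entries of the resulting doubly-sorted arrangement forms splitters whose buckets are within a constant factor of $n$. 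Proving that such a choice is simultaneously computable in $O(1)$ rounds and provably balanced — thereby avoiding both the $\Theta(\log n)$ broadcast cost and the $\Theta(\log\log n)$ cost of the recursive sample-sort alternative — is exactly the technical heart of~\cite{Lenzen:2013:Sorting}; granting it, the reduction above delivers VL and IL in $\Theta(1)$ rounds.
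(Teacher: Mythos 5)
There is nothing in the paper to compare your argument against: Theorem~\ref{thm:lenzen_podc2013_sorting} is not proved in this paper at all, it is imported as a black box from \cite{Lenzen:2013:Sorting}. Judged as a standalone proof, your write-up has a genuine gap, and you name it yourself: the deterministic construction of $n-1$ balanced splitters in $O(1)$ rounds is precisely the technical content of the cited result, and your argument simply grants it. The randomized fallback you sketch (marking values with probability $\Theta(\log n/n)$ and extracting every $\Theta(\log n)$-th mark) costs $\Theta(\log n)$ rounds and is not deterministic, so it cannot substitute for the missing step; and the final sentence gesturing at a ``doubly-sorted arrangement'' is a statement of intent, not an argument. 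The same issue affects your routing primitive: Theorem~\ref{thm:lenzen_stoc2011} as stated in this paper is randomized (it holds with high probability), whereas the theorem you must prove claims a \emph{deterministic} $\Theta(1)$-round bound; the deterministic constant-round routing with $O(n)$ messages per source and per destination that you invoke is again a result of \cite{Lenzen:2013:Sorting} itself, so you are implicitly assuming the paper you are trying to reprove.

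What you do have is a correct and standard reduction: given balanced splitters and deterministic all-to-all routing, bucketing, local sorting, prefix-sum of bucket sizes, and one more routing step yield both IL and VL in $O(1)$ rounds, and your handling of ties via $(v,i,\ell)$ triples and of VL destinations ($\lceil r/n\rceil$) is fine. Two small slips: in the $n$-node clique each balanced bucket holds $O(n)$ values, not $O(\nc)$ (that symbol denotes the core size elsewhere in this paper and is not the relevant quantity here), and similarly the per-node message bound in the routing step should read $O(n)$. But the reduction is the easy half; without a proof of the deterministic constant-round splitter selection (or some other deterministic constant-round sorting mechanism), the claimed $\Theta(1)$ deterministic bound is assumed rather than established.
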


\begin{observation}
Theorem \ref{thm:lenzen_podc2013_sorting} can be naturally extended to the case where each node initially holds $O(n)$ keys (instead of exactly $n$).
\end{observation}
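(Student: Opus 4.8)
\begin{proof sketch}{the Observation}
The plan is to reduce the ``$O(n)$ keys per node'' case to the exact‑$n$‑keys case of Theorem~\ref{thm:lenzen_podc2013_sorting} by \emph{padding} the input with dummy keys and \emph{simulating} a larger complete network. Suppose every node holds at most $cn$ keys for a known constant $c\ge 1$ (if the constant is not known a priori it can be learned in $O(1)$ rounds by forwarding the per‑node loads to node~$1$ and broadcasting back the maximum). Put $N=cn$, and let each real node $i$ host the $c$ consecutive virtual identifiers $(i-1)c+1,\ldots,ic$ of an $N$‑node complete network; since $c$ and $n$ are global, this hosting map is consistent and known to all. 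Each virtual node needs exactly $N$ values, so each real node pads its at‑most‑$cn$ real keys up to $c^2n=cN$ values using dummies, and splits the result arbitrarily into $c$ blocks of size $N$, one per hosted virtual node. To keep all $N^2$ values distinct and the accounting trivial, I would encode every padded value in $O(\log n)$ bits (possible since there are only $O(n^2)$ of them) as a pair whose leading bit flags ``dummy'', so that every dummy succeeds every real key in the total order, with ties among dummies broken by $(\text{host id},\text{local index})$; this is exactly the chaining permitted by the footnote of Theorem~\ref{thm:lenzen_podc2013_sorting}.

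Next I would check that one communication round of the $N$‑node virtual protocol costs only $O(1)$ rounds of the real $n$‑node network. Between two distinct real nodes $i,i'$ there are exactly $c^2$ virtual edges, each carrying $O(\log N)=O(\log n)$ bits per direction per virtual round; as $c$ is constant, that is $O(\log n)$ bits per real edge per virtual round, i.e.\ $O(1)$ real messages, while communication between two virtual nodes hosted on the same real node is free. Since the algorithm of Theorem~\ref{thm:lenzen_podc2013_sorting} runs in $\Theta(1)$ virtual rounds, the simulation runs in $\Theta(1)$ real rounds, and each real node stores only $cN=O(n)$ values, so we stay within the model.

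It remains to read off the two outputs. For \emph{Index Learning}: each virtual node learns the rank of each of its values among all $N^2$ padded values, and since every dummy is ordered after every real key, the rank of a real key among the padded values coincides with its rank among the real keys; hence each real node directly obtains the rank of each of its original keys. For \emph{Value Learning}: the real node $i$, hosting virtual ids $(i-1)c+1,\ldots,ic$, learns the padded values of ranks $(i-1)c^2n+1,\ldots,ic^2n$; discarding the dummies, it has learned a contiguous run of at most $c^2n=O(n)$ real keys in sorted order, which is already a legitimate ``extended VL''. If one wants the perfectly balanced variant, in which node $i$ holds the $i$‑th block of $\lceil K/n\rceil$ of the $K$ real keys, one extra routing step does it: after Index Learning each node knows the destination of each of its keys, every node sends at most $cn$ keys and receives at most $\lceil K/n\rceil\le cn$ keys, so the procedure $\sndmsg$ of Theorem~\ref{thm:lenzen_stoc2011} delivers everything in $O(1)$ rounds.

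The step I expect to be the crux is the congestion bookkeeping of the second paragraph: one must verify that co‑hosting $c$ virtual nodes on a single real node inflates the per‑edge load by only a constant factor, independent of $n$ --- this is where the ``$O(\cdot)$'' in the hypothesis is genuinely used, and it is the reason the extension reaches $O(n)$ keys rather than, say, $n\cdot\mathrm{polylog}(n)$. A minor point to watch is keeping the padded/chained values within $O(\log n)$ bits, which holds because only $O(n^2)$ of them exist.
\end{proof sketch}
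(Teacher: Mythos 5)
Your argument is correct. Note, however, that the paper offers no proof at all for this Observation --- it is stated as immediate, the implicit justification being that Lenzen's deterministic sorting routine and its analysis tolerate a constant-factor increase in the per-node load, so the bounds degrade only by constants. Your route is genuinely different and more self-contained: you treat Theorem \ref{thm:lenzen_podc2013_sorting} as a black box on a \emph{virtual} clique of $N=cn$ nodes, with each real node hosting $c$ virtual nodes, padding with dummy keys that are ordered after all real keys (consistent with the footnote's chaining trick), and charging the $c^2$ virtual edges between any two real nodes to the single real edge, which costs only $O(1)$ CONGEST rounds per virtual round since $c$ is a constant. The rank and value read-outs are handled correctly (dummies placed last do not perturb the ranks of real keys, and each real node receives a contiguous sorted run of $O(n)$ real keys). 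What your approach buys is modularity --- no need to open up Lenzen's algorithm or re-verify its internal constants --- at the price of a slightly heavier construction (virtual IDs, padding, load-learning preamble). Two minor caveats: the optional rebalancing step you add at the end invokes $\sndmsg$ of Theorem \ref{thm:lenzen_stoc2011}, which is randomized, so that variant holds only with high probability, whereas the Observation as used in the paper (and your core simulation) remains deterministic; and strictly speaking the $O(c^2\log n)$ bits crossing a real edge per virtual round need $O(c^2)$ rounds, not ``$O(1)$ messages'' --- still $O(1)$ for constant $c$, as you intend.
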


\subsection{Matrix transposition (MT)}

Initially, each node in $V$ holds one row of an $O(k)$-sparse matrix $A$ (along with its index).
The {\em matrix transposition (MT)} task is to distributively calculate the matrix $A^T$, and store its rows 
in such a way that the node that stores row $A_{i,*}$ will eventually store 
row $A_{i,*}^T$.
We start with a lower bound.

\begin{theorem}\label{clm:transpose_lower}
Any algorithm for transposing an $O(k)$-sparse matrix on an arbitrary network of diameter $D$ requires $\Omega(D)$ rounds. On a $\C\P$-network, $\Omega(k)$ rounds are required.
\end{theorem}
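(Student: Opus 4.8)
The statement has two parts: an $\Omega(D)$ lower bound on an arbitrary network of diameter $D$, and an $\Omega(k)$ lower bound on a $\C\P$-network. For the first part, the plan is a standard information-locality argument. I would construct a matrix in which a single entry of the transpose forces information to travel the full diameter: pick two nodes $u,v$ at distance $D$, where $u$ holds row $A_{i,*}$ and $v$ holds row $A_{j,*}$, and let the only nonzero off-diagonal entry relevant to $u$ be $A(j,i)$ — which lives in $v$'s row but must end up at $u$ in $A^T_{i,*}$. Since the value of $A(j,i)$ can be set adversarially (it fits in $O(\log n)$ bits, but even a single bit suffices), any correct algorithm must propagate at least one bit from $v$ to $u$, and in the CONGEST (indeed LOCAL) model this takes $\Omega(D)$ rounds by a simple fooling/indistinguishability argument: two inputs differing only in $A(j,i)$ cannot be distinguished by $u$ in fewer than $\mathrm{dist}(u,v)=D$ rounds.

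For the second part, the diameter of a $\C\P$-network is $\Theta(1)$ by Claim~\ref{clm:1}, so the first bound gives nothing, and I need a congestion-based argument instead. The idea is to exhibit an $O(k)$-sparse matrix whose transposition forces $\Omega(k)$ messages across a single edge (or into/out of a single low-degree node). Concretely, on a $\C\P$-network with $m=O(n)$, a periphery node $v$ may have degree as small as $1$, attached via its representative edge. I would choose the matrix so that row $A_{i,*}$ held at such a node $v$ has $\Theta(k)$ nonzero entries $A(i,j_1),\ldots,A(i,j_k)$ with the columns $j_1,\ldots,j_k$ spread among $\Theta(k)$ distinct other nodes, while simultaneously column $A_{*,i}$ has $\Theta(k)$ nonzero entries originating at $\Theta(k)$ distinct nodes — so that $A^T_{i,*}$, which $v$ must learn, contains $\Theta(k)$ values not initially known to $v$. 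Since each of the (up to $O(\log n)$-bit) entries is an independent adversarial value, $v$ must receive $\Omega(k)$ messages, and since $v$'s incident edges number $O(1)$ (or the relevant cut is a single edge by construction), this requires $\Omega(k)$ rounds in the CONGEST model. One must check this configuration is realizable: $k$ can be as large as $\Theta(\sqrt{n})$ for an $m=O(n)$ network (a node's degree is at most $O(\sqrt m)$ via Theorem~\ref{thm:cp_property2}), and the sparsity bound ($O(k)$ nonzeros per row and column) is respected by the construction, so there is no contradiction.

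The main obstacle is the second part: I must be careful that the bad instance genuinely respects the $O(k)$-sparsity constraint \emph{in every row and column} while still forcing $\Theta(k)$ bits across a bottleneck — it is easy to accidentally overload some column. The cleanest fix is to let the bottleneck be the single edge between a degree-one periphery node $v$ and its representative $r(v)$: everything $v$ needs for $A^T_{i,*}$ must cross that one edge, and everything $v$ must ship out of $A_{i,*}$ must also cross it, giving a clean $\Omega(k)$ bound with no need to reason about the internal structure of the core. I would also note the matching upper bound $O(k)$ is established later (Theorem~\ref{thm:vec_by_matrix_runtime}-style analysis), so the bound is tight; but that is not needed here.
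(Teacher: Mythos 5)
Your proposal is correct and follows essentially the same route as the paper: for the $\Omega(D)$ bound, a nonzero entry $A(j,i)$ residing with row $A_{j,*}$ must reach the node holding $A_{i,*}$ at distance up to $D$, and for the $\Omega(k)$ bound, a degree-one periphery node must receive the $\Theta(k)$ entries of its transposed row over a single edge in the CONGEST model. Your extra side remark tying $k$ to node degrees (via $O(\sqrt m)$) is unnecessary --- the matrix sparsity parameter is independent of the network's degrees --- but it does not affect the validity of the argument.
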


\begin{proof}
Consider a nonzero entry $A(i,j)$, where $j\neq i$.
Consider the nodes $u$ and $v$ that initially store $A_{i,*}$ and $A_{j,*}$
respectively. Clearly, in any algorithm for MT, $A(i,j)$ should be delivered 
to the node $v$ (which is required to eventually obtain $A_{*,j}=A^T_{j,*}$).
Since the distance $dist(u,v)$ may be as large as the diameter, 
the lower bound is $\Omega(D)$ rounds. 

For a $\C\P$-network, the lower bound on MT is $\Omega(k)$, since there are inputs for which row $A^T_{i,*}$ has $k$ nonzero values, which must be delivered to the node that initially has row $A_{i,*}$. There are $\C\P$-networks whose minimum degree is $1$ (see Figure \ref{fig:examples}(I) for an illustration) and hence delivering $\Omega(k)$ messages will require $\Omega(k)$ communication rounds.
\end{proof}

\paragraph{\bf Algorithm $\algoTranspose$}
MT generating $A^T$ on a $\C\P$-network $\GVECP$.

%\begin{enumerate}

\noindent (1) Each $u\in V$ sends its row (all the nonzero values with their indices in $A$) to its representative $r(u)\in \C$. Now, each representative has $O(\nc)$ rows of $A$ (or, $O(k\nc)$ entries of $A$).

\noindent (2) Each representative sends each entry $A(i,j)$ it has to the node in $\C$ that is responsible for the row $A^T_{j,*}$. Every node in $\C$ is responsible for the rows of $A^T$ indexed $1+(n/\nc)(i-1),\ldots,(n/\nc)i$. (Assume $n/\nc$ is integral.)

\noindent $[*$ Now, each node in $\C$ stores $O(n/\nc)$ rows of $A^T$. $*]$

\noindent (3) Each node $u\in V$ that initially stored the row $i$ of $A$, requests $A^T_{i,*}$ from its representative. The representative gets the row from the corresponding node in $\C$, and sends it back to $u$. 

%\end{enumerate}

\begin{theorem}\label{thm:upper_transpose}
On a $\C\P$-network $\GVECP$, transposing an $O(k)$-sparse matrix can be completed in $O(k)$ rounds with high probability. 
\end{theorem}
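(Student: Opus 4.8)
The plan is to bound the running time of each of the three steps of Algorithm \algoTranspose{} separately, using the structural implications of the axioms together with Lenzen's routing theorem (Theorem \ref{thm:lenzen_stoc2011}). The key quantities to track throughout are how many matrix entries each core node must send and how many it must receive, since these drive the cost of the internal-core routing via $\sndmsg$.

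First I would analyze step (1). Each periphery node sends its row, consisting of $O(k)$ nonzero entries, to its representative; by Axiom $\axiomConvergecast$ (formally, by Theorem \ref{thm:cp_property2}(2) and the $\Theta(1)$-convergecaster property, as used in the proof of Claim \ref{clm:steps_runtime}), each core node receives a constant number of representative-requests on each of its $O(\nc)$ periphery-facing edges, hence represents $O(\nc)$ nodes and collects $O(k\nc)$ entries of $A$. Since each edge carries $O(\log n)$ bits per round and each representative receives $O(k)$ entries along each incident periphery edge, this step completes in $O(k)$ rounds. Next, for step (2), each representative holds $O(k\nc)$ entries and must forward each entry $A(i,j)$ to the unique core node responsible for row $A^T_{j,*}$; this is an instance of routing over the (clique-emulated) core where each node has $M_s = O(k\nc)$ messages to send. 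For the receive bound, the node responsible for rows $1+(n/\nc)(i-1),\dots,(n/\nc)i$ of $A^T$ receives at most the total number of nonzero entries in those $n/\nc$ columns of $A$, which is $O(k \cdot n/\nc) = O(k\nc)$ since $\nc = \Theta(\sqrt n)$ by the corollary (assuming $m = O(n)$; more generally one uses $n/\nc \le O(\nc)$ which follows from $\nc = \Omega(\sqrt n)$), so $M_r = O(k\nc)$ as well. By Theorem \ref{thm:lenzen_stoc2011}, and since Axiom $\axiomClique$ lets us treat $\C$ as a complete network on $\nc$ nodes, this step takes $O((M_s + M_r)/\nc) = O(k)$ rounds with high probability. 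Finally, step (3) is symmetric to steps (1)--(2) run in reverse: each node requests its row of $A^T$ from its representative, the representative fetches it from the appropriate core node (another $\sndmsg$ call with the same $O(k\nc)$ bounds, hence $O(k)$ rounds), and sends it back along the convergecast route in $O(k)$ rounds.

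Summing the three steps gives $O(k)$ rounds overall, and the only randomness is inside the calls to $\sndmsg$, so the bound holds with high probability, matching the lower bound of Theorem \ref{clm:transpose_lower}. The main obstacle I anticipate is the receive-side accounting in step (2): one must argue carefully that no core node is the destination of too many entries, i.e. that the $n/\nc$ rows of $A^T$ assigned to a single core node collectively contain only $O(k\nc)$ nonzeros. This uses both the sparsity assumption (each column of $A$, equivalently each row of $A^T$, has $O(k)$ nonzeros) and the bound $n/\nc = O(\nc)$ coming from $\nc = \Omega(\sqrt n)$ (Theorem \ref{thm:cp_property2}(1)); getting the interplay of these two facts right, and likewise checking that $M_s = O(k\nc)$ rather than something larger, is where the proof needs the most care.
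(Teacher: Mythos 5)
Your proposal is correct and follows essentially the same route as the paper's proof: a step-by-step analysis using $\axiomConvergecast$ for the periphery--core transfers, the fact that each representative handles $O(\nc)$ nodes (hence $O(k\nc)$ entries), and Theorem \ref{thm:lenzen_stoc2011} with $M_s=O(k\nc)$, $M_r=O(kn/\nc)$ inside the clique-emulated core, giving $O(k)$ rounds per step. Your extra care on the receive-side bound (using $n/\nc=O(\nc)$ from $\nc=\Omega(\sqrt n)$) is exactly the implicit step the paper glosses over, so nothing is missing.
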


%Before we start with the proof, we present the following theorem from \cite{Lenzen:2011}.

\begin{proof}
%[Proof of Theorem \ref{thm:upper_transpose}]
Consider Algorithm $\algoTranspose$ and the $\C\P$-network $\GVECP$.
Step 1 of the algorithm will take $O(k)$ rounds, since each row has up to $k$ nonzero entries and sending one entry takes $O(1)$ due to Axiom $\axiomConvergecast$. 
%($O(1)$ convergecast). 
Now each representative has $O(k\nc)$ values, since it represents up to $O(\nc)$ nodes in $\P$ (due to Axiom $\axiomBoundary$).

In the beginning of Step 2, each representative knows the destination for each of the $A(i,j)$ entries it has (since, by agreement, each node in $\C$ is responsible for collecting entries for specific rows of $A^T$). So, it will send $O(k\nc)$ messages, each one to a specific single destination. Since each node in $\C$ is responsible for $O(n/\nc)$ rows of $A^T$, it will receive $O(k n/\nc)$ messages. Thus, using Axiom $\axiomClique$ and Theorem \ref{thm:lenzen_stoc2011},
the running time is $O(k)$.

At Step 3, a single row (with $O(k)$ nonzero entries) is sent, by each node, to its representative (which takes $O(k)$ time, due to the Axiom $\axiomConvergecast$). Then the requests are delivered to the appropriate nodes in $\C$, and the replies with the appropriate rows of $A^T$ are received back by the representatives. All this takes $O(k)$ rounds, due to Axiom $\axiomClique$ and Theorem \ref{thm:lenzen_stoc2011}. Then the rows of $A^T$ are delivered to the nodes that have requested them. Due to the Axiom $\axiomConvergecast$ this will also take $O(k)$ rounds.
\end{proof}

We now show the necessity of the Axioms $\axiomBoundary$, $\axiomClique$ and $\axiomConvergecast$ for achieving $O(k)$ running time. 

\begin{theorem}\label{thm:necessity-matrix1}
For each $X \in\{B,E,C\}$ there exist a family of $n$-node partitioned networks
${\cal F}_X = \{ G_X(V,E,\C,\P)(n)\}$ 
that satisfy all axioms except $\mathcal{A}_X$,
%that do not satisfy Axiom $\mathcal{A}_X$, but satisfy the other two axioms,
and input matrices of size $n\times n$ for every $n$,
such that the time complexity of \emph{any} matrix transposition (MT) algorithm 
on the networks of ${\cal F}_i$, with the corresponding inputs, is $\Omega(n)$.
\end{theorem}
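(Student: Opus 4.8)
The plan is to prove this necessity result by exhibiting, for each axiom $\mathcal{A}_X$, a concrete family of partitioned networks on which matrix transposition inherently requires $\Omega(n)$ rounds — despite the network satisfying the other two axioms. The natural strategy is to reuse (or lightly modify) the three hard instances $G_B$, $G_E$, $G_C$ already constructed in the proof of Theorem~\ref{thm:MST-necessity}, since those graphs were precisely engineered so that some pair of designated nodes $s,r$ (or a bottleneck node/edge) forces a large amount of information to traverse a thin cut or a long path. For matrix transposition, the role of the ``special" edge weights is replaced by the nonzero structure of the input matrix $A$: I would plant nonzero entries so that a full row of $A^T$ must be reassembled at a node far from (or cut off from) the nodes holding the relevant entries of $A$.

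For $X=B$ I would take $G_B$ (Core a $k$-clique, each core node attached to $k^3 = \Theta(n^{3/4}\cdot\,?)$ periphery leaves, $n=\Theta(k^4)$). Here the obstacle is the unbalanced boundary: a single core node $w$ has $d_{\mathrm{out}}(w,\C)=\Theta(k^3)$ periphery neighbors each of degree $1$. Choose the input so that the $k^3$ rows $A_{i,*}$ held by those leaves each have a single nonzero entry $A(i,j_0)$ in one common column $j_0$; then row $A^T_{j_0,*}$ has $\Theta(k^3)$ nonzero entries, all of which must be delivered to whichever node holds $A_{j_0,*}$ — and since each periphery leaf has a single incident edge, shipping these $\Theta(k^3)$ values off those leaves through the cut $\partial(\{$leaves$\})$ already costs $\Omega(k^3)$ rounds; with $k^3 = \Theta(n^{3/4})$... actually to get $\Omega(n)$ I would rather pin the bottleneck at the single core node: route all $\Theta(k^3)$ values through $w$, whose total incident degree is $\Theta(k^3)+k = \Theta(k^3)$, so that $\Omega(k^3)$ rounds are still not forced — so instead I would make the relevant $n-o(n)$ entries all destined to leaves hanging off \emph{one} core node, forcing $\Omega(n)$ traffic across that node's cut. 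For $X=C$, $G_C$ (core clique plus a periphery cycle of length $\Theta(k^2)=\Theta(n)$ hanging off one core node): place a nonzero entry linking a row stored at node $s$ deep in the cycle with a column whose $A^T$-row is stored at node $r$ elsewhere, forcing at least one message across the $\Omega(n)$-length cycle — exactly as in the MST argument, but now even a \emph{single} transposed entry suffices. For $X=E$, $G_E$ (core is $k$ cliques joined only through the hub $u$, $n=\Theta(k^3)$): put $\Theta(k^2)$ nonzero entries in rows held on the $s$-side whose transposes must land on the $r$-side; since every $s$--$r$ path not through $u$ has length $\ge k$ and $u$ has degree $k$, delivering $\Theta(k^2)$ messages costs $\Omega(k^2/\,?)$ — to reach $\Omega(n)=\Omega(k^3)$ I would instead plant $\Theta(k^3)$ entries crossing the $u$-cut (one column per clique, $k$ nonzeros per column times $k^2$... ), so that $u$'s degree-$k$ bottleneck forces $\Omega(k^2)$ and, by loading all $n$ entries' worth of $A^T$ onto one side, $\Omega(n)$ rounds.

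The skeleton of each argument is the same three-move template: (1) invoke Theorem~\ref{thm:independent}/the constructions from Theorem~\ref{thm:MST-necessity} to note $G_X \in \mathcal{F}_X$ satisfies the other two axioms; (2) specify an input matrix $A$ whose nonzero pattern forces $\Theta(n)$ units of information to move from one side $S$ of a sparse cut to the other, because the target rows of $A^T$ are allotted (by the MT specification, ``the node that stores $A_{i,*}$ eventually stores $A^T_{i,*}$") to nodes stranded on the far side; (3) conclude by a cut/congestion counting argument — the cut $\partial(S)$ (or a single bottleneck node/edge) has capacity $O(\log n)$ bits per round per edge and $o(n)$ edges (in fact $O(1)$ where needed), so $\Omega(n)$ rounds are necessary, and this lower bound holds for \emph{any} (even randomized) algorithm since it is an information-theoretic cut bound.

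The main obstacle I anticipate is getting the constants to line up so the bound is genuinely $\Omega(n)$ rather than $\Omega(n^{3/4})$ or $\Omega(\sqrt n)$: the three graphs have $n=\Theta(k^4),\Theta(k^3),\Theta(k^2)$ respectively, so to squeeze a full $\Omega(n)$ out I must arrange that a $\Theta(1)$ fraction of \emph{all} $\Theta(n)$ transposed entries must funnel across a cut of $O(1)$ edges (or through one bottleneck node), not merely that some polynomial-in-$k$ number of entries cross a wide cut. For $G_C$ this is automatic (one entry across an $\Omega(n)$-diameter line); for $G_B$ and $G_E$ I expect to need to redefine the family slightly — e.g.\ concentrate all periphery attachments that receive the ``interesting" rows onto a single core node or single clique, turning the degree/cut bottleneck into a genuine $\Omega(n)$ barrier while preserving the other two axioms (the concentrated node still has balanced boundary only in the $E,C$ cases, and in the $B$ case the boundary is \emph{meant} to be violated, so there is freedom). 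Checking that these modified families still satisfy $\mathcal{A}_E,\mathcal{A}_C$ (resp.\ $\mathcal{A}_B,\mathcal{A}_C$, resp.\ $\mathcal{A}_B,\mathcal{A}_E$) is the routine-but-careful part of the argument.
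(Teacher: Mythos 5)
Your template (keep two axioms, plant a nonzero pattern that forces traffic across a thin cut, then count) is the right kind of argument, and your $X=C$ case does work: a single entry that must travel along the $\Theta(n)$-length cycle of $G_C$ (or, equivalently, any diameter-$\Omega(n)$ example) already gives $\Omega(n)$. But for $X=B$ and $X=E$ your constructions, as stated, do not reach $\Omega(n)$, and this is a genuine gap rather than a constant-chasing issue. In $G_B$ ($n=\Theta(k^4)$), the set consisting of one core node $w$ and its $k^3$ leaves has an edge cut of only $\Theta(k)$ core edges, so even routing ``$n-o(n)$ entries to the leaves of $w$'' forces only $\Omega(n/k)=\Omega(n^{3/4})$ rounds; the only way to push this to $\Omega(n)$ within $G_B$ is either to aim all entries at a single degree-$1$ leaf (which requires a column with $\Theta(n)$ nonzeros, so $k=\Theta(n)$, making the $\Omega(n)$ bound vacuous against the $O(k)$ upper bound and true even on networks satisfying all three axioms) or to cross $\Theta(k^5)$ entries over the $\Theta(k)$-edge cut, which requires a substantially denser and more carefully specified input than anything in your sketch. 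Similarly for $G_E$ ($n=\Theta(k^3)$, constant diameter): the hub $u$ has degree $k$, so the $\Theta(k^2)$ or $\Theta(k^3)$ crossing entries you propose yield only $\Omega(k)$ or $\Omega(k^2)=\Omega(n^{2/3})$; you would need $\Theta(k^4)$ entries forced through $u$ to get $\Omega(n)$, and your proposal trails off exactly at this point. You noticed the mismatch (``the main obstacle I anticipate\dots'') but did not resolve it, so two of the three cases are unproved.

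The paper avoids all of this by not reusing the MST-necessity graphs at all. It takes the three independence examples: for $\axiomBoundary$ the dumbbell $D_n$, where the cut between the two stars is a \emph{single} edge, together with an $O(1)$-sparse input (nonzeros $A(n/2+1,1),\dots,A(n,n/2)$, rows split between the two stars) so that $n/2$ entries must cross one edge, giving $\Omega(n)$ immediately; for $\axiomClique$ and $\axiomConvergecast$ it uses the sun $S_n$ and lollipop $L_n$, both of diameter $\Omega(n)$, and simply invokes the generic $\Omega(D)$ lower bound of Theorem~\ref{clm:transpose_lower}. If you want to salvage your route, the cleanest fix is to replace $G_B$ and $G_E$ by families whose relevant cut has $O(1)$ edges (as in $D_n$) or whose diameter is $\Omega(n)$ (as in $S_n$, $L_n$), and keep the input matrix sparse so the $\Omega(n)$ bound genuinely contrasts with the $O(k)$ upper bound on core-periphery networks.
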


\begin{proof}
Consider the following cases where in each case, one of the axioms is not satisfied, while the other two are satisfied.

\par\noindent 
Necessity of $\axiomBoundary$: 
Consider the family of dumbbell partitioned networks $D_n$.
As discussed earlier, Axiom $\axiomBoundary$ is violated, while the others hold. 
Let $A$ be a matrix where, at least, the following $n/2$ entries are nonzero 
(assume $n/2$ is even): $A(n/2+1,1), A(n/2+2,2),\ldots,A(n,n/2)$. 
Then we input the rows $A_{1,*}-A_{n/2,*}$ to the nodes in the first star of of $D_n$, 
and rows $A_{n/2+1,*}-A_{n,*}$ to the nodes in the second star of $D_n$. 
Clearly, the entries we specified before are initially located in the second 
star, but they all must be delivered to the first star 
(by the problem definition, an entry $A(i,j)$ should eventually be stored 
in a node that initially has row $A_{j,*}$). 
Since there is only one edge connecting the stars, 
any distributed algorithm for the specified task will take $\Omega(n)$ rounds.

\par\noindent 
Necessity of $\axiomClique$:
Consider the family of sun partitioned networks $S_n$.
As discussed earlier, Axiom $\axiomClique$ is violated, while the others hold. 
The diameter of $S_n$ is $\Omega(n)$, hence, any distributed MT algorithm 
will require $\Omega(n)$ communication rounds 
(due to the lower bound discussed earlier).

\par\noindent 
Necessity of $\axiomConvergecast$:
Consider the family of lollipop partitioned networks $L_n$.
As discussed earlier, Axiom $\axiomConvergecast$ is violated, while the others 
hold. Again, the diameter of $L_n$ is $\Omega(n)$, hence, 
any distributed MT algorithm requires $\Omega(n)$ rounds.
\end{proof}

\subsection{Vector by matrix multiplication (VMM)}

Let $s$ be a vector of size $n$, and $A$ be a square $n\times n$ 
$O(k)$-sparse matrix.
Initially, each node in $V$ holds one entry of $s$ (along with its index), and one row of $A$ (along with its index).
The {\em vector by matrix multiplication (VMM)} task is to distributively calculate the vector $s'=sA$, and store its entries at the corresponding nodes in $V$, such that the node that initially stored $s(i)$ will store $s'(i)$. We start with the lower bound.

\begin{theorem}\label{clm:vertex_by_matrix_lower}
Any algorithm for the multiplication of a vector by an $O(k)$-sparse matrix (VMM) on any network requires $\Omega(D)$ rounds. On a $\C\P$-network, $\Omega(k/\log n)$ rounds are required.
\end{theorem}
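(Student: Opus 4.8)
The plan is to prove the two lower bounds separately, since they arise from different obstructions. The $\Omega(D)$ bound on an arbitrary network of diameter $D$ is the easy part and parallels the argument used for matrix transposition: I would exhibit a sparse matrix $A$ and a vector $s$ such that the output entry $s'(j) = \sum_i s(i) A(i,j)$ genuinely depends on an input entry $s(i)$ stored at a node $u$ at distance $\Theta(D)$ from the node $v$ that must eventually hold $s'(j)$. Concretely, take $A$ with a single nonzero off-diagonal entry $A(i,j)=1$ for a pair $(i,j)$ with $\mathrm{dist}(u,v)=\Theta(D)$, where $u$ holds $s(i)$ and row $A_{i,*}$ and $v$ holds $s(j)$; then $s'(j)$ cannot be determined at $v$ without information flowing across the full diameter, so $\Omega(D)$ rounds are needed. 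One must be slightly careful that the dependence is real — perturbing $s(i)$ changes $s'(j)$ — which is immediate once $A(i,j)\neq 0$.

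For the $\Omega(k/\log n)$ bound on a $\C\P$-network, the idea is an information-theoretic (congestion) argument exploiting the fact, already established in the excerpt, that a $\C\P$-network may have minimum degree $1$ (Figure~\ref{fig:examples}(I)). I would pick a $\C\P$-network containing a periphery node $v$ of degree $1$, and design $A$ and $s$ so that the single output entry $s'(j)$ that $v$ must compute depends on $\Omega(k)$ independent input values held at other nodes — for instance, let column $A_{*,j}$ have $k$ nonzero entries $A(i_1,j),\dots,A(i_k,j)$, placed at $k$ distinct nodes, each also holding an $s(i_\ell)$ that ranges over a domain of $n$ values (so $\Theta(\log n)$ bits of genuine entropy each, and the $k$ partial products $s(i_\ell)A(i_\ell,j)$ are mutually independent so no aggregation along the way can compress them below $\Theta(k\log n)$ bits in the worst case). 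Since $v$ has a single incident edge carrying $O(\log n)$ bits per round, and it must receive $\Omega(k\log n)$ bits of information to pin down $s'(j)$, this forces $\Omega(k/\log n)$ rounds. This matches the $1/\log n$ loss relative to the MT bound and explains why the statement is $\Omega(k/\log n)$ rather than $\Omega(k)$.

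The main obstacle is making the congestion/entropy argument fully rigorous: one has to argue that no clever distributed preprocessing lets the network deliver the value of $s'(j)$ across $v$'s single edge in fewer bits, i.e., that an adversary can choose the $s(i_\ell)$ so that the map from the $k$-tuple of relevant inputs to $s'(j)$ is ``spread out'' enough (e.g. choosing the nonzero $A(i_\ell,j)$ and a suitable range of $s$-values so that distinct input tuples yield distinct outputs, giving $\Omega(k\log n)$ distinguishable cases). A fooling-set / cut argument across the degree-$1$ edge of $v$ handles this: if the protocol used fewer than $k/(c\log n)$ rounds for small $c$, the total communication across that edge is $o(k\log n)$ bits, so by pigeonhole two input configurations with different values of $s'(j)$ produce identical transcripts at $v$, a contradiction. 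The remaining steps — verifying that the chosen network is genuinely a $\C\P$-network and that the matrix is $O(k)$-sparse — are routine.
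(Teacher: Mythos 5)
Your first part ($\Omega(D)$ on general networks) is fine and matches the paper's argument. The second part, however, has a genuine gap: the congestion/entropy argument at the degree-$1$ periphery node $v$ does not work for VMM, because the only thing $v$ must learn is the single aggregate value $s'(j)=\sum_\ell s(i_\ell)A(i_\ell,j)$. Since every entry of $A$ and $s$ fits in $O(\log n)$ bits, this sum is a single $\mathrm{poly}(n)$-bounded integer, i.e.\ $O(\log n)$ bits; the rest of the network can aggregate the partial products internally and push the final value across $v$'s one edge in $O(1)$ rounds. Your proposed fix --- choosing the inputs so that distinct $k$-tuples give distinct outputs, yielding $\Omega(k\log n)$ ``distinguishable cases'' --- is impossible under the model's conventions: $s'(j)$ takes only $2^{O(\log n)}$ possible values, so no choice of $O(\log n)$-bit entries makes the map from the $k$-tuple to $s'(j)$ injective. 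Consequently the pigeonhole/fooling-set step fails (two inputs with identical transcripts at $v$ need not have different correct values of $s'(j)$), and the argument as written only yields $\Omega(1)$. This is exactly the difference from matrix transposition, where the node must receive $k$ separate non-aggregable entries and a pure congestion argument does give $\Omega(k)$.

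The paper closes this gap by a reduction from two-party \emph{deterministic} \textsc{Equality} on $k$-bit strings rather than by counting bits of the output. Take the $\C\P$-network of Figure~\ref{fig:examples}(I) with a degree-$1$ node $u$ holding $s(1)$ and row $A_{1,*}$, and another node $v$ holding $s(2)$ and $A_{2,*}$; embed $x$ as $A(1,i)=x(i)$, $y$ as $A(2,i)=y(i)$, set $s(1)=s(2)=1$ and all else to $0$, so that $s'(i)=x(i)+y(i)$ and $x=y$ iff every $s'(i)\in\{0,2\}$. Although $s'$ is stored distributedly, the axioms let the network check this predicate and disseminate one indicator bit in $O(1)$ extra rounds. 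A $T$-round VMM protocol therefore solves \textsc{EQ} while only $O((T+O(1))\log n)$ bits cross $u$'s single edge, and since deterministic \textsc{EQ} requires $k$ bits, $T=\Omega(k/\log n)$. Note this is where the $1/\log n$ loss really comes from (bits of communication complexity versus $O(\log n)$-bit messages), not from any difficulty in compressing the partial products; also note the bound, like the paper's other necessity results, is inherently a deterministic-protocol bound, since randomized \textsc{EQ} is exponentially cheaper.
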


\begin{proof}
The $\Omega(D)$ time lower bound for VMM on an arbitrary network follows since in order to obtain $s'(1)$, we need, at least, to multiply $s(1)$ by $A(1,1)$ (assuming $s(1)\neq 0$ and $A(1,1)\neq 0$), which might take $\Omega(D)$ rounds in case $s(1)$ and $A(1,1)$ are located at different nodes $u$ and $v$ at distance $dist(u,v)=D$. 

%TODO... lower bound for a CP-network. Use reduction from the EQ problem.

Now we show that there exists a $\C\P$-network for which the lower bound on VMM is $\Omega(k/\log n)$ rounds.
Consider a $\C\P$-network, as in Figure \ref{fig:examples}(I). Let $u$ be a node in $\P$ (whose degree is $1$). Let $v$ be any other node in $V$. Assume that $u$ initially stores the row $A_{1,*}$,, and the entry $s(1)$, while $v$ stores row $A_{2,*}$ and the entry $s(2)$.
%
%Let us denote the core's nodes as $u$ and $v$. In $O(k)$ rounds $u$ (resp. $v$) can collect all the rows of $A$ and entries of $s$ stored at the nodes of $\P$ connected to $u$ (resp. $v$). So, assuming $n/2$ is integer, after $O(k)$ rounds, $u$ and $v$ have each $n/2$ rows of $A$ and $n/2$ entries of $s$. Assume also an input for which $u$ has rows $A_{1,*},A_{2,*},\ldots,A_{n/2,*}$ and entries $s(n/2+1),s(n/2+2),\ldots,s(n)$, and $v$ has all the remaining rows of $A$ and entries of $s$.

Next, we show a reduction from the well-known \emph{equality problem} (EQ), in which two parties are required to determine whether their input vectors 
$x,y\in \{0,1\}^k$ are equal.
Assuming the existence of a procedure $P$ for our VMM problem, we use it to solve the EQ problem.
Given input vectors $x,y$ for the EQ problem (at $u$ and $v$ respectively), we create an input for the VMM problem in the following way. Node $u$ assigns $A(1,i)=x(i)$, for every $i\in [1,\ldots,k]$ and $s(1)=1$; while node $v$ assigns $A(2,i)=y(i)$ for every $i\in [1,\ldots,k]$ and $s(2)=1$.
All the other entries of $A$ and $s$ are initialized to $0$. It follows that
$s'(i) = \sum_{j=1}^n s(j)A(j,i) = A(1,i)+A(2,i) = x(i)+ y(i)$ for every $i\in [1,\ldots,k]$.
Given the value of $s'(i)$, one can decide whether $x(i)=y(i)$ for every $i\in [1,\ldots,k]$, since clearly,
$x(i)=y(i)$ if'f  $s'(i)\in \{0,2\}$ (and otherwise $s'(i)=1$). 
Notice that the vector $s'$ is stored distributedly in the network, one entry in each node. But the indication to $v$ and $u$, whether all the entries are in $\{0,2\}$, can be delivered in $O(1)$ rounds in the following way. Each node in $\P$ sends its entry of $s'$ to its representative, who checks all the received entries and sends an indication bit to all the other nodes in $\C$. So, every node in $\C$ knows now whether all the entries in $s'$ are in $\{0,2\}$ (actually, we are interested only in the first $k$ entries). Representatives can now inform the nodes in $\P$ they represent in $O(1)$ rounds.
It follows that using procedure $P$, one can solve the EQ problem, which is known to require at least $k$ bits of communication. Therefore, assuming that each message has $O(\log n)$ bits, our problem requires $\Omega(k/\log n)$ communication rounds.
\end{proof}

\paragraph{\bf Algorithm $\algoMatrixByVector$}
VMM on a $\C\P$-network $\GVECP$.

%\begin{enumerate}

\noindent (1) Each $u\in V$ sends the entry of $s$ it has (along with its index) to its representative $r(u)\in \C$ (recall that if $u\in\C$ then $r(u)=u$). 

\noindent (2) $\C$ nodes redistribute the $s$ entries among them, so that the node with ID $i$ stores indices $[1+(n/\nc)(i-1),\ldots, (n/\nc)i]$ (assume $n/\nc$ is integral). 

\noindent (3) Each $u\in V$ sends the index of the row of $A$ it has to 
its representative $r(u)\in \C$. 

\noindent (4) Each representative requests the $s(i)$ entries corresponding to rows 
$A_{i,*}$ that it represents, from the $\C$ node storing it.
%(e.g., representative of the row $A_{7,*}$ ask for $s(7)$). 

\noindent (5) Once getting them, it sends them to the nodes in $\P$ it represents.

\noindent (6) Each $u\in V$ sends the products
% multiplication results 
$\left\{A(i,j)s(i)\right\}_{j=1}^n$ to its representative. 

\noindent (7) Each representative sends each nonzero value $A(i,j)s(i)$ it has
(up to $O(k\nc)$ values) to the representative responsible for $s(j)$, so it can calculate $s'(j)$. 

\noindent (8) Each node $u\in V$, that initially stored $s(i)$, requests $s'(i)$ from its representative. The representative gets the entry from the corresponding node in $\C$ and sends it back to $u$. 

%\end{enumerate}

\begin{theorem}\label{thm:vec_by_matrix_runtime}
On a $\C\P$-network $\GVECP$, the multiplication of a vector by an $O(k)$-sparse matrix (VMM) can be completed in $O(k)$ rounds with high probability. 
\end{theorem}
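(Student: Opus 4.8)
The plan is to bound the running time of each of the eight steps of Algorithm $\algoMatrixByVector$ in turn, handling every periphery--core (and core--periphery) exchange with Axiom $\axiomConvergecast$ and every core-internal routing task with Axiom $\axiomClique$ together with the $\sndmsg$ primitive of Theorem \ref{thm:lenzen_stoc2011}; the per-step bounds will sum to $O(k)$, and since the only randomization lives inside $\sndmsg$, the resulting bound holds with high probability.

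Before the case analysis I would record three load-accounting facts. First, by Axiom $\axiomBoundary$ together with Theorem \ref{thm:cp_property2}(2), each core node has $d_{\mathrm{out}}=O(\nc)$, and since $\P$ and $\C$ form a $\Theta(1)$-convergecaster each such edge carries only $O(1)$ representative-requests in the initialization; hence every representative represents $O(\nc)$ periphery nodes, so it holds $O(\nc)$ entries of $s$ and, as $A$ is $O(k)$-sparse, $O(k\nc)$ entries of $A$. Second, by Theorem \ref{thm:cp_property2}(1) we have $\nc=\Omega(\sqrt n)$, hence $n/\nc=O(\sqrt n)=O(\nc)$, so in the balanced partition used in Steps 2, 4 and 8 each core node is responsible for $O(n/\nc)=O(\nc)$ coordinates of $s$ and of $s'$. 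Third, $O(k)$-sparsity bounds the number of nonzeros in each column as well as in each row, so $s'(j)=\sum_i s(i)A(i,j)$ has only $O(k)$ nonzero summands, and a core node responsible for $O(\nc)$ coordinates of $s'$ is the destination of at most $O(k\nc)$ products in Step 7.

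Given these facts the step-by-step argument is short. Steps 1 and 3 send one message per node toward a representative, so Axiom $\axiomConvergecast$ gives $O(1)$ rounds; Step 6 ships the $O(k)$ nonzero products of a single row, giving $O(k)$ rounds; Step 5 and the final dissemination phase of Step 8 run the convergecaster from the core to the periphery, which costs $O(1)$ rounds by the same symmetry already used for Step 6 of the MST algorithm. Steps 2, 4 and 8 are pure core-to-core routing with $M_s=O(\nc)$ and $M_r=O(\nc)$ (by the first two facts), so Theorem \ref{thm:lenzen_stoc2011} delivers them in $O((M_s+M_r)/\nc)=O(1)$ rounds. Step 7 is core-to-core with $M_s=O(k\nc)$ and $M_r=O(k\nc)$, so $\sndmsg$ finishes in $O(k\nc/\nc)=O(k)$ rounds, after which each responsible core node adds its $O(k)$ summands locally. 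Summing, the dominant terms are Steps 6 and 7 at $O(k)$ each, so the algorithm terminates in $O(k)$ rounds.

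There is no deep obstacle here; the two places that need a moment's care are (i) recognizing that $\nc=\Omega(\sqrt n)$ is exactly what makes the redistribution in Steps 2, 4 and 8 balanced, so that $\sndmsg$ costs $O(1)$ rather than $O(\sqrt n)$ rounds, and (ii) remembering to invoke the $O(k)$-sparsity hypothesis on columns, not only on rows, when bounding the incoming load of Step 7. Finally, since the bound uses $O(1)$ invocations of $\sndmsg$, each succeeding with probability $1-n^{-\Omega(1)}$, a union bound over these invocations yields the stated high-probability guarantee.
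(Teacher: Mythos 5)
Your proposal is correct and follows essentially the same route as the paper's proof: a step-by-step accounting where periphery--core exchanges are charged to Axiom $\axiomConvergecast$, core-internal redistribution is handled by Axiom $\axiomClique$ together with the routing theorem (Theorem \ref{thm:lenzen_stoc2011}), and the representative/responsibility loads are bounded via $\axiomBoundary$ and $\nc=\Omega(\sqrt n)$, yielding $O(1)$ rounds for all steps except the $O(k)$-round Steps 6 and 7. Your explicit use of column sparsity to bound the receive load in Step 7 by $O(k\nc)$ is, if anything, slightly more careful than the paper's stated parameters, and leads to the same $O(k)$ bound.
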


\begin{proof}
Consider Algorithm $\algoMatrixByVector$ and the $\C\P$-network $\GVECP$.
At Step 1, due to $\axiomBoundary$ 
%(balanced boundary) 
and $\axiomConvergecast$,
%($O(1)$ convergecast) 
each representative will obtain $O(\nc)$ entries of $s$ in $O(1)$ rounds.
For Step 2, we use Theorem \ref{thm:lenzen_stoc2011} with the parameters $M_s=O(\nc)$ and $M_r=O(n/\nc)$, and thus such a redistribution will take $O((\nc+n/\nc)/\nc)=O(1)$ rounds.
At Step 3, due to $\axiomBoundary$ 
%(balanced boundary) 
and $\axiomConvergecast$ 
%($O(1)$ convergecast) 
each representative will obtain $O(\nc)$ row indices of $A$ in $O(1)$ rounds.

For Step 4, we again use Theorem \ref{thm:lenzen_stoc2011} with the parameters $M_s=O(\nc)$ (indices of rows each representative has), $M_r=O(n/\nc)$ (number of entries of $s$ stored in each node in $\C$), and obtain a running time of $O((\nc+n/\nc)/\nc)=O(1)$ rounds for this step.
At Step 5, each representative gets the required elements of $s$, which takes $O(1)$ rounds due to Theorem \ref{thm:lenzen_stoc2011}, and then sends them to the nodes in $\P$ it represents, which also takes $O(1)$ rounds due to $\axiomConvergecast$.
Step 6 takes $O(k)$ rounds, since $A$ has up to $k$ nonzero entries in each row.
Step 7 again uses Theorem \ref{thm:lenzen_stoc2011} with parameters $M_s=O(k\nc)$, $M_r=O(n/\nc)$, and thus its running time is $O(kn/\nc^2)=O(k)$.

At Step 8, a single message is sent by each node to its representative (which takes $O(1)$ rounds due to $\axiomConvergecast$), then the requests are delivered to the appropriate nodes in $\C$, and the replies with the appropriate entries of $s'$ are received back by the representatives. All this takes $O(1)$ rounds, due to Axiom $\axiomClique$ and Theorem \ref{thm:lenzen_stoc2011}. Then the entries of $s'$ are delivered to the nodes that have requested them. By $\axiomConvergecast$ this also takes $O(1)$ rounds.
\end{proof}

The following theorem shows the necessity of the axioms for achieving $O(k)$ running time. 
%The proof of the theorem can be found in \ref{app:proofsVbyM}.

\begin{theorem}\label{thm:necessity-matrix2}
For each $X \in\{B,E,C\}$ there exists a family of $n$-node partitioned networks
${\cal F}_X = \{ G_X(V,E,\C,\P)(n)\}$,
that satisfy all axioms except $\mathcal{A}_X$,
%that do not satisfy Axiom $\mathcal{A}_X$, but satisfy the other two axioms,
and input matrices of size $n\times n$ and vectors of size $n$, for every $n$,
such that the time complexity of \emph{any} algorithm for VMM on the networks of ${\cal F}_X$ with the corresponding-size inputs is $\Omega(n/\log n)$.
\end{theorem}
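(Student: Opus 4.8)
The plan is to reuse exactly the three network families $D_n$, $S_n$, and $L_n$ that already appeared in the independence proof and in Theorem~\ref{thm:necessity-matrix1}, since each satisfies two of the three axioms and violates the third, and to exhibit for each of them an input (a matrix $A$ together with a vector $s$) on which any VMM algorithm is forced to spend $\Omega(n/\log n)$ rounds. The overall structure mirrors the proof of Theorem~\ref{thm:necessity-matrix1}: handle the three cases $X=B,E,C$ separately, in each case first recall (from the independence theorem) which axiom fails and which two hold, then argue the lower bound via a bottleneck or diameter argument. The only genuinely new ingredient compared to Theorem~\ref{thm:necessity-matrix1} is that VMM, unlike MT, need not move $\Omega(n)$ bits across a single edge ``for free'': the product structure $s'(j)=\sum_i s(i)A(i,j)$ means a lower bound must rule out clever aggregation, and that is precisely what the communication-complexity reduction in the proof of Theorem~\ref{clm:vertex_by_matrix_lower} is designed for. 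So I would lift that EQ-reduction idea to each of the three bad networks.

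Concretely, for $X=E$ (the sun network $S_n$) and $X=C$ (the lollipop $L_n$), the argument is the cleanest: both have diameter $\Omega(n)$, and by the generic $\Omega(D)$ lower bound already proved in Theorem~\ref{clm:vertex_by_matrix_lower} (place $s(1)$ and $A(1,1)$, both nonzero, at two nodes realizing the diameter), any VMM algorithm needs $\Omega(n)$ rounds, which is even stronger than the claimed $\Omega(n/\log n)$; here $k$ can be taken as small as $1$. For $X=B$ (the dumbbell $D_n$) there is no large diameter, so I would run the equality-problem reduction across the single bridge edge connecting the two star centers: put the rows indexed by the first half of $[n]$ in one star and the rows indexed by the second half in the other, and encode two vectors $x,y\in\{0,1\}^{\Theta(n)}$ into designated rows $A_{1,*}$ (held in the first star) and $A_{2,*}$ (held in the second star), with $s(1)=s(2)=1$ and all else zero, exactly as in the proof of Theorem~\ref{clm:vertex_by_matrix_lower}. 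Then $s'(i)=x(i)+y(i)$ for $i\le \Theta(n)$, so recovering the output entries (each of which must end up at the node that originally held $s(i)$, and these nodes can be split between the two stars) solves $\mathrm{EQ}$ on $\Theta(n)$ bits; since all core-to-core and core-to-periphery communication inside each star is ``local'' and every message carries $O(\log n)$ bits, the only channel between the two sides is the bridge, forcing $\Omega(n/\log n)$ rounds. As in Theorem~\ref{clm:vertex_by_matrix_lower}, I also need the one-line observation that gathering the final ``all entries in $\{0,2\}$?'' verdict costs only $O(1)$ extra rounds within each star, so it does not affect the asymptotics.

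The step I expect to require the most care is the $X=B$ reduction: I must make sure the input assignment is legal (each node gets exactly one row of $A$ and one entry of $s$, with the row and entry indices consistent with the required output placement), that the $\Theta(n)$ ``interesting'' output coordinates $s'(1),\dots,s'(\Theta(n))$ are genuinely split so that both stars must learn information originating on the other side (otherwise one party could compute EQ alone and there is no cut argument), and that $k$ is set appropriately — here $k=\Theta(n)$, which is fine since the statement allows the input size to depend on $n$. A subtle point is that in $D_n$ each star center has degree $\Theta(n)$, so one cannot get a bottleneck inside a star; the bottleneck is exclusively the single bridge edge, and I should phrase the cut argument in terms of total bits crossing that edge ($\Omega(n)$ bits needed, $O(\log n)$ per round) rather than appealing to diameter. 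For $X=E$ and $X=C$ the main thing is just to double-check that the diameter-based bound already in hand delivers $\Omega(n)\ge\Omega(n/\log n)$ and to cite the relevant earlier claims (the independence theorem for which axiom fails, and Theorem~\ref{clm:vertex_by_matrix_lower} for the diameter lower bound), so those cases are essentially one sentence each.
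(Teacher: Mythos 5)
Your handling of $X=E$ and $X=C$ matches the paper exactly: it invokes the sun and lollipop families and the generic $\Omega(D)$ bound of Theorem~\ref{clm:vertex_by_matrix_lower}, which is precisely what the paper does by pointing back to Theorem~\ref{thm:necessity-matrix1}. For $X=B$ you also use the right tool (an EQ reduction whose communication must cross the single bridge edge of the dumbbell network $D_n$), but your encoding differs from the paper's in a way that matters. You pack the two EQ inputs into two rows $A_{1,*},A_{2,*}$ with $\Theta(n)$ nonzero entries each, so your instance has $k=\Theta(n)$. The paper instead spreads the EQ instance over a \emph{diagonal} matrix and the vector: the first center sets $A(i,i)=x(i)+1$ for $i\le n/2$ and $s(i)=x(i)+1$ for $i>n/2$, the second center does the symmetric assignment with $y$, so $s'(i)=(x(i)+1)(y(i)+1)$ and equality is read off from $s'(i)\in\{1,4\}$, while the matrix stays $O(1)$-sparse.

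This is not a cosmetic difference. The theorem is introduced as showing ``the necessity of the axioms for achieving $O(k)$ running time'' (cf.\ Theorem~\ref{thm:vec_by_matrix_runtime} and Table~1), so the lower bound must be exhibited on inputs for which a genuine $\C\P$-network would be fast. With your $k=\Theta(n)$ input, Theorem~\ref{clm:vertex_by_matrix_lower} already gives an $\Omega(k/\log n)=\Omega(n/\log n)$ bound on networks satisfying \emph{all} axioms, so the $\Omega(n/\log n)$ you obtain on $D_n$ cannot be attributed to the missing Axiom~$\axiomBoundary$; you prove the statement only in its most literal reading, not the separation it is meant to express. The repair is exactly the paper's diagonal encoding (distribute the $n$ EQ bits over $n$ diagonal entries of $A$ and $n$ entries of $s$, split between the two stars), which keeps $k=O(1)$ while still forcing $\Omega(n)$ bits, hence $\Omega(n/\log n)$ rounds, across the bridge. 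Your cut/output-placement bookkeeping and the caveat that the EQ argument addresses deterministic algorithms are otherwise fine and consistent with the paper.
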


\begin{proof}
Necessity of $\axiomBoundary$: 
Consider the family of dumbbell partitioned networks $D_n$ (which violates only Axiom $\axiomBoundary$).
Denote the core's nodes as $u$ and $v$. In $O(k)$ rounds $u$ (resp. $v$) can collect all the rows of $A$ and entries of $s$ stored at the nodes of $\P$ connected to $u$ (resp., $v$). So, assuming $n/2$ is an integer, after $O(k)$ rounds, $u$ and $v$ have each $n/2$ rows of $A$ and $n/2$ entries of $s$. Assume also an input, for which $u$ has rows $A_{1,*},A_{2,*},\ldots,A_{n/2,*}$, and entries $s(n/2+1),s(n/2+2),\ldots,s(n)$, and $v$ has all the remaining rows of $A$ and entries of $s$.
We again show a reduction from the EQ problem.
%, in which two parties are required to determine whether their input vectors 
%$x,y\in \{0,1\}^n$ are equal.
%
%DP: {1,2} is cleaner but I prefer using the standard def for EQ
%
%$x,y\in \{1,2\}^n$ are equal\footnote{Originally, the problem is defined for binary vectors but one can easily see that a simple one-to-one function can map our input to the original one.}. 
Assuming the existence of a procedure $P$ for VMM, we use it to solve the EQ problem.
Given input vectors $x,y$ for the EQ problem (at $u$ and $v$ respectively), we create an input for VMM in the following way. Node $u$ assigns $A(i,i)=x(i)+1$ for every $i\in [1,\ldots,n/2]$, and $s(i)=x(i)+1$ for every $i\in [n/2+1,\ldots,n]$; while node $v$ assigns $A(i,i)=y(i)+1$, for every $i\in [n/2+1,\ldots,n]$ and $s(i)=y(i)+1$ for every $i\in [1,\ldots,n/2]$.
All the other entries of $A$ are initialized to $0$, thus $A$ is a diagonal matrix. It follows that
$s'(i) = \sum_{j=1}^n s(j)A(j,i) = s(i)A(i,i) = (x(i)+1)(y(i)+1)$ for every $i$.
%Notice that if $A$ is diagonal, $s'(i)=\sum_{j=1}^n s(j)A(j,i)=s(i)A(i,i)$. 
%Moreover, easy to see that $s'(i)=x(i)y(i), \forall i\in [n/2+1,\ldots,n]$ and $s'(i)=y(i)x(i), \forall i\in [1,\ldots,n/2]$ which means that every entry of the resulting vector $s'$ is the multiplication of the corresponding entries of $x$ and $y$. 
Given the value of $s'(i)$, one can decide whether $x(i)=y(i)$, since clearly,
$x(i)=y(i)$ if'f  $s'(i)\in \{1,4\}$ (and otherwise $s'(i)=2$). 
It follows that, using procedure $P$, one can solve the EQ problem, which in this case requires at least $n$ bits of communication. Hence the VMM problem requires $\Omega(n/\log n)$ communication rounds.
\par\noindent 
The proof of the necessity of $\axiomClique$ and $\axiomConvergecast$ 
is the same as in the proof of Theorem \ref{thm:necessity-matrix1}, 
and is based on the diameter argument.
\end{proof}

\subsection{Matrix multiplication (MM)}

Let $A$ and $B$ be square $n\times n$ matrices, with $O(k)$ nonzero entries in each row and each column.
Initially, each node in $V$ holds one row of $A$ (along with its index), and one row of $B$ (along with its index).
The {\em matrix multiplication (MM)} task is to distributively calculate $C=AB$, and store its rows at the corresponding nodes in $V$, such that the node that initially stored row $i$ of $B$ will store row $i$ of $C$. 
%We start with a claim on the lower bound.

\begin{theorem}\label{clm:matrix_by_matrix_lower}
Any algorithm for $O(k)$-sparse matrix multiplication on any network requires $\Omega(D)$ rounds. On a $\C\P$-network, $\Omega(k^2)$ rounds are required.
\end{theorem}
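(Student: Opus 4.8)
The plan is to prove the two bounds separately, each mirroring a lower-bound argument already used in the paper.

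For the $\Omega(D)$ bound on an arbitrary network of diameter $D$, I would isolate a single output entry. Take an input in which row $i$ of $A$ has its only nonzero at position $(i,i)$, so that $C(i,j)=A(i,i)\,B(i,j)$ for all $j$. By the output convention, the node $v$ that initially holds $B_{i,*}$ must end up knowing $C_{i,*}$, and therefore must learn $A(i,i)$ (its own $B_{i,*}$ is fixed); placing $A_{i,*}$ at a node $u$ with $\mathrm{dist}(u,v)=D$ and letting $A(i,i)$ take one of two values forces $\Omega(D)$ communication rounds, exactly as in the one-entry argument of Theorem \ref{clm:transpose_lower}.

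For the $\Omega(k^2)$ bound on a $\C\P$-network, I would combine the fact that a single row of $C=AB$ can carry up to $k^2$ nonzero entries with the existence of a periphery node $u$ of degree $1$ (as in the network of Figure \ref{fig:examples}(I)). Let $i$ be the index of the row of $B$ initially stored at $u$, so $u$ must output $C_{i,*}$. Choose the input so that row $i$ of $A$ has $k$ nonzeros, all equal to $1$, in columns $\ell_1,\dots,\ell_k$ with $i\notin\{\ell_1,\dots,\ell_k\}$, and so that rows $\ell_1,\dots,\ell_k$ of $B$ place their $k$ nonzeros in pairwise disjoint blocks of $k$ columns each (legitimate as long as $k^2\le n$, which is the only interesting regime). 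Then $C(i,j)=B(\ell_t,j)$ whenever column $j$ lies in the block of $\ell_t$, so $C_{i,*}$ has exactly $k^2$ nonzero entries, and these values are copies of $k^2$ entries of $B$ stored initially at nodes other than $u$. Letting those $k^2$ entries range over a domain of size $\mathrm{poly}(n)$ (allowed since each entry fits in $O(\log n)$ bits), $u$ must receive $\Omega(k^2\log n)$ bits across its single incident edge, i.e.\ $\Omega(k^2)$ rounds in the CONGEST model; this is the direct analogue of the ``must deliver $\Omega(k)$ messages to a degree-$1$ node'' step in Theorem \ref{clm:transpose_lower}.

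The routine work is the bookkeeping that makes both constructions valid $O(k)$-sparse instances (keeping every row and column of $A$ and $B$ with at most $O(k)$ nonzeros, and padding the $n\times n$ matrices arbitrarily elsewhere). The one point that genuinely needs care --- the same subtlety present in the paper's MT lower bound --- is justifying that $u$ cannot shortcut the computation: one must observe that the $k^2$ entries of $C_{i,*}$ are literally independent input entries held elsewhere, so no local computation at $u$ reduces the number of bits it must receive, which is exactly what the fooling-set/information-theoretic phrasing makes precise. I do not expect any deeper obstacle.
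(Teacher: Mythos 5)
Your proposal is correct and matches the paper's own proof in both parts: the $\Omega(D)$ bound via a single output entry whose value depends on an $A$-entry stored at distance $D$, and the $\Omega(k^2)$ bound via a degree-$1$ periphery node holding a row of $B$ whose corresponding row of $C$ is forced (by $k$ unit entries in a row of $A$ and disjoint $k$-column blocks in $k$ rows of $B$) to contain $k^2$ independent values stored elsewhere, all of which must cross its single edge. The only differences are cosmetic (your information-theoretic phrasing of the last step versus the paper's direct count of $k^2$ distinct values).
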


\begin{proof}
The lower bound for general networks follows since in order to obtain $C(1,1)$ we need, at least, to multiply $A(1,1)$ by $B(1,1)$ (assuming an input in which $A(1,1)\neq 0$ and $B(1,1)\neq 0$), which might take $\Omega(D)$ rounds in case $A(1,1)$ and $B(1,1)$ are located at nodes $u$ and $v$, at distance $dist(u,v)=D$. 

For a $\C\P$-network, consider a network illustrated on Figure \ref{fig:examples}(I), where the degree of a node $u\in\P$ is $1$.
Assume that initially $u$ has row $A_{1,*}$ and $B_{1,*}$ and thus, by problem definition, it has to eventually receive the row $C_{1,*}$.
We show now that $\Omega(k^2)$ messages are required to allow $u$ to obtain $C_{1,*}$. Assume that $\{b_i^j\}$ for $i,j\in[1,\ldots,k]$ are $k^2$ distinct values. Consider an input in which $A(1,i)=1$ for every $i\in [2,\ldots,k+1]$, $B(2,i)=b^1_i$ for every $i\in [1,\ldots,k]$, $B(3,i)=b^2_i$ for every $i\in [k+1,\ldots,2k]$; and so on until $B(k+1,i)=b^k_i$ for every $i\in [k(k-1)+1,\ldots,k^2]$. 
All other entries of $A$ and $B$ are set to $0$.
It is easy to see that $C(1,i)=b^1_i$ for every $i\in [1,\ldots,k]$, $C(1,i)=b^2_i$ for every $i\in [k+1,\ldots,2k]$; and so on until $C(1,i)=b^k_i$ for every $i\in [k(k-1)+1,\ldots,k^2]$. So, in order to obtain the row $C_{1,*}$, $u$ must receive all the $k^2$ values $\{b_i^j\}$, which will take $\Omega(k^2)$ communication rounds.
\end{proof}

\paragraph{\bf Algorithm $\algoMatrixByMatrix$}
$O(k)$-sparse matrix multiplication on a $\C\P$-network $\GVECP$.

%\begin{enumerate}
\noindent (1) Each node in $V$ send its row of $B$ to its representative. \\$[*$ Now, each node in $\C$ has $O(k\nc)$ entries of $B$. $*]$ 

\noindent (2) Nodes in $\C$ redistribute the rows of $B$ among themselves, so that the node with ID $i$ will store rows $1+(n/\nc)(i-1),\ldots (n/\nc)i$ (assuming $n/\nc$ is an integer). \\$[*$ Now, each $u\in \C$ has $O(n/\nc)$ rows of $B$. $*]$

\noindent (3) Each node in $V$ sends its row of $A$ to its representative.
Notice that the row $i$ of $B$ needed to be multiplied only by values of the column $i$ of $A$.

\noindent (4) Each $u\in \C$ sends the values of $A$ it has to the
nodes in $\C$, which hold the corresponding rows of $B$. 
I.e., the value $A(i,j)$ will be sent to the node in $\C$, which holds the row $B_{j,*}$.
\\$[*$ Now all the summands are prepared and distributed across all the nodes in $\C$, and it is left to combine corresponding summands. $*]$

\noindent (5) Each $u\in \C$ sends each of its values to the corresponding node in $\C$, that is responsible for gathering the summands for the values of specific $O(n/\nc)$ rows of the resulting matrix $C$. 

\noindent (6) Each node $u\in V$ that initially stored row $i$ of $B$, requests row $i$ of $C$ from its representative. The representative gets the row from the corresponding node in $\C$, and sends it back to $u$. 
%Easy to see that this step takes $O(k^2)$ rounds.

%\end{enumerate}

\begin{theorem}
On a $\C\P$-network $\GVECP$, the multiplication of two $O(k)$-sparse matrices can be completed in $O(k^2)$ rounds with high probability. 
\end{theorem}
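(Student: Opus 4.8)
The plan is to walk through the six steps of Algorithm $\algoMatrixByMatrix$ and bound each one separately, using Axiom $\axiomClique$ together with the routing primitive of Theorem \ref{thm:lenzen_stoc2011} for all core-internal communication, Axiom $\axiomConvergecast$ for all periphery-core communication, and the structural bound $\nc = \Omega(\sqrt n)$ of Theorem \ref{thm:cp_property2} (so that $n/\nc^2 = O(1)$) to collapse the resulting expressions. As already observed in the proof of Theorem \ref{thm:upper_transpose}, Axioms $\axiomBoundary$ and $\axiomConvergecast$ guarantee that each representative represents $O(\nc)$ nodes, and hence holds $O(k\nc)$ entries of an $O(k)$-sparse matrix once it has received their rows. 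Since randomness enters only through the calls to the algorithm of Theorem \ref{thm:lenzen_stoc2011}, every bound below holds with high probability, and since there are only a constant number of steps it suffices to bound each.

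Steps 1 and 3 (shipping one $O(k)$-entry row of $B$, resp.\ of $A$, from each node to its representative) take $O(k)$ rounds by Axiom $\axiomConvergecast$, after which each core node holds $O(k\nc)$ entries of the relevant matrix. Step 2 (redistributing the rows of $B$ so that node $i$ owns rows $1+(n/\nc)(i-1),\dots,(n/\nc)i$) is a routing task with $M_s = O(k\nc)$ and $M_r = O(kn/\nc)$, so Theorem \ref{thm:lenzen_stoc2011} gives $O((k\nc + kn/\nc)/\nc) = O(k + kn/\nc^2) = O(k)$ rounds.

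The core of the analysis is Steps 4 and 5. In Step 4 each core node sends its $O(k\nc)$ entries $A(i,j)$ to the current owner of row $B_{j,*}$; that owner, holding $O(n/\nc)$ rows of $B$, collects for each such row $j$ the entire $j$-th column of $A$, i.e.\ $O(k)$ values, so $M_r = O(kn/\nc)$ and Theorem \ref{thm:lenzen_stoc2011} again costs $O(k)$ rounds. After Step 4 the owner of $B_{j,*}$ holds the $O(k)$ nonzeros of that row and the $O(k)$ nonzeros of column $j$ of $A$, and can form the $O(k^2)$ products $A(i,j)B(j,l)$; over its $O(n/\nc)$ rows this is $O(k^2 n/\nc)$ summands per node. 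Step 5 routes each summand $A(i,j)B(j,l)$ to the node responsible for row $i$ of $C$; a responsible node, owning $O(n/\nc)$ rows of $C$, receives at most $O(k^2)$ summands per row (row $i$ of $A$ has $O(k)$ nonzeros, each contributing the $O(k)$ nonzeros of the matching row of $B$), so $M_s = M_r = O(k^2 n/\nc)$ and Theorem \ref{thm:lenzen_stoc2011} gives $O(k^2 n/\nc^2) = O(k^2)$ rounds; each node then sums its summands locally to obtain $O(n/\nc)$ complete rows of $C$.

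Finally, Step 6 delivers the rows of $C$ back to the nodes that initially held the corresponding rows of $B$. Since $C$ need not be $O(k)$-sparse, each row of $C$ has up to $O(k^2)$ nonzeros; a representative must therefore fetch $O(k^2\nc)$ entries from the core while a core node owning $O(n/\nc)$ rows of $C$ ships $O(k^2 n/\nc)$ of them, so Theorem \ref{thm:lenzen_stoc2011} with $M_s = O(k^2 n/\nc)$ and $M_r = O(k^2\nc)$ costs $O(k^2 n/\nc^2 + k^2) = O(k^2)$ rounds, and the final hop down the reversed convergecast paths to the $O(\nc)$ represented nodes costs another $O(k^2)$ rounds by Axiom $\axiomConvergecast$. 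Adding the six $O(k)$ and $O(k^2)$ bounds yields $O(k^2)$ rounds. The main obstacle I expect is not any single estimate but the bookkeeping of the per-node counts $M_s, M_r$ in Steps 4--6 --- in particular, remembering that $C$ is only $O(k^2)$-sparse rather than $O(k)$-sparse, and checking that it is precisely the inequality $\nc = \Omega(\sqrt n)$ that reduces each routing bound to the claimed value.
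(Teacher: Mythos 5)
Your proposal is correct and follows essentially the same route as the paper's proof: a step-by-step analysis of Algorithm $\algoMatrixByMatrix$ with the same per-node message counts $M_s,M_r$ fed into Theorem \ref{thm:lenzen_stoc2011}, the same use of Axioms $\axiomBoundary$, $\axiomClique$, $\axiomConvergecast$, and the same key observations that each core node handles $O(n/\nc)$ rows, that rows of $C$ may have $O(k^2)$ nonzeros, and that $\nc=\Omega(\sqrt n)$ collapses the routing bounds to $O(k)$ and $O(k^2)$.
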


\begin{proof}
Consider Algorithm $\algoMatrixByMatrix$ and the network $G(E,V)$, with a partition $\cp$ that satisfies Axioms $\axiomBoundary$, $\axiomClique$, $\axiomConvergecast$. 

Step 1 takes $O(k)$ rounds, due to the Axiom $\axiomConvergecast$ and the fact that the number of nonzero entries in each row is bounded by $k$. Each node in $\C$ will have $O(k\nc)$ entries of $B$, since it represents $O(\nc)$ nodes in $\P$ due to the Axiom $\axiomBoundary$.
Using Theorem \ref{thm:lenzen_stoc2011} with the parameters $M_s=O(k\nc)$ and $M_r=O(kn/\nc)$, we show that the redistribution, performed at Step 2, takes $O((k\nc+kn/\nc)/\nc)=O(k)$ rounds.

For Step 4, we again use Theorem \ref{thm:lenzen_stoc2011}, with the parameters $M_s=O(k)O(\nc)$ ($O(k)$ values per row), $M_r=O(n/\nc)O(k)$ ($O(k)$ values per column of $A$), and obtain that the running time is $O(k)$.
Note that each node in $\C$ has $O(n/\nc)$ rows (of $B$), with $O(k)$ elements in each. Each row was multiplied by $O(k)$ values received in the previous step ($O(k)$, since each column of $A$ has $O(k)$ nonzero values). Thus, each $u\in \C$ has $O(k^2 n/\nc)$ values that needed to be sent to the appropriate nodes in the next step.

At Step 5, each $u\in \C$ sends each of its values to the corresponding node in $\C$ that is responsible for gathering the summands for the values of specific $O(n/\nc)$ rows of the resulting matrix $C$. Clearly, $M_s=O(k^2n/\nc)$ since each row of $B$ has $O(k)$ different entries, and was multiplied by $O(k)$ different entries of $A$.
Now let's find $M_r$. Each $u\in\C$ is responsible for $O(n/\nc)$ rows of $C$. Thus, e.g., for a row $C_{1,*}$ it needs to receive the following summands: $C(1,1)=\sum A(1,i)B(i,1)$, $C(1,2)=\sum A(1,i)B(i,2)$, ..., $C(1,n)=\sum A(1,i)B(i,n)$. Since the number of nonzero entries in each row of $A$ and $B$ is $O(k)$, the number of nonzero entries in each row of $\C$ is bounded by $O(k^2)$. Thus, for each row of $\C$, a node in $C$ will receive $O(k^2)$ messages. So, $M_r=O(k^2 n/\nc)$ and thus the running time of this step is $O(k^2)$.

At the last step, each node $u\in V$ sends a single message (request for a row) to its representative. This takes $O(1)$ rounds due to $\axiomConvergecast$.
Then, the representative gets the row from the corresponding node in $\C$, and sends it back to $u$. Using Axiom $\axiomClique$ and Theorem \ref{thm:lenzen_stoc2011} with $M_s=O(\nc)$ and $M_r=O(n/\nc)$, delivering the request inside the core takes $O(1)$ rounds. In a similar way, sending the rows inside the core takes $O(k^2)$ rounds. The same amount of time is required to deliver those rows to the nodes in $\P$ that requested them ($O(1)$ per row entry, due to $\axiomConvergecast$, and $O(k^2)$ nonzero entries per row).
\end{proof}

\begin{theorem}\label{thm:necessity-matrix3}
For each $X \in\{B,E,C\}$ there exist a family of $n$-node partitioned networks
${\cal F}_X = \{ G_X(V,E,\C,\P)(n)\}$,
that satisfy all axioms except $\mathcal{A}_X$,
%that do not satisfy Axiom $\mathcal{A}_X$ but satisfy the other two axioms, 
and input matrices of size $n\times n$, for every $n$,
such that the time complexity of \emph{any} MM algorithm for the multiplication for $O(k)$-sparse matrices on the networks of ${\cal F}_X$ with the corresponding inputs is $\Omega(n/\log n)$.
\end{theorem}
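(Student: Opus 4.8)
The plan is to reuse the three-case template of Theorems~\ref{thm:necessity-matrix1} and~\ref{thm:necessity-matrix2}: for the two ``topological'' axioms $\axiomClique$ and $\axiomConvergecast$ I would invoke the generic diameter lower bound of Theorem~\ref{clm:matrix_by_matrix_lower}, and for $\axiomBoundary$ I would exhibit a single-edge bandwidth bottleneck and reduce from the equality problem (EQ), exactly as in the proof of Theorem~\ref{thm:necessity-matrix2}.

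For the necessity of $\axiomClique$, let $\mathcal{F}_E$ be the family of sun partitioned networks $S_n$: it satisfies $\axiomBoundary$ and $\axiomConvergecast$ but not $\axiomClique$, and its diameter is $\Omega(n)$. Using the $O(1)$-sparse hard instance of Theorem~\ref{clm:matrix_by_matrix_lower}, placed so that the two nonzero entries $A(1,1),B(1,1)$ sit at nodes at distance $\Omega(n)$, that theorem gives a lower bound of $\Omega(n) = \Omega(n/\log n)$ rounds on $\mathcal{F}_E$. For the necessity of $\axiomConvergecast$, let $\mathcal{F}_C$ be the family of lollipop partitioned networks $L_n$; it satisfies $\axiomBoundary$ and $\axiomClique$ but not $\axiomConvergecast$, it again has diameter $\Omega(n)$ (the far end of the line), and the same instance yields the same $\Omega(n/\log n)$ bound.

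For the necessity of $\axiomBoundary$, let $\mathcal{F}_B$ be the family of dumbbell partitioned networks $D_n$, which satisfies $\axiomClique$ and $\axiomConvergecast$ but not $\axiomBoundary$, and whose two star centers $u,v$ (forming the core) are joined by a single edge. As in the proof of Theorem~\ref{thm:necessity-matrix2}, I would first note that in $O(1)$ rounds $u$ (resp.\ $v$) can gather from the leaves of its star all the rows of $A$ and $B$ stored there, so $u$ ends up holding $n/2$ rows of $A$ and $n/2$ rows of $B$, and symmetrically for $v$. Then reduce EQ on inputs $x,y \in \{0,1\}^{n/2}$ by arranging the input so that $v$ holds rows $A_{1,*},\dots,A_{n/2,*}$ and sets $A(i,i) = y(i)+1$, while $u$ holds rows $B_{1,*},\dots,B_{n/2,*}$ and sets $B(i,i) = x(i)+1$, with all remaining entries of $A$ and $B$ zero, so that $A$ and $B$ are diagonal (hence $O(k)$-sparse for any $k \ge 1$). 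Then for $i \le n/2$ we get $C(i,i) = \sum_{j} A(i,j)B(j,i) = A(i,i)B(i,i) = (x(i)+1)(y(i)+1)$, which lies in $\{1,4\}$ iff $x(i)=y(i)$ and equals $2$ otherwise. Since $u$ initially holds $B_{1,*},\dots,B_{n/2,*}$, the MM specification forces $u$ eventually to learn $C_{1,*},\dots,C_{n/2,*}$, and in particular all these diagonal entries, so $u$ can decide EQ. Every unit of communication between the two sides crosses the single edge $uv$, which carries $O(\log n)$ bits per round, while EQ on $(n/2)$-bit inputs requires $\Omega(n)$ bits; hence any MM algorithm on $\mathcal{F}_B$ runs for $\Omega(n/\log n)$ rounds.

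The diameter cases come essentially for free once Theorem~\ref{clm:matrix_by_matrix_lower} and the diameter bounds for $S_n$ and $L_n$ are available, so the only delicate point, and the step I expect to be the main obstacle, is the $\axiomBoundary$ case: one must carefully track which rows of $A$ and of $B$ start on each half of the dumbbell so that the EQ-controlled entries are exactly $A(i,i)$ and $B(i,i)$ for $i \le n/2$, and so that the node required by the MM specification to output $C_{1,*},\dots,C_{n/2,*}$ is precisely the side that must read off the EQ answer.
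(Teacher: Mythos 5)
Your proposal is correct and follows essentially the same route as the paper: the $\axiomClique$ and $\axiomConvergecast$ cases are handled by the diameter argument on the sun and lollipop families (as in Theorem~\ref{thm:necessity-matrix1}), and the $\axiomBoundary$ case by an EQ reduction on the dumbbell network with diagonal matrices whose controlled diagonal entries sit on opposite sides of the single core edge, exactly the construction the paper sketches (your assignment of which side holds which half of $A$'s and $B$'s rows is just a relabeling of the paper's). Your write-up is in fact more explicit than the paper's sketch about why the MM output specification forces the EQ answer to one side.
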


\begin{proof}
%Now we show the necessity of the Axioms $\axiomBoundary$, $\axiomClique$ and $\axiomConvergecast$ for achieving $O(k^2)$ running time. Consider the following cases where in each case one of the axioms is not satisfied while the other two satisfied.
Necessity of $\axiomBoundary$: 
Again consider the dumbbell partitioned networks $D_n$.
%As discussed earlier, Axiom $\axiomBoundary$ is violated while the other hold. %
%Let us denote the core's nodes as $u$ and $v$. In order to calculate vector $s'$, row $A_{i,*}$ of $A$ must be multiplied by $s(i)$. Note that there are $\Theta(n)$ rows of $A$ stored at the nodes in $\P$ connected to $u$, and $\Theta(n)$ entries of $s$ stored at the nodes in $\P$ connected to $v$. Assume that the input is arranged in a way that for each row $A(i,*)$ stored at the $u$'s side, the entry $s(i)$ is stored at the $v$'s side. Now, there are $\Omega(n)$ messages must be passed via the edge $(u,v)$ in order to complete the task, and thus $\Omega(n)$ communication rounds.
As done in the proof of Theorem \ref{thm:necessity-matrix2}, we can show a reduction from the EQ problem to the MM problem for sparse matrices. Here we initialize $A$ and $B$ to be diagonal matrices, and the first core node, $u$, will have first half of $A$'s rows and second half of $B$'s rows. Due to the initialization, each entry of the resulting matrix $C$ will be a multiplication of the corresponding entries of $x$ and $y$. Thus, obtaining $C$ will allow us to determine whether $x$ and $y$ are equal. Hence the lower bound for the MM problem is again $\Omega(n/\log n)$ communication rounds.

\par\noindent
The proof for $\axiomClique$ and $\axiomConvergecast$ is the same as in the proof of Theorem \ref{thm:necessity-matrix1}.
\end{proof}

\subsection{Rank Finding (RF)}
Next, we show another set of algorithms that deals with aggregate functions.
Let each node $v\in V$ hold some initial value. The {\em rank finding (RF)} task requires each node to know the position of its value in the ordered list of all the values.

\begin{theorem}\label{clm:rank_lower}
Any algorithm for finding the rank of each value on any network requires $\Omega(D)$ rounds.
% On a $\C\P$-network, the bound is $\Omega(1)$.
\end{theorem}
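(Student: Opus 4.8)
The plan is to use the standard information-propagation (causality) argument, exactly as in the proof of Theorem~\ref{clm:transpose_lower}. Consider an arbitrary network $G(V,E)$ of diameter $D$, and pick two nodes $u,v\in V$ with $\mathrm{dist}(u,v)=D$; such a pair exists by definition of the diameter. I assign to every node other than $v$ a fixed initial value, with $u$ receiving some value $a$, all values distinct, and I keep these fixed throughout the argument.

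Next I would consider two input instances that differ only in the value held by $v$: in $I_1$ node $v$ holds a value $b_1<a$, and in $I_2$ it holds a value $b_2>a$ (both distinct from all other values). The multiset of values in $I_1$ and in $I_2$ differs only in that one element strictly below $a$ is replaced by one strictly above $a$, so the rank of $u$'s value $a$ is exactly one smaller in $I_2$ than in $I_1$. Hence any correct RF algorithm must cause $u$ to output two different numbers on $I_1$ versus $I_2$.

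Finally, I invoke the causality principle: after $t$ rounds the state of a node $w$ — in particular any value it is about to output — is a function only of the initial values of the nodes in the ball of radius $t$ around $w$ and of the messages exchanged among them, all of which in turn depend only on those same inputs. Since $\mathrm{dist}(u,v)=D$, for every $t\le D-1$ the executions on $I_1$ and on $I_2$ are indistinguishable at $u$ (its state is identical), so $u$ cannot emit the two required distinct outputs before round $D$. Therefore any RF algorithm needs at least $D$ rounds on $G$, yielding the $\Omega(D)$ bound; the same reasoning extends to randomized algorithms by comparing the distribution of $u$'s state in the two instances. The only step that needs a line of real care is verifying that the two constructed instances genuinely force distinct outputs at $u$ — i.e., that shifting $v$'s value across $a$ changes $u$'s rank — and that is where I would focus the write-up; everything else is the textbook indistinguishability step.
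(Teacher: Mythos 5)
Your proof is correct and rests on the same idea as the paper's (one-line) argument: information about $v$'s value must reach $u$ across distance $D$ before $u$ can output its rank, since changing $v$'s value across $a$ changes $u$'s rank. You simply spell out the two-instance indistinguishability argument that the paper leaves implicit, so no further comparison is needed.
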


\begin{proof}
For two nodes $u,v\in V$ to decide whose value is larger, at least one bit of information must travel between them. Thus, the lower bound for this task on any graph is $\Omega(D)$.
\end{proof}

\paragraph{\bf Algorithm $\algoFindRank$} 
RF on a $\C\P$-network $\GVECP$.

\noindent (1) Each node in $V$ sends its initial value to its representative in $\C$.

\noindent (2) Nodes in $\C$ sort all the values they have, and obtain the ranks of those values. \\$[*$ Now, each node $u\in \C$ knows the ranks of the values it had received from the nodes it represents. $*]$

\noindent (3) The ranks are delivered to the appropriate nodes in $\P$.

\begin{theorem}
On a $\C\P$-network $\GVECP$, the RF task can be completed in $O(1)$ rounds with high probability. 
\end{theorem}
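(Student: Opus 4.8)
The plan is to bound each of the three steps of Algorithm~$\algoFindRank$ separately and show that each takes $O(1)$ rounds. As in Phase~0 of the $\C\P$-MST algorithm, I assume the representatives $r(\cdot)$ have been fixed once and for all by running a $\gamma$-convergecast on $\P$ and $\C$, which costs $O(1)$ rounds by Axiom $\axiomConvergecast$; in particular, re-executing (or reversing) that same schedule delivers a message between any periphery node $u$ and its representative $r(u)$ in $O(1)$ rounds, even when $r(u)$ is several hops away.

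For Step~1, every node $u\in V$ ships a single value to $r(u)$; running the fixed convergecast schedule routes all of these simultaneously to their pre-assigned representatives in $O(1)$ rounds. By Axiom $\axiomBoundary$ together with the $\Theta(1)$-convergecaster property, each representative receives only a constant number of messages per incident edge and hence represents $O(\nc)$ periphery nodes, so after this step every core node holds $O(\nc)$ values and the total number of values held in $\C$ is $n=O(\nc^2)$ (Theorem~\ref{thm:cp_property2}). For Step~2, Axiom $\axiomClique$ makes the core a $\Theta(1)$-clique emulator, so any single CONGEST round of an algorithm designed for the complete graph $K_{\nc}$ can be simulated among the core nodes with only constant overhead. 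Since each core node holds $O(\nc)$ values, I invoke Theorem~\ref{thm:lenzen_podc2013_sorting} (with the observation extending it to $O(n)$ keys per node) and use its Index-Learning task: in $O(1)$ rounds on $K_{\nc}$, hence $O(1)$ rounds on the emulated core, every core node learns, for each value it holds, the rank of that value in the total order of all $n$ values. (Values are made distinct by tagging them with the ID of their origin node, exactly as in the footnote of Theorem~\ref{thm:lenzen_podc2013_sorting}; since every value in the core is one of the $n$ input values, IL returns precisely the rank we want.) For Step~3, each representative sends the computed rank of $u$'s value back to $u$ for each of the $O(\nc)$ nodes $u$ it represents; this is Step~1 reversed, so the reversed convergecast schedule delivers every rank in $O(1)$ rounds, exactly as step~6 of the $\C\P$-MST algorithm carries the same volume of data in the opposite direction. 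Summing the three $O(1)$ bounds yields the claim, the probabilistic qualifier arising from the randomized routing primitive $\sndmsg$ wherever it is used, as explained in the technical preliminaries.

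The only point that needs genuine care, and it is mild, is the reduction of Step~2 to the complete-network sorting theorem: one must verify that (i) constant-round clique emulation indeed lets the core simulate each round of the $K_{\nc}$-sorting algorithm with $O(1)$ overhead (it does, since clique emulation already realizes the all-pairs message pattern, of which any single round's pattern is a sub-pattern), and (ii) the ``$O(\nc)$ values per core node'' bound, which follows from Axiom $\axiomBoundary$ plus the convergecaster property, is exactly the hypothesis of the extended form of Theorem~\ref{thm:lenzen_podc2013_sorting}. Everything else is convergecast and clique-emulated routing already accounted for by Axioms $\axiomConvergecast$ and $\axiomClique$.
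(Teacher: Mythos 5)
Your proof is correct and follows essentially the same route as the paper's: bound Step~1 by Axiom $\axiomConvergecast$, note that Axiom $\axiomBoundary$ (with the convergecaster property) gives each representative $O(\nc)$ values, apply Theorem~\ref{thm:lenzen_podc2013_sorting} via the clique emulation of Axiom $\axiomClique$ for Step~2, and reverse the convergecast for Step~3. The extra detail you supply (reversible convergecast schedule, tagging for distinctness, randomness only via $\sndmsg$) is consistent with the paper's treatment and does not change the argument.
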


\begin{proof}
Consider Algorithm $\algoFindRank$ and the $\C\P$-network $\GVECP$.
The first step takes $O(1)$, due to Axiom $\axiomConvergecast$. 
%($O(1)$ convergecast). 
Now, each representative $u\in \C$ has $O(\nc)$ values, due to Axiom $\axiomBoundary$.
Step 2 is performed in $O(1)$ rounds, due to Theorem \ref{thm:lenzen_podc2013_sorting} and Axiom $\axiomClique$.
The last step takes $O(1)$ rounds, due to $\axiomConvergecast$. 
%($O(1)$ convergecast).
\end{proof}

\begin{theorem}\label{thm:necessity-rank}
For each $X \in\{B,E,C\}$ there exist a family of $n$-node partitioned networks
${\cal F}_X = \{ G_X(V,E,\C,\P)(n)\}$
that satisfy all axioms except $\mathcal{A}_X$,
%that do not satisfy Axiom $\mathcal{A}_X$ but satisfy the other two axioms, 
and input matrices of size $n\times n$ and vectors of size $n$, for every $n$,
such that the time complexity of \emph{any} RF algorithm 
%for finding the rank of each value 
on the networks of ${\cal F}_X$ with the corresponding inputs is $\Omega(n)$.
\end{theorem}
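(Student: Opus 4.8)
I would follow the template of Theorems~\ref{thm:necessity-matrix1}--\ref{thm:necessity-matrix3}, reusing the three partitioned networks from the independence proof (Theorem~\ref{thm:independent}) and splitting into the three cases $X\in\{E,C,B\}$, of which only $X=B$ needs genuinely new work.

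\emph{Cases $X=E$ and $X=C$.} For $X=E$ take $\mathcal{F}_E=\{S_n\}$ (the sun partitioned networks) and for $X=C$ take $\mathcal{F}_C=\{L_n\}$ (the lollipop partitioned networks). By the independence proof, $S_n$ violates only $\axiomClique$ and $L_n$ violates only $\axiomConvergecast$, and (as recorded in Claim~\ref{clm:exist_graph_diam_n}) both have diameter $\Omega(n)$. Theorem~\ref{clm:rank_lower} already states that RF needs $\Omega(D)$ rounds on any diameter-$D$ network, so it suffices to invoke it: pick two nodes at distance $\Theta(n)$ and two inputs that differ only in the relative order of the values at those two nodes; one of them must output a different rank, so a distinguishing bit must traverse the $\Omega(n)$-hop path. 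This gives the $\Omega(n)$ bound for these two cases directly.

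\emph{Case $X=B$.} Take $\mathcal{F}_B=\{D_n\}$, the dumbbell networks, which violate only $\axiomBoundary$. The diameter argument is unavailable here, since by Claim~\ref{clm:1} every network satisfying $\axiomClique$ and $\axiomConvergecast$ has constant diameter; the leverage is instead the single edge between the two star centers, a bottleneck of capacity $O(\log n)$ bits per round. Let $m=\lceil n/2\rceil-1$ be the number of leaves per star. I would choose the hard input so that the ranks forced on the left leaves spell out an arbitrary bit string $y\in\{0,1\}^m$ living in the right star's local input: assign value $2i$ to the $i$-th left leaf, value $2i-1$ to the $i$-th right leaf if $y_i=1$ and value $2i+\tfrac12$ if $y_i=0$, and give the two star centers huge distinct values (ties, if any, broken by the usual ID-chaining trick). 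A one-line count then shows that the rank of the $i$-th left leaf equals $2i-1+y_i$, so that leaf recovers $y_i$ exactly; hence the left side collectively must learn $y$, which is independent of its initial state. A crossing-sequence (fooling-set) argument over the single center--center edge now forces $\Omega(m/\log n)=\Omega(n/\log n)$ rounds; equivalently, this is a reduction from $\mathrm{EQ}$ on $\Theta(n)$-bit inputs, in the same spirit as the proof of Theorem~\ref{thm:necessity-matrix2}.

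\emph{Main obstacle.} Everything routine reduces to quoting Theorem~\ref{clm:rank_lower}; the only real work is the $X=B$ case, namely choosing a value assignment whose \emph{output} provably encodes enough information to have to cross the cut, and then doing the communication-complexity bookkeeping. I would also flag one subtlety: because $D_n$ has constant diameter, and the ranks on one side of the cut always form a monotone sequence carrying only $O(n)$ bits, the strongest bound obtainable by this route is $\Omega(n/\log n)$ rather than $\Omega(n)$; so for the $B$ case I would state the necessity as $\Omega(n/\log n)$, matching the analogous Theorem~\ref{thm:necessity-matrix2}.
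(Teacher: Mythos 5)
Your handling of the cases $X=E$ and $X=C$ coincides with the paper's: it reuses $S_n$ and $L_n$, observes they violate only $\axiomClique$ (resp.\ $\axiomConvergecast$), and invokes the $\Omega(D)$ bound of Theorem~\ref{clm:rank_lower} with $D=\Omega(n)$. That part is fine.

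For $X=B$, however, there is a genuine gap relative to the statement being proved: the theorem claims $\Omega(n)$ on the dumbbell family $D_n$, and your construction provably cannot reach it — you encode only one bit $y_i$ per leaf pair, so the total information forced across the center--center edge is $O(n)$ bits and the best you can conclude is $\Omega(n/\log n)$ rounds; you acknowledge this and propose to weaken the theorem, which means the proposal as written does not establish the stated result. The paper's proof takes a different hard input that avoids this loss: fix a sorted sequence $a_1<a_2<\cdots<a_n$, give the odd-indexed values to one star (center plus its leaves) and the even-indexed values to the other. Then each of the $n/2$ adjacent pairs $(a_{2i-1},a_{2i})$ straddles the cut, and an exchange argument (replace $a_{2i-1}$ by a slightly larger value and the ranks change) shows the output depends on the outcome of every one of these $n/2$ comparisons; the paper then charges one crossing of the single bridge edge per pair, i.e.\ $\Omega(n)$ messages over one edge, hence $\Omega(n)$ rounds. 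In communication-complexity language, the paper's instance is morally $n/2$ independent greater-than instances (which cost $\Theta(\log n)$ bits each, e.g.\ by placing pair $i$ in its own value range), whereas your instance is $n/2$ equality-of-one-bit instances; that is precisely where your $\log n$ factor is lost. So the missing idea is not the reduction machinery but the choice of hard input: you need each crossing pair to force a full $\Theta(\log n)$-bit value comparison (or, as the paper argues, a full message) rather than a single bit, and with that change your crossing-sequence bookkeeping would recover the claimed $\Omega(n)$.
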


\begin{proof}
%Now we show the necessity of the Axioms $\axiomBoundary$, $\axiomClique$ and $\axiomConvergecast$ for achieving $O(1)$ running time. Consider the following cases where in each case one of the axioms is not satisfied while the other two satisfied.
%
Necessity of $\axiomBoundary$: 
Consider the family of dumbbell partitioned networks $D_n$.
%As discussed earlier, Axiom $\axiomBoundary$ is violated while the others hold.
Consider a sorted list of $n$ values ($a_i\le a_{i+1}$): $a_1,a_2,a_3,...,a_n$. Clearly, every two adjacent values must be compared in any sorting procedure. For example, assuming that some sorting procedure does not compare $a_2$ and $a_3$, and also $a_4=a_3+2$, we can replace $a_2$ with $a_3+1$ and get the same output $a_1,a_2,a_3,...,a_n$, which is now incorrect since $a_2>a_3$.
Denote the two core nodes by $u$ and $v$. Node $u$ and all the nodes in $\P$ that are connected to it, will be assigned values (one per node) with odd indices in the ordered list (i.e., $a_1,a_3,\ldots$). The other $n/2$ values will be assigned to the remaining $n/2$ nodes (one value to one node).
Note that in order to obtain a sorted list, at least the following comparisons must take place:  $a_1$ with $a_2$, $a_3$ with $a_4$, and so on. Hence about $n/2$ pairs of values have to be brought together, while initially they are located at the different sides of the link $(u,v)$. Thus, $\Omega(n)$ messages must be sent over the link $(u,v)$ in order to perform the sorting task. This takes $\Omega(n)$ communication rounds.

The proof for $\axiomClique$ and $\axiomConvergecast$ is the same as in 
Thm. \ref{thm:necessity-matrix1}.
%, and is based on the diameter argument.
%\item $\axiomClique$ is not satisfied, $\axiomBoundary$,$\axiomConvergecast$ satisfied.
%
%Consider the spiked wheel graph $SW_n$ described in the proof of Claim 
%\ref{clm:exist_graph_diam_n}.
%As discussed there, $\axiomBoundary$ and $\axiomConvergecast$ hold
%Clearly, the diameter of $SW_n$ is $\Omega(n)$ (which violates $\axiomClique$) and thus any algorithm for this task will require $\Omega(n)$ communication rounds (due to the lower bound discussed earlier).
%
%\item $\axiomConvergecast$ is not satisfied, $\axiomBoundary$,$\axiomClique$ satisfied.
%
%Consider  the lollipop graph $L_n$ described in the proof of Claim 
%\ref{clm:exist_graph_diam_n}. 
%As discussed there, $\axiomBoundary$ and $\axiomClique$ are satisfied 
%while the Axiom $\axiomConvergecast$ is not. Clearly, the diameter of $L_n$ is $\Omega(n)$ and thus, any algorithm for this task will require $\Omega(n)$ communication rounds (due to the lower bound discussed earlier).
\end{proof}

\subsection{Median Finding (MedF)} 

Let each node $v\in V$ hold some initial value. The {\em median finding (MedF)} task requires each node to know the value, which is the median in the ordered list of all the values in the network.
%
%We start with a claim on the lower bound.

\begin{theorem}\label{clm:median_lower}
Any median finding algorithm on any network requires $\Omega(D)$ rounds.
% On a $\C\P$-network the lower bound is $\Omega(1)$.
\end{theorem}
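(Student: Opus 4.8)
The plan is to follow the template of the rank-finding lower bound (Theorem~\ref{clm:rank_lower}): exhibit two input assignments that a correct algorithm is forced to distinguish, yet that are indistinguishable at some node in fewer than $D$ rounds. Fix an arbitrary network $G$ of diameter $D$ and pick nodes $u,v\in V$ with $\mathrm{dist}(u,v)=D$. In both assignments every node of $V\setminus\{v\}$ holds the same fixed value, and only $v$'s value changes between them.

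Concretely, consider first the odd case $n=2k+1$. Distribute the $n-1$ values $\{1,2,\dots,k\}\cup\{k+2,k+3,\dots,2k+1\}$ (one per node) over $V\setminus\{v\}$, and let $v$ hold either $0$ (assignment $I_0$) or $2k+2$ (assignment $I_1$). A direct count shows the median — the $(k+1)$-st smallest of the $2k+1$ values — equals $k$ under $I_0$ and equals $k+2$ under $I_1$; all values lie in $\{0,\dots,2k+2\}$ and hence fit in $O(\log n)$ bits, so both are legal CONGEST inputs. For even $n$ one prepends one extra fixed value and, after scaling all values by $2$ to stay integral, repeats the same computation under the problem's tie-breaking convention; the two medians remain distinct.

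Now suppose some algorithm solves MedF in $t$ rounds on $G$. The standard causality property of the CONGEST model implies that after $t$ rounds the internal state, and hence the output, of node $u$ depends only on the initial values held by nodes within distance $t$ of $u$. If $t<D=\mathrm{dist}(u,v)$, then $v$'s value cannot influence $u$, so $u$ returns the same answer in the executions on $I_0$ and $I_1$; but correctness demands that $u$ output $k$ in one case and $k+2$ in the other — a contradiction. Hence $t\ge D$, which is the claimed $\Omega(D)$ bound.

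I expect the only delicate part to be arithmetic bookkeeping rather than anything conceptual: choosing the fixed values so that the two configurations provably yield different medians and so that every value stays an $O(\log n)$-bit integer, including the handling of the parity of $n$. The information-flow step itself is the usual indistinguishability argument for CONGEST lower bounds and presents no real obstacle.
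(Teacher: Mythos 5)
Your proof is correct and follows essentially the same route as the paper's: both are distance-based indistinguishability arguments showing that a node $u$ at distance $D$ from $v$ must receive information influenced by $v$'s value before it can output the median. You simply make the paper's terse argument rigorous by exhibiting the explicit pair of inputs (differing only at $v$) whose medians differ, which is a welcome but not substantively different elaboration.
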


\begin{proof}
For two nodes $u,v\in V$, if $u$ wants to obtain the median of the initial values, then at least one bit of information must travel between $u$ and $v$ (otherwise $u$ will never know that $v$ exists, while $v$ may even be the median). Since the distance $dist(u,v)$ can be as large as the diameter, the lower bound for this task is $\Omega(D)$ communication rounds.
\end{proof}

\paragraph{\bf Algorithm $\algoMedian$}
MedF on a $\C\P$-network $\GVECP$.

%\begin{enumerate}

\noindent (1) Each node in $V$ sends its initial value to its representative in $\C$.

\noindent (2) Nodes in $\C$ sort all their values and obtain the ranks of those values. \\$[*$ Now, each node $i\in \C$ now knows values with indices $i(\nc - 1) + 1,\ldots i\nc$ according to the total order of all values. $*]$

\noindent (3) The node in $\C$ that has the value with index $n^2/2$ (namely, the median value) sends it to all the nodes in $\C$.

\noindent (4) The median value is delivered to the nodes in $\P$.

%\end{enumerate}

\begin{theorem}
On a $\C\P$-network $\GVECP$, the MedF task can be completed in $O(1)$ rounds with high probability. 
\end{theorem}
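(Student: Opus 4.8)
The plan is to verify that Algorithm $\algoMedian$ runs in $O(1)$ rounds by charging each of its four steps $O(1)$, using the three axioms together with the message-delivery primitive $\sndmsg$ (Theorem \ref{thm:lenzen_stoc2011}) and the distributed-sorting primitive (Theorem \ref{thm:lenzen_podc2013_sorting}); the ``with high probability'' qualifier is then inherited verbatim from the randomized subroutine $\sndmsg$ invoked along the way.

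In Step 1 every node forwards its single value to its representative: by Axiom $\axiomConvergecast$ this $\langle\P,\C\rangle$-convergecast completes in $O(1)$ rounds, and by Axiom $\axiomBoundary$ together with Theorem \ref{thm:cp_property2}(2) — exactly as in the MST initialization analysis, Claim \ref{clm:init_runtime} — each representative ends up holding $O(\nc)$ values, so $O(n)$ values in total across $\C$. In Step 2 the core sorts these values; since Axiom $\axiomClique$ makes $\C$ a $\Theta(1)$-clique-emulator, one round of the complete graph $K_{\nc}$ is simulated with $O(1)$ overhead, so Theorem \ref{thm:lenzen_podc2013_sorting} (in the form of the Observation following it, which permits $O(\nc)$ keys per node) sorts the collected multiset in $\Theta(1)$ rounds; after the value-learning variant, the core node whose deterministically-known index block contains the rank-$\lceil n/2\rceil$ position holds the median and recognizes this locally. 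Step 3 is a single invocation of $\sndmsg$ with $M_s=O(\nc)$, $M_r=O(1)$, delivering the median to all of $\C$ in $O(1)$ rounds (alternatively, one more round of clique emulation). Finally, Step 4 pushes the median from each core node down to the periphery nodes it represents; this is the reverse of the Step-1 convergecast along the fixed representative routes and costs $O(1)$ rounds by the same argument used for the ``Representatives to Periphery'' step of the MST algorithm. Summing the four $O(1)$ bounds gives the claimed $O(1)$ total.

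The only place where real care is needed is Step 2: Theorem \ref{thm:lenzen_podc2013_sorting} is stated for a complete network of $n$ nodes each holding exactly $n$ values, whereas here we have a clique-\emph{emulator} on $\nc$ nodes holding $O(\nc)$ values each, with $n$ (not $\nc^2$) values in total and possible repetitions. Reconciling this — absorbing the clique-emulation overhead via Axiom $\axiomClique$, using the Observation that extends the sorting theorem to $O(n)$ keys per node, padding the multiset with $+\infty$ sentinels to the size expected by the primitive, and breaking ties by the chaining trick of the theorem's footnote — is the main (though routine) obstacle. Once it is in place, identifying the median holder and broadcasting the value are immediate; in particular the ``$n^2/2$'' appearing in Step 3 of Algorithm $\algoMedian$ should be read as the rank $\lceil n/2\rceil$ of the median among the $n$ input values.
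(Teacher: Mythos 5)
Your proposal is correct and follows essentially the same route as the paper: charge Step 1 to Axiom $\axiomConvergecast$ (with Axiom $\axiomBoundary$ giving $O(\nc)$ values per representative), Step 2 to the sorting primitive of Theorem \ref{thm:lenzen_podc2013_sorting} together with Axiom $\axiomClique$, the core-internal broadcast to clique emulation, and the final dissemination again to $\axiomConvergecast$. The extra care you take with the sorting primitive's input format and with reading the algorithm's ``$n^2/2$'' as the rank $\lceil n/2\rceil$ of the median is a reasonable tightening of details the paper leaves implicit, but it does not change the argument.
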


\begin{proof}
Consider Algorithm $\algoMedian$ and the $\C\P$-network $\GVECP$.
The first step takes $O(1)$ due to Axiom $\axiomConvergecast$. 
%($O(1)$ convergecast). 
Now, each representative $u\in \C$ has $O(\nc)$ values due to Axiom $\axiomBoundary$. 
%(balanced boundary).
Step 2, is performed in $O(1)$ rounds, due to Theorem \ref{thm:lenzen_podc2013_sorting} and Axiom $\axiomClique$. 
%(clique emulation). 
The last step will take $O(1)$ rounds, due to $\axiomConvergecast$. 
%($O(1)$ convergecast).
\end{proof}

\begin{theorem}\label{thm:necessity-median}
For each $X \in\{B,E,C\}$ there exist a family of $n$-node partitioned networks
${\cal F}_X = \{ G_X(V,E,\C,\P)(n)\}$,
that satisfy all axioms except $\mathcal{A}_X$,
%that do not satisfy Axiom $\mathcal{A}_X$ but satisfy the other two axioms, 
and input matrices of size $n\times n$, and vectors of size $n$, for every $n$,
such that the time complexity of \emph{any} algorithm for finding median of all the initial values on the networks of ${\cal F}_X$, with the corresponding inputs is $\Omega(\log n)$.
\end{theorem}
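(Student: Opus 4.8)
The plan is to reuse, for each $X$, the same partitioned networks that appeared in the proof of Theorem~\ref{thm:independent}, obtaining the bound from the diameter lower bound of Theorem~\ref{clm:median_lower} whenever the diameter is large, and resorting to a two-party communication argument only in the one case where it is not. Throughout, the initial value at each node may be taken to be a single number; the matrix/vector phrasing in the statement is inherited from the uniform format of the other necessity theorems and is irrelevant here.

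For $X=E$ I would let ${\cal F}_E$ be the family of sun partitioned networks $S_n$, and for $X=C$ the family of lollipop partitioned networks $L_n$. By the proof of Theorem~\ref{thm:independent}, $S_n$ satisfies $\axiomBoundary$ and $\axiomConvergecast$ but violates $\axiomClique$, while $L_n$ satisfies $\axiomBoundary$ and $\axiomClique$ but violates $\axiomConvergecast$; both have diameter $\Omega(n)$. Since Theorem~\ref{clm:median_lower} shows that median finding requires $\Omega(D)$ rounds on any network of diameter $D$, on these two families it requires $\Omega(n)=\Omega(\log n)$ rounds, as needed.

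The interesting case is $X=B$, for which I would take ${\cal F}_B$ to be the family of dumbbell partitioned networks $D_n$; by the proof of Theorem~\ref{thm:independent} it satisfies $\axiomClique$ and $\axiomConvergecast$ but violates $\axiomBoundary$, and it has diameter $3$, so a fresh argument is required. Let $u,v$ be the two star centers. In $O(1)$ rounds $u$ can learn all $n/2$ values on its side and $v$ all $n/2$ values on its side, so the remaining work is a pure communication task between the two sides, and the two sides are joined only by the edge $(u,v)$, which carries $O(\log n)$ bits per round. I would reduce from the two-party median problem: Alice holds a set $A$ of $n/2$ numbers, Bob a set $B$ of $n/2$ numbers, and both must output the median of $A\cup B$. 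Given such an instance, place $A$ on $u$'s side and $B$ on $v$'s side; then a $T$-round distributed median algorithm on $D_n$ yields a two-party protocol exchanging only $O(T\log n)$ bits, since every message between the two sides crosses $(u,v)$. Using the known fact that the communication complexity of computing the median of two $\Theta(n)$-element sets is $\Omega(\log^2 n)$ bits, we conclude $T=\Omega(\log n)$.

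The crux, I expect, is the $X=B$ case: one must (i) choose the input so that the distributed instance faithfully encodes a hard two-party median instance --- for example, letting $A$ and $B$ occupy interleaved portions of the value range, so that the median genuinely depends on a long exchange and no shortcut is possible --- and (ii) supply the $\Omega(\log^2 n)$-bit lower bound for two-party median, either by citation or by a round-elimination / direct-sum argument establishing that $\Theta(\log n)$ essentially independent ``which-half'' decisions, each costing $\Theta(\log n)$ bits to resolve, must be made. The cases $X=E$ and $X=C$ are immediate from Theorem~\ref{clm:median_lower} and the diameter estimates for $S_n$ and $L_n$ already recorded in the proof of Theorem~\ref{thm:independent}.
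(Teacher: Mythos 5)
Your treatment of $X=E$ and $X=C$ coincides with the paper's: it uses the sun and lollipop families and the $\Omega(D)$ bound of Theorem~\ref{clm:median_lower}, exactly as the paper does by pointing back to Theorem~\ref{thm:necessity-matrix1}. Your $X=B$ case also starts the same way as the paper (dumbbell network $D_n$, centers collect their stars' values in $O(1)$ rounds, everything then reduces to a two-party problem across the single edge $(u,v)$ carrying $O(\log n)$ bits per round), but the key fact you invoke to finish is false, and this is a genuine gap. The deterministic two-party communication complexity of computing the median of two $\Theta(n)$-element sets is $\Theta(\log n)$ bits, not $\Omega(\log^2 n)$: a binary search over the value range solves it with $O(\log n)$ bits in total, and the paper itself remarks that ``the total communication complexity of the problem is known to be only $O(\log n)$ bits.'' Consequently your reduction, which turns a $T$-round algorithm into a protocol exchanging $O(T\log n)$ bits and then applies a total-communication lower bound, can never yield more than $T=\Omega(1)$; no round-elimination or direct-sum argument will rescue the $\Omega(\log^2 n)$-bit claim, because the upper bound contradicts it.

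What is actually needed -- and what the paper uses -- is a lower bound on the number of \emph{rounds} of a protocol in which each round is restricted to $O(\log n)$ bits (the synchronized bit communication model). The paper cites Soltys~\cite{soltys11}, who shows that median admits no deterministic protocol running in $O(\log^{1-\epsilon}n)$ such rounds, for any $\epsilon>0$, even though the total number of bits exchanged may be logarithmic; intuitively, the binary search cannot be parallelized within few fat rounds. So your reduction skeleton for $X=B$ is the right one, but the final ingredient must be this round-complexity result (or an equivalent argument about bounded-bandwidth rounds), not a total-communication bound.
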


\begin{proof}
%Now we show the necessity of the Axioms $\axiomBoundary$, $\axiomClique$ and $\axiomConvergecast$ for achieving $O(1)$ running time. Consider the following cases where in each case one of the axioms is not satisfied while the other two satisfied.
Necessity of $\axiomBoundary$: 
Consider the family of dumbbell partitioned networks $D_n$.
As discussed earlier, Axiom $\axiomBoundary$ is violated, while the others hold. 
In constant time, each of the two centers, 
A and B, can learn the inputs of its star. Now, the problem becomes for A and B 
to learn the median of the union of their sets. This operation is known to
require at least $\Omega(\log n)$ communication rounds. More formally,
it is shown in \cite{soltys11}, that the Median function does not admit 
a deterministic protocol using $O(\log^{1-\epsilon} n)$ rounds, and a logarithmic
amount of communication at each round, for any $\epsilon > 0$ 
(even though the total communication complexity of the problem
is known to be only $O(\log n)$ bits).
%The algorithm that does it in $O(\log n)$ bits is based on determining the bits of the median one by one from the most significant one downwards). 
This allows us to conclude the same lower bound in our case too.

The proof for $\axiomClique$ and $\axiomConvergecast$ is the same as in 
Thm. \ref{thm:necessity-matrix1}. 
%and is based on the diameter argument.
%\item $\axiomClique$ is not satisfied, $\axiomBoundary$,$\axiomConvergecast$ satisfied.
%
%Consider the spiked wheel graph $SW_n$ described in the proof of Claim 
%\ref{clm:exist_graph_diam_n}.
%As discussed there, Axioms $\axiomBoundary$ and $\axiomConvergecast$ hold.
%Clearly, the diameter of $SW_n$ is $\Omega(n)$ and thus the running time is $\Omega(n)$ due to the lower bound discussed earlier.
%
%\item $\axiomConvergecast$ is not satisfied, $\axiomBoundary$,$\axiomClique$ satisfied.
%
%Consider  the lollipop graph $L_n$ described in the proof of Claim 
%\ref{clm:exist_graph_diam_n}.
%As discussed before, $\axiomBoundary$ and $\axiomClique$ are satisfied 
%while the $\axiomConvergecast$  is not.
%Clearly, the diameter of $SW_n$ is $\Omega(n)$ and thus the running time is $\Omega(n)$ due to the lower bound discussed earlier.
\end{proof}

\subsection{Mode Finding (ModF)}

Let each node $v\in V$ hold some initial value. The {\em mode finding (ModeF)} task requires each node to know the value (values) that appears most frequently.
%
%We start with a claim on the lower bound.

\begin{theorem}\label{clm:mode_lower}
Any mode finding algorithm on any network requires $\Omega(D)$ rounds.
% On a $\C\P$-network the lower bound is $\Omega(1)$.
\end{theorem}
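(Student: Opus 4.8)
The plan is to prove this $\Omega(D)$ lower bound by a standard indistinguishability (``fooling'') argument, exactly in the spirit of the lower bounds for rank finding and median finding proved just above. Fix any network $G$ of diameter $D$, and pick two nodes $u,v$ with $dist(u,v)=D$. I would exhibit two input assignments $I_0$ and $I_1$ that differ \emph{only} in the value held by $v$, yet induce different modes, and then argue that $u$ cannot tell them apart before round $D$.

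For the construction, assume $n$ is large enough (for constant $D$ the claim is vacuous since $D$ is then $O(1)$; note $D\le n-1$, so ``$D$ large'' forces ``$n$ large''). Among the $n-1$ nodes in $V\setminus\{v\}$, assign initial values so that some value $a$ occurs exactly $t$ times, some other value $b$ occurs exactly $t$ times, and every remaining value occurs at most $t-1$ times. In $I_0$ let $v$ hold $a$, so $a$ is the unique mode (with $t+1$ occurrences); in $I_1$ let $v$ hold $b$, so $b$ is the unique mode. Since the mode-finding task requires \emph{every} node, in particular $u$, to output the mode, a correct algorithm must produce different outputs at $u$ on $I_0$ and on $I_1$ (and this remains true under the set-valued interpretation of the task, since the output sets $\{a\}$ and $\{b\}$ differ).

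Now run any correct algorithm on both instances. Because $I_0$ and $I_1$ agree on the input of every node except $v$, a routine causality / light-cone argument shows that the state of $u$ after $r<D=dist(u,v)$ rounds is identical in the two executions: no message whose content depends on $v$'s input can have reached $u$ in fewer than $dist(u,v)$ rounds. Hence $u$ cannot distinguish $I_0$ from $I_1$ before round $D$, and therefore cannot output the correct mode before then; so any correct algorithm takes $\Omega(D)$ rounds.

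The only delicate point, and the step I would check most carefully, is the fooling instance itself: we must ensure that the multiset on $V\setminus\{v\}$ is realizable with the prescribed multiplicities and that a single change at $v$ yields a genuinely different \emph{unique} mode, so that there is no ambiguity about ``the mode'' due to ties. (For a randomized algorithm one would instead invoke Yao's principle with a distribution over such instances, but for the deterministic statement claimed here the indistinguishability argument above is enough.)
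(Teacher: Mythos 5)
Your proof is correct and takes essentially the same approach as the paper's: the paper likewise places the mode-determining values far from $u$ (a value of frequency $2$ at nodes $v,w$, all other values distinct) and argues that information from $v$ must traverse distance up to $D$, which is exactly the distance-based indistinguishability idea you formalize with your two inputs $I_0,I_1$. The one point you flag — realizability of the fooling instance — is settled at once by taking $t=2$ (values $a,a,b,b$ among $V\setminus\{v\}$ and distinct values elsewhere), which works for all $n\ge 5$, while for smaller $n$ the bound is trivial.
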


\begin{proof}
For three nodes $u,v,w\in V$, assume an input, for which the most frequent value appears with frequency $2$, and is initially located at $v$ and $w$, while all the other nodes have other distinct values. Obviously, if $u$ needs to learn the mode, at least one bit of information must travel from $v$ to $u$, since otherwise $u$ will not be aware of the $v$'s existence. Since the distance $dist(u,v)$ can be as large as diameter, the lower bound for this task is $\Omega(D)$ communication rounds.
\end{proof}

\paragraph{\bf Algorithm $\algoFindModes$} 
ModF on a $\C\P$-network $\GVECP$.

%\begin{enumerate}

\noindent (1) Each node in $V$ sends its initial value to its representative in $\C$.

\noindent (2) Nodes in $\C$ perform sorting of all the values they have, and obtain the ranks of those values. So, each node $i\in \C$ now knows values with indices $i(\nc - 1) + 1,\ldots i\nc$, according to the total order of all values. 

\noindent (3) Each node in $\C$ sends to each other node in $\C$:
%\begin{enumerate}
(i) its most frequent value (values) and its frequency,
(ii) its minimum value and its frequency,
(iii) its maximum value and its frequency.
%\end{enumerate}
\\$[*$ The minimum and maximum are needed in order capture the most frequent values that appear at the boundaries of the ranges. $*]$

\noindent (4) Each node in $\C$ finds the most frequent value (values), and delivers it to the nodes in $\P$ it represents. 

%\end{enumerate}

\begin{theorem}
On a $\C\P$-network $\GVECP$, the ModF task can be completed in $O(k)$ rounds with high probability. 
\end{theorem}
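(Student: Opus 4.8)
The plan is to bound the four steps of Algorithm~$\algoFindModes$ separately, in the same style used for the earlier tasks. In fact each step will turn out to cost $O(1)$ rounds with high probability, so the overall running time is $O(1)$, which is in particular within the claimed bound (as $k\ge 1$).

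Step~1 is a convergecast of one value per node to its representative, so it costs $O(1)$ rounds by Axiom~$\axiomConvergecast$, and by Axiom~$\axiomBoundary$ together with Theorem~\ref{thm:cp_property2}(2) every representative then holds $O(\nc)$ values. Step~2 is a distributed sort inside the core: the core is a $\Theta(1)$-clique-emulator (Axiom~$\axiomClique$) and each core node holds $O(\nc)$ keys while the total number of values is $O(\nc^2)$, so by Theorem~\ref{thm:lenzen_podc2013_sorting} (in the extended form noted in the Observation) it takes $O(1)$ rounds; afterwards core node~$i$ holds the $i$-th contiguous block $B_i$ of the globally sorted multiset of values. In Step~3 each core node sends to each of the $\nc-1$ other core nodes three $(\text{value},\text{count})$ pairs, namely those of the mode, the minimum and the maximum of its block; this is $O(\nc)$ messages sent and $O(\nc)$ received per node, so by Theorem~\ref{thm:lenzen_stoc2011} with $M_s=M_r=O(\nc)$ it is again $O(1)$ rounds. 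Step~4 forwards the computed answer from each core node down to the periphery nodes it represents, in $O(1)$ rounds by Axiom~$\axiomConvergecast$.

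The substance of the proof is the correctness of the aggregation in Step~3, i.e.\ the claim that the mode/min/max triples broadcast there suffice for every core node to determine the global mode. Here one uses that the values are globally sorted, so a value $v$ occupies a contiguous range of positions and hence a contiguous range of blocks $B_a,\dots,B_b$: it fills each interior block $B_{a+1},\dots,B_{b-1}$ entirely (so there $v$ is the unique mode with count equal to the block size), it forms a suffix of $B_a$ (so $v=\max(B_a)$, and the reported maximum count of $B_a$ is exactly that suffix length), and it forms a prefix of $B_b$ (so $v=\min(B_b)$, with the reported minimum count equal to that prefix length). Summing these three contributions recovers the global frequency of every boundary-spanning value, while a value lying strictly inside one block has frequency at most that block's reported mode count; hence the global maximum frequency $f^{*}$, and a value attaining it, are always determined, and every core node computes them identically. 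The main obstacle is turning this into a clean statement: one must fix a consistent counting convention at block boundaries to avoid double counting (a block that is a single repeated value is reported as mode, minimum and maximum simultaneously and must be counted once), handle the degenerate cases $b=a$ and $b=a+1$, and specify how ties within a block are reported so that the \emph{set} of all modes — not just one representative — is eventually delivered. The running-time analysis itself is routine given Theorems~\ref{thm:lenzen_stoc2011} and~\ref{thm:lenzen_podc2013_sorting} and the axioms.
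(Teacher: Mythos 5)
Your step-by-step timing analysis (convergecast via $\axiomConvergecast$ for Steps~1 and~4, the sorting result of Theorem~\ref{thm:lenzen_podc2013_sorting} with $\axiomClique$ for Step~2, and $O(1)$ values exchanged between every pair of core nodes for Step~3) is exactly the paper's argument and correctly gives $O(1)\subseteq O(k)$ rounds with high probability. The boundary-aggregation correctness discussion you add goes beyond the paper's proof, which analyzes only the running time of Algorithm~$\algoFindModes$ (and simply assumes there are $O(1)$ most frequent values in Step~4), so the loose ends you flag there do not affect the match with the paper's reasoning.
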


\begin{proof}
Consider Algorithm $\algoFindModes$ and the $\C\P$-network $\GVECP$.
The first step takes $O(1)$ due to Axiom $\axiomConvergecast$. 
%($O(1)$ convergecast). 
Now, each representative $u\in \C$ has $O(\nc)$ values due to Axiom $\axiomBoundary$.
%(balanced boundary).
Step 2, is performed in $O(1)$ rounds, due the Theorem \ref{thm:lenzen_podc2013_sorting} and Axiom $\axiomClique$.
% (clique emulation). 
Step 3 takes $O(1)$, due to Axiom $\axiomClique$
% (clique emulation) 
since each node in $\C$ needs to send $O(1)$ values to all the other nodes in $\C$.
The last step will take $O(1)$ rounds due to $\axiomConvergecast$ (and assuming there are $O(1)$ most frequent values).
\end{proof}

\begin{theorem}\label{thm:necessity-mode}
For each $X \in\{B,E,C\}$ there exist a family of $n$-node partitioned networks
${\cal F}_X = \{ G_X(V,E,\C,\P)(n)\}$ 
that satisfy all axioms except $\mathcal{A}_X$,
%that do not satisfy Axiom $\mathcal{A}_X$, but satisfy the other two axioms, 
and input matrices of size $n\times n$ and vectors of size $n$, for every $n$,
such that the time complexity of \emph{any} algorithm for the finding mode on the networks of ${\cal F}_X$, with the corresponding inputs is $\Omega(n/\log n)$.
\end{theorem}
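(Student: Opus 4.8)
The plan is to follow the same three-case template used in the necessity proofs for the matrix tasks (Theorems \ref{thm:necessity-matrix1}--\ref{thm:necessity-matrix3}), adapting only the $\axiomBoundary$ case to mode finding. For $X=E$ and $X=C$ the argument is identical to before: the sun network $S_n$ violates $\axiomClique$ and has diameter $\Omega(n)$, and the lollipop network $L_n$ violates $\axiomConvergecast$ and also has diameter $\Omega(n)$; since by Theorem \ref{clm:mode_lower} any mode-finding algorithm needs $\Omega(D)$ rounds on a network of diameter $D$, we immediately get an $\Omega(n) = \Omega(n/\log n)$ lower bound on both families. So the only real work is the $\axiomBoundary$ case, where we must exhibit a graph that satisfies $\axiomClique$ and $\axiomConvergecast$ yet forces $\Omega(n/\log n)$ rounds for mode finding.

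For that case I would use the dumbbell partitioned network $D_n$, exactly as in Theorems \ref{thm:necessity-matrix2} and \ref{thm:necessity-matrix3}. The idea is a reduction from a two-party communication problem across the single bridge edge $(u,v)$ joining the two star centers: in $O(1)$ rounds each center learns all inputs in its own star, so computing the mode of all $n$ values reduces to a two-party problem where $u$ and $v$ each hold a multiset of about $n/2$ values and must jointly determine the global mode. I would reduce from \textsc{Set-Disjointness} (or directly from the communication complexity of multiset-mode): give $u$ a multiset encoding a set $S \subseteq [N]$ and $v$ a multiset encoding $T \subseteq [N]$ over a universe of size $N = \Theta(n)$, arranged so that every universe element already appears with multiplicity $c$ among ``filler'' values spread over the two stars, and element $i$ gets one extra copy in $u$'s star iff $i \in S$ and one extra copy in $v$'s star iff $i \in T$; then the global maximum frequency is $c+2$ precisely when $S \cap T \neq \emptyset$, and otherwise it is $c+1$. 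Since \textsc{Set-Disjointness} on $N$-bit inputs requires $\Omega(N) = \Omega(n)$ bits of communication, and each round across the cut $(u,v)$ carries only $O(\log n)$ bits, any mode-finding algorithm on $D_n$ needs $\Omega(n/\log n)$ rounds. It remains to recall, as established in the proof of Theorem \ref{thm:independent}, that $D_n$ satisfies $\axiomClique$ and $\axiomConvergecast$ but not $\axiomBoundary$, which completes this case.

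The main obstacle is making the reduction to a genuinely hard two-party function clean, rather than to \textsc{Equality}: \textsc{Equality} has a randomized protocol using only $O(\log n)$ bits, so against randomized algorithms we need a function like \textsc{Set-Disjointness} whose $\Omega(N)$ bound is robust to randomization. I would therefore phrase the reduction in terms of \textsc{Disjointness} and invoke its $\Omega(N)$ randomized communication lower bound, noting that the filler-multiplicity construction above lets $u$ and $v$ read off disjointness from the value of the mode alone, with all post-processing (disseminating the answer back to the periphery) costing only $O(1)$ extra rounds via $\axiomClique$ and $\axiomConvergecast$. A minor point to get right is the bookkeeping: the universe size, the number of filler copies, and the split of values between the two stars must all fit within $n$ nodes holding one value each; choosing $N = \lfloor (n-2)/4 \rfloor$ and $c=1$ comfortably suffices.
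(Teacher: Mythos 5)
Your proposal matches the paper's proof in all essentials: the $\axiomClique$ and $\axiomConvergecast$ cases are handled by the same diameter argument on $S_n$ and $L_n$, and the $\axiomBoundary$ case uses the dumbbell network $D_n$ with a reduction from \textsc{Set-Disjointness} across the single bridge edge, encoding intersection as a bump in the maximum frequency (the paper lets each element appear once or twice per side so the mode is $4$ iff the doubly-occurring sets intersect, which is the same device as your filler-multiplicity construction). Your explicit remark that one must use \textsc{Disjointness} rather than \textsc{Equality} to withstand randomization is a fair refinement, but the overall route is the same as the paper's.
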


\begin{proof}
%Now we show the necessity of the Axioms $\axiomBoundary$, $\axiomClique$ and $\axiomConvergecast$ for achieving $O(1)$ running time. Consider the following cases where in each case one of the axioms is not satisfied while the other two satisfied.
Necessity of $\axiomBoundary$: 
Consider the family of dumbbell partitioned networks $D_n$.
As discussed earlier, Axiom $\axiomBoundary$ is violated while the other hold. 
Assume such an input that every element 
appears exactly once or twice on each side (for simplicity, assume there are 
$n/4$ types of elements altogether, and some nodes do not have any element). 
Hence, the most frequent element will appear 2, 3 or 4 times in the graph. 
The case where the answer is 2 occurs only when every element appears exactly 
once on every side. This case is easy to check in a constant amount of 
communication between the two centers, so we assume we do this check first, 
and it remains to consider the case where this does not happen. 
It thus remains to decide whether the answer is 3 or 4. 
To do that, A (the center of the first star) defines a set $S_A$ of all 
the elements that appear twice in its star, and B defines a set $S_B$ 
similarly for its side. Now the answer is 4 if'f the sets $S_A$ and $S_B$ 
intersect. Set disjointness has communication complexity $n$, 
so A and B must exchange at least $n$ bits, or, at least $\Omega(n/\log n)$ 
messages. Since these messages all cross the single edge connecting 
the two centers, they require this much time.

The proof for $\axiomClique$ and $\axiomConvergecast$ is the same as in 
Thm. \ref{thm:necessity-matrix1}.
% and is based on the diameter argument.
%\item $\axiomClique$ is not satisfied, $\axiomBoundary$,$\axiomConvergecast$ satisfied.
%
%Consider the spiked wheel graph $SW_n$ described in the proof of Claim 
%\ref{clm:exist_graph_diam_n}.
%As discussed there, Axioms $\axiomBoundary$ and $\axiomConvergecast$ hold.
%Clearly, the diameter of $SW_n$ is $\Omega(n)$ and thus the running time is $\Omega(n)$ due to the lower bound discussed earlier.
%
%\item $\axiomConvergecast$ is not satisfied, $\axiomBoundary$,$\axiomClique$ satisfied.
%
%Consider  the lollipop graph $L_n$ described in the proof of Claim 
%\ref{clm:exist_graph_diam_n}.
%As discussed before, $\axiomBoundary$ and $\axiomClique$ are satisfied 
%while the $\axiomConvergecast$  is not.
%Clearly, the diameter of $SW_n$ is $\Omega(n)$ and thus the running time is $\Omega(n)$ due to the lower bound discussed earlier.
\end{proof}

\subsection{Finding the number of distinct values (DF)}

Let each node $v\in V$ hold some initial value. The {\em number of distinct values finding (DF)} requires each node to know the number of distinct values present in the network.
%
%We start with a claim on the lower bound.

\begin{theorem}\label{clm:distinct_lower}
Any DF algorithm on any network requires $\Omega(D)$ rounds.
% On a $\C\P$-network the lower bound is $\Omega(1)$.
\end{theorem}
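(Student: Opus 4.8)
The plan is to use a standard indistinguishability (causality) argument, exactly parallel to the lower bounds already established in this section for rank finding, median finding and mode finding. First I would fix an arbitrary network $G(V,E)$ of diameter $D$, and choose two nodes $u,v\in V$ with $dist(u,v)=D$. I will then exhibit two input assignments that any correct DF algorithm is forced to treat differently at $u$, yet which $u$ cannot tell apart during the first $D-1$ rounds.

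Concretely, assign to every node other than $v$ pairwise-distinct values $a_1,\dots,a_{n-1}$. In the first instance give $v$ a fresh value $a_n\notin\{a_1,\dots,a_{n-1}\}$, so the number of distinct values in the network is $n$; in the second instance give $v$ the value $a_1$, so the number of distinct values is $n-1$. A correct DF algorithm must have $u$ eventually output $n$ in the first instance and $n-1$ in the second, so the local state of $u$ must differ between the two executions by the time the algorithm halts. Then I would invoke the elementary fact that in the synchronous CONGEST model the state of a node after $t$ rounds depends only on the initial inputs of nodes at distance at most $t$ from it. Since the two instances differ only in the input of $v$, and $dist(u,v)=D$, the two executions are identical at $u$ for every round $t<D$; hence the algorithm cannot have terminated correctly at $u$ before round $D$, giving the claimed $\Omega(D)$ bound.

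I do not expect a genuine obstacle here; the argument is essentially the same two-instance reduction used for the neighbouring lower bounds. The only points requiring minor care are ensuring $n$ is large enough for the construction to make sense (any $n\ge 2$ works, and the freedom in choosing $a_1$ needs $n\ge 3$), and stating the distance-$t$ causality property precisely for the model at hand so that the indistinguishability conclusion is rigorous.
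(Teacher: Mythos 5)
Your proposal is correct and follows essentially the same approach as the paper: the paper's proof also argues that information about $v$'s input must reach $u$ across a distance that can be as large as the diameter, it simply states this informally ("at least one bit must travel from $v$ to $u$") rather than spelling out the two-instance indistinguishability and the distance-$t$ causality fact as you do. Your version is just a more rigorous rendering of the same argument.
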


\begin{proof}
For two nodes $u,v\in V$, assume all input values in the network are distinct. Obviously, that if $u$ needs to learn the number of distinct values, at least one bit of information must travel from $v$ to $u$, since otherwise, $u$ will not be aware of the $v$'s existence. Since the distance $dist(u,v)$ can be as large as diameter, the lower bound for this task is $\Omega(D)$ communication rounds.
\end{proof}

\paragraph{\bf Algorithm $\algoFindDistinct$} 
DF on a $\C\P$-network $\GVECP$.

%\begin{enumerate}

\noindent (1) Each node in $V$ sends its initial value to its representative in $\C$.

\noindent (2) Nodes in $\C$ perform sorting of all the values they have, and obtain the ranks of those values. 
\\$[*$ Now, each node $i\in \C$ now knows values with indices $i(\nc - 1) + 1,\ldots i\nc$ according to the total order of all values. $*]$

\noindent (3) Each node in $\C$ sends to every other node in $\C$, the number of distinct values and the two border values (min,max). 
\\$[*$ Now, every node in $\C$ is able to find the number of distinct values (for each repeated border value, decrease $1$ from the total count). $*]$

\noindent (4) Each representative delivers the number of distinct values to the nodes in $\P$ it represents. 

%\end{enumerate}

\begin{theorem}
On a $\C\P$-network $\GVECP$, the DF task requires $O(1)$ rounds with high probability. 
\end{theorem}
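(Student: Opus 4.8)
The plan is to analyze Algorithm $\algoFindDistinct$ step by step, following exactly the template already used for the running-time proofs of $\algoFindRank$, $\algoMedian$ and $\algoFindModes$: the task fits the same ``collect everything into the core, process it inside the core, disseminate the answer back'' pattern, so the four steps can be bounded one by one using the three axioms plus the sorting and routing primitives.

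Concretely, Step~1 (each node ships its value to its representative) costs $O(1)$ rounds by Axiom $\axiomConvergecast$. By Axiom $\axiomBoundary$ together with Theorem~\ref{thm:cp_property2}(2), each representative represents only $O(\nc)$ periphery nodes and thus holds only $O(\nc)$ values, which is within the regime of Theorem~\ref{thm:lenzen_podc2013_sorting} (as extended by the Observation following it); hence Step~2 --- sorting all values across the $\nc$ core nodes so that each core node learns its contiguous window of the global order --- runs in $O(1)$ rounds, using Axiom $\axiomClique$ to emulate each round of the complete-network sorting algorithm with $O(1)$ overhead. In Step~3 every core node broadcasts only $O(1)$ items (its local distinct count together with the minimum and maximum of its window), so by Axiom $\axiomClique$ this too takes $O(1)$ rounds, after which every core node holds all $\nc$ triples. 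Step~4 pushes the single resulting number back to the periphery in $O(1)$ rounds by Axiom $\axiomConvergecast$. Summing yields $O(1)$ rounds overall, and the high-probability qualifier is inherited --- exactly as in the earlier aggregate-function theorems --- from the randomized routing primitive ($\sndmsg$) underpinning communication inside the core.

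The step I would treat most carefully, and the only genuinely non-routine point, is the correctness of the counting done at the end of Step~3. After sorting, core node $i$ owns a contiguous window $W_i$ of the globally sorted list, so the number of distinct values equals $\sum_i (\text{number of distinct values in }W_i)$ minus a correction for each value counted in more than one window. A value $v$ occurring in windows $W_i,\dots,W_{i+t}$ is over-counted exactly $t$ times, and this is detected precisely by the $t$ boundary coincidences $\max W_{i+j}=\min W_{i+j+1}$ for $0\le j<t$; subtracting one for each such coincidence therefore gives the exact global count, even when a value spans several whole windows (the interior windows then have coinciding minimum and maximum and still contribute exactly one to the corrected total). Since after Step~3 every core node possesses all $(\text{count},\min,\max)$ triples, each performs this deterministic arithmetic independently, so the output is well defined and correct, which completes the argument.
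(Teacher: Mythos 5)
Your proposal is correct and follows essentially the same step-by-step analysis as the paper's own proof: Steps 1 and 4 are charged to Axiom $\axiomConvergecast$, Step 2 to the distributed sorting theorem together with Axiom $\axiomClique$ (after Axiom $\axiomBoundary$ bounds each representative's load by $O(\nc)$ values), and Step 3 to Axiom $\axiomClique$ since each core node sends only $O(1)$ items to every other core node. Your added verification that subtracting one per coinciding window boundary yields the exact distinct count is extra detail beyond the paper's (purely running-time) argument, but it does not change the approach.
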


\begin{proof}
Consider Algorithm $\algoFindDistinct$ and the $\C\P$-network $\GVECP$.
The first step takes $O(1)$ due to Axiom $\axiomConvergecast$. 
%($O(1)$ convergecast). 
Now, each representative $u\in \C$ has $O(\nc)$ values due to the Axiom $\axiomBoundary$. 
%(balanced boundary).
Step 2, is performed in $O(1)$ rounds, due the Theorem \ref{thm:lenzen_podc2013_sorting} and Axiom $\axiomClique$.
% (clique emulation). 
Step 3 takes $O(1)$, due to Axiom $\axiomClique$ 
%(clique emulation) 
since each node in $\C$ needs to send $O(1)$ values to all the other nodes in $\C$.
The last step will take $O(1)$ rounds, due to $\axiomConvergecast$.
\end{proof}

\begin{theorem}\label{thm:necessity-distinct}
For each $X \in\{B,E,C\}$ there exist a family of $n$-node partitioned networks
${\cal F}_X = \{ G_X(V,E,\C,\P)(n)\}$,
that satisfy all axioms except $\mathcal{A}_X$,
%that do not satisfy Axiom $\mathcal{A}_X$ but satisfy the other two axioms, 
and input matrices of size $n\times n$ and vectors of size $n$, for every $n$,
such that the time complexity of \emph{any} algorithm for finding the number of distinct values on the networks of ${\cal F}_X$, with the corresponding inputs is $\Omega(n/\log n)$.
\end{theorem}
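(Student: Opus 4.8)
The plan is to follow exactly the template used in the earlier necessity theorems (e.g.\ Theorem~\ref{thm:necessity-mode}): dispose of the violation of $\axiomClique$ and $\axiomConvergecast$ by the generic diameter lower bound of Theorem~\ref{clm:distinct_lower}, and handle the violation of $\axiomBoundary$ by a reduction from a two-party communication problem routed through the single bottleneck edge of the dumbbell network. For the latter, the natural choice is \emph{set disjointness}, whose communication complexity is $\Omega(n)$ (already invoked in the proof of Theorem~\ref{thm:necessity-mode}).

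For the necessity of $\axiomBoundary$ I would take the dumbbell partitioned network $D_n$, whose two star centers $u$ and $v$ are joined by a single edge and each of whose sides carries $\Theta(n)$ nodes; as noted earlier, $D_n$ satisfies $\axiomClique$ and $\axiomConvergecast$ but not $\axiomBoundary$. Set $m=\Theta(n)$ with $m$ no larger than the number of nodes on either side. Given disjointness inputs $x,y\in\{0,1\}^m$, place on the nodes of $u$'s star the value $\alpha_i$ for each index $i$ with $x(i)=1$, and fill every remaining node of that star with a fresh pairwise-distinct dummy value; do the symmetric thing on $v$'s star with $y$, using dummies distinct from all others. A short bookkeeping computation then shows that the total number of distinct values in the network equals $n-|x\wedge y|$, an exact invertible function of the intersection size; in particular, the distinct count is $n$ if and only if $x$ and $y$ are disjoint. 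Hence any distributed DF algorithm running in $T$ rounds on $D_n$ induces a two-party protocol whose entire transcript consists of the $O(T\log n)$ bits that cross the edge $(u,v)$, so $T=\Omega(n/\log n)$. (The same argument applies to randomized algorithms, since randomized set disjointness is also $\Omega(n)$.)

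For the necessity of $\axiomClique$ and of $\axiomConvergecast$ I would argue exactly as in Theorem~\ref{thm:necessity-matrix1}: take the sun partitioned network $S_n$ (resp.\ the lollipop partitioned network $L_n$), which satisfies the other two axioms and has diameter $\Omega(n)$; by Theorem~\ref{clm:distinct_lower}, any DF algorithm on it needs $\Omega(n)=\Omega(n/\log n)$ rounds. The main obstacle is the $\axiomBoundary$ reduction, and the delicate point there is purely a matter of careful setup rather than deep argument: one must choose the dummy values and count the nodes on each side so that the distinct-value count is an exact function of $|x\wedge y|$ with no publicly-unknown slack, and verify that the simulation of the distributed algorithm by the two parties charges communication only for messages actually sent across $(u,v)$. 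The remaining two cases are immediate from the diameter bound already established.
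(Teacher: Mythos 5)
Your proposal is correct and follows essentially the same route as the paper: the dumbbell network $D_n$ with a reduction from set disjointness across the single bottleneck edge for the $\axiomBoundary$ case, and the diameter argument on the sun and lollipop networks for $\axiomClique$ and $\axiomConvergecast$. The only difference is cosmetic — you pad with distinct dummies so the distinct count equals $n-|x\wedge y|$, whereas the paper has the two centers exchange $|x|$ and $|y|$ and test $|x\vee y|=|x|+|y|$ — both yield the same $\Omega(n/\log n)$ bound.
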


\begin{proof}
%Now we show the necessity of the Axioms $\axiomBoundary$, $\axiomClique$ and $\axiomConvergecast$ for achieving $O(1)$ running time. Consider the following cases where in each case one of the axioms is not satisfied while the other two satisfied.
Necessity of $\axiomBoundary$: 
Consider the family of dumbbell partitioned networks $D_n$.
As discussed earlier, Axiom $\axiomBoundary$ is violated, while the others hold. 
Assume that the inputs are taken out of 
a range of $m$ distinct possible values. In constant time, the two star 
centers A and B can collect $m$-bit vectors $x$ and $y$ respectively, 
representing the values that exist in their respective stars. 
The goal is for A and B to decide the number of distinct values in the graph, 
i.e., the number of 1's in the vector $x \vee y$. We show a reduction from 
set disjointness to this problem, hence, the lower bound for set disjointness 
holds for it. Assume we are given a procedure $P$ for our problem. 
We use it to solve set disjointness as follows. 
(1) A computes $|x|$ and informs B. 
(2) B computes $|y|$ and informs A. 
(3) A and B invoke procedure $P$ and compute $|x\vee y|$. 
(4) The answer is ``yes" (the sets are disjoint) iff $|x\vee y| = |x|+|y|$. 
It is easy to verify that the reduction is correct, hence, we get the desired 
lower bound of $\Omega(m/\log n)$ on the number of round required for finding
the number of distinct values.   
\par\noindent
The proof for $\axiomClique$ and $\axiomConvergecast$ is the same as in 
Thm. \ref{thm:necessity-matrix1}.
%, and is based on the diameter argument.
%\item $\axiomClique$ is not satisfied, $\axiomBoundary$,$\axiomConvergecast$ satisfied.
%
%Consider the spiked wheel graph $SW_n$ described in the proof of Claim 
%\ref{clm:exist_graph_diam_n}.
%As discussed there, Axioms $\axiomBoundary$ and $\axiomConvergecast$ hold.
%Clearly, the diameter of $SW_n$ is $\Omega(n)$ and thus the running time is $\Omega(n)$ due to the lower bound discussed earlier.
%
%\item $\axiomConvergecast$ is not satisfied, $\axiomBoundary$,$\axiomClique$ satisfied.
%
%Consider  the lollipop graph $L_n$ described in the proof of Claim 
%\ref{clm:exist_graph_diam_n}.
%As discussed before, $\axiomBoundary$ and $\axiomClique$ are satisfied 
%while the $\axiomConvergecast$  is not.
%Clearly, the diameter of $SW_n$ is $\Omega(n)$ and thus the running time is $\Omega(n)$ due to the lower bound discussed earlier.
\end{proof}

%\subsection{Getting the top  \texorpdfstring{$k$}{k} elements ranked by areas}
\subsection{Get the top $k$ ranked by areas (TopK)}

Let each node $v\in V$ hold some initial value. Assume that each value is assigned to a specific area of a total $\sqrt{n}$ areas. E.g., values may represent news and areas topics, so that each news belongs to a specific topic. Assume also, that each node in $V$ is interested in one specific area.
The {Getting the top $k$ ranked values by areas (TopK)} task is to deliver to each node in $V$ the largest $k$ values, from the area it is interested in.
%
%We start with a claim on the lower bound.

\begin{theorem}\label{clm:k_largest_areas_lower}
Any TopK algorithm on any network requires $\Omega(D)$ rounds. 
On a $\C\P$-network $\Omega(k)$ rounds are required.
\end{theorem}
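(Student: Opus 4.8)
The plan is to establish the two bounds separately, reusing the template of the earlier aggregate‑function lower bounds for the network‑diameter part, and an information‑bottleneck argument (as in the matrix‑transposition lower bound, Theorem~\ref{clm:transpose_lower}) for the $\C\P$ part.

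For the $\Omega(D)$ bound on an arbitrary network, I would fix two nodes $u,v$ with $dist(u,v)=D$, let $u$ be interested in some fixed area $a$, and compare two inputs that differ only in the value held by $v$: in the first, $v$ holds a value of area $a$ that is the unique maximum among all area‑$a$ values, and in the second it holds an area‑$a$ value smaller than the current $k$‑th largest one. The correct output that $u$ must produce differs between the two instances, yet in fewer than $D$ rounds no information whatsoever can propagate from $v$ to $u$, so $u$ cannot distinguish the instances — a contradiction. Hence $\Omega(D)$ rounds are necessary. This mirrors the proofs of Theorems~\ref{clm:rank_lower} and \ref{clm:median_lower}.

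For the $\Omega(k)$ bound, I would use the fact (seen in Theorem~\ref{thm:cp_property2} and Fig.~\ref{fig:examples}(I)) that there exist $\C\P$-networks containing a periphery node $u$ of degree $1$. Fix such a network and an input in which area $a$ contains exactly $k$ distinct values, stored one per node at $k$ nodes all different from $u$, while $u$ is interested in $a$ (and itself holds a value from another area). Then the output $u$ must produce is precisely this $k$-element set, and every bit $u$ ever learns must cross its single incident edge, which carries only $O(\log n)$ bits — i.e.\ essentially one value — per round. Letting the $k$ values of area $a$ range over all $k$-subsets of an $n$-element universe of values, $u$ must distinguish $\binom{n}{k}$ distinct outputs, forcing $\log\binom{n}{k}=\Omega\!\left(k\log(n/k)\right)$ bits across that edge; dividing by the $O(\log n)$-bit per‑round capacity yields $\Omega(k)$ rounds in the natural regime $k=O(\sqrt n)$ (where $\log(n/k)=\Theta(\log n)$), which is the regime relevant to the matching $O(k)$ upper bound.

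The indistinguishability step for the $\Omega(D)$ bound and the structural input of the $\Omega(k)$ bound are routine given the CONGEST model and Theorem~\ref{thm:cp_property2}. I expect the only point requiring genuine care to be making the counting rigorous: one must argue that the $k$ area‑$a$ values are truly not inferable by $u$ from its own input and the topology alone, so that all of them must physically traverse $u$'s unique edge; the cleanest way to do this is to let these values be an arbitrary $k$-subset (or $k$ i.i.d.\ uniform draws) of a large universe and invoke a counting/entropy bound on the traffic across that edge. This bookkeeping is the main — though still modest — obstacle.
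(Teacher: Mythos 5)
Your proposal is correct and follows essentially the same route as the paper: the $\Omega(D)$ bound via an indistinguishability/diameter argument between $u$ and a distant node holding a value of $u$'s area, and the $\Omega(k)$ bound via a $\C\P$-network with a degree-$1$ periphery node whose single incident edge must carry $k$ values. Your entropy/counting refinement merely makes rigorous what the paper states informally (``delivering $\Omega(k)$ messages over one edge takes $\Omega(k)$ rounds''), so no substantive difference.
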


\begin{proof}
First, let us show the lower bound for any network.
For two nodes $u,v\in V$, assume input in which $u$ is interested in the value initially stored at $v$. Obviously, delivering the value from $v$ to $u$ will take at least $dist(u,v)$. Since the distance $dist(u,v)$ may be as large as diameter, the lower bound on the running time is $\Omega(D)$.

For a $\C\P$-network, the lower bound is $\Omega(k)$, since obviously, there are inputs for which $k$ values must be delivered to a node in $\P$. There are $\C\P$-networks, in which minimum degree is $1$ (see Figure \ref{fig:examples}(I) for an illustration) and hence delivering $\Omega(k)$ messages will require $\Omega(k)$ communication rounds.
\end{proof}

\paragraph{\bf Algorithm $\algoFindKTopAreas$} 
TopK on a $\C\P$-network $\GVECP$.

Without loss of generality, assume that all the values are taken from the range $[1,\ldots,n]$, and each area has its own range for its values (e.g., politics $[1,\ldots,100]$, sports $[101,\ldots,200]$, etc.). 

%\begin{enumerate}

\noindent (1) Each node in $V$ sends its initial value to its representative in $\C$.

\noindent (2) Perform sorting using Theorem \ref{thm:lenzen_podc2013_sorting} and Axiom $\axiomClique$. 
%(clique emulation). 
\\$[*$ Now, each node $i\in \C$ knows values with indices $i(\nc - 1) + 1,\ldots i\nc$ according to the total order of all values. $*]$

\noindent (3) Each node in $\C$ sends the largest $k$ values from each area, to the appropriate node in $\C$, so that each node in $\C$ will be responsible for at most one area (recall that we have $\sqrt{n}$ areas and $\nc=\Omega(\sqrt{n})$). 

\noindent (4) Each representative sends requests for areas (up to $O(\nc)$ areas) requested by nodes it represents. Each request is destined to the specific node in $\C$ responsible for the requested area. Upon request, each node in $\C$ returns the $k$ largest values, for the area it is responsible for, to the requesting nodes. 

\noindent (5) Each representative $u\in\C$ delivers the values to the nodes in $\P$ it represents. 

%\end{enumerate}

\begin{theorem}
On a $\C\P$-network $\GVECP$, the TopK task requires $O(k)$ rounds with high probability. 
\end{theorem}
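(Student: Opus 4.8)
\noindent\emph{Proof proposal.}
The plan is to bound the running time of each of the five steps of Algorithm $\algoFindKTopAreas$, exactly as was done for the earlier $\C\P$-network algorithms, and to observe that the two bottleneck steps cost $O(k)$ while the rest cost $O(1)$, so that the total is $O(k)$.

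First consider Steps 1 and 2. Step 1 is a convergecast of one value per node to its representative, so by Axiom $\axiomConvergecast$ it takes $O(1)$ rounds, and by Axiom $\axiomBoundary$ (Theorem \ref{thm:cp_property2}(2)) each representative ends up holding $O(\nc)$ values; in total the $n$ values sit on the $\nc=\Omega(\sqrt n)$ core nodes. Step 2 applies Theorem \ref{thm:lenzen_podc2013_sorting}, together with the Observation extending it to $O(\nc)$ keys per node, inside the core, which by Axiom $\axiomClique$ is a $\Theta(1)$-clique emulator; this sorts all values in $O(1)$ rounds, after which each core node holds a contiguous block of $O(n/\nc)=O(\nc)$ values of the global order.

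Next come Steps 3 and 4, which I expect to be the main obstacle and which require the careful bookkeeping. In Step 3 each core node forwards, for every area whose value-range meets the contiguous block it holds, the largest $k$ such values it owns; since these are a subset of its own $O(\nc)$ values it sends $M_s=O(\nc)$ messages in total, while the node responsible for a fixed area receives at most $k$ candidates from each of the (at most $\nc$) core nodes whose block touches that area, so $M_r=O(k\nc)$. Hence by Theorem \ref{thm:lenzen_stoc2011} Step 3 costs $O((M_s+M_r)/\nc)=O(k)$ rounds, after which each area-owner can locally compute the true top-$k$ of its area; correctness follows because the areas occupy disjoint contiguous ranges of the global order, so the global top-$k$ of an area is contained in the candidate set it receives. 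In Step 4 each representative issues at most one request per distinct area wanted by the $O(\nc)$ periphery nodes it serves, so $M_s=O(\nc)$ for requests; an area-owner is the target of at most one request from each of the $\nc$ representatives, giving $M_r=O(\nc)$, and it then replies with $k$ values to each requester, which is $M_s=O(k\nc)$ for the replies and $M_r=O(k\nc)$ at each representative (it asked for $O(\nc)$ areas). Applying Theorem \ref{thm:lenzen_stoc2011} to the request round and the reply round gives $O(k)$ rounds for Step 4. The delicate points here are precisely that requests are aggregated at representatives, so no core node is targeted by more than $O(\nc)$ requests, and that a core node never needs to send out more values than it holds.

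Finally, Step 5 sends each periphery node its $k$ requested values from its representative along the routes of the convergecaster; using Axiom $\axiomConvergecast$ in the core-to-periphery direction (as in Step 3 of Algorithm $\algoTranspose$ and Step 8 of Algorithm $\algoMatrixByVector$) this takes $O(1)$ rounds per value, hence $O(k)$ rounds overall. Summing up, Steps 1 and 2 cost $O(1)$ and Steps 3, 4, 5 cost $O(k)$, so the algorithm runs in $O(k)$ rounds; the bound holds with high probability because the only randomized ingredient is the routine $\sndmsg$ of Theorem \ref{thm:lenzen_stoc2011}, each invocation of which succeeds with high probability, and only $O(1)$ invocations are made.
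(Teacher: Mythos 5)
Your proposal is correct and follows essentially the same step-by-step accounting as the paper's own proof: convergecast (Axiom $\axiomConvergecast$) for Steps 1 and 5, the sorting result of Theorem \ref{thm:lenzen_podc2013_sorting} with Axiom $\axiomClique$ for Step 2, and Theorem \ref{thm:lenzen_stoc2011} for the intra-core routing in Steps 3 and 4. The only deviation is in Step 3, where you use the coarser bound $M_r=O(k\nc)$ instead of the paper's sharper $M_r=O(k)$ (based on an area being split across few nodes after sorting); both yield the same overall $O(k)$ bound, and your version, together with the explicit remark on why the candidate sets contain each area's true top $k$, is a safe and adequate substitute.
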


\begin{proof}
Consider Algorithm $\algoFindKTopAreas$ and the $\C\P$-network $\GVECP$. 
From Theorem \ref{thm:cp_property2}, we know that $\nc=\Omega(\sqrt{n})$, thus we can say that the number of areas is $O(\nc)$.
The first step takes $O(1)$ due to Axiom $\axiomConvergecast$. 
%($O(1)$ convergecast). 
Now, each representative $u\in \C$ has $O(\nc)$ values due to the Axiom $\axiomBoundary$. 
%(balanced boundary).
At Step 2, each node has $O(\nc)$ values (each destined to a specific single node), so $M_s=\nc$. Each node has to receive $M_r=k$ values (more precisely: $M_r=2k$, since it is possible that after the initial sorting, an area is split across two nodes, and each of these two nodes will send up to $k$ values from that area, and the receiving node will have to select the correct $k$). Thus, this step will take (according to Theorem \ref{thm:lenzen_stoc2011}) $O((\nc+k)/\nc)=O(k/\nc)$. 
Step 3 takes $O(k)$, since sending requests will take $O(1)$ (due to the Axiom $\axiomClique$ and Theorem \ref{thm:lenzen_stoc2011} with $M_s=\nc$, $M_r=\nc$), and receiving the desired values will take $O(k)$ (since $M_s=k\nc$ and $M_r=k\nc$).
The last step will take $O(k)$, since each node in $\P$ needs $k$ values, and delivering a single value from $r(u)\in \C$ to $u\in\P$ takes $O(1)$, due to the Axiom $\axiomConvergecast$.
\end{proof}

\begin{theorem}\label{thm:necessity-k-top-area}
For each $X \in\{B,E,C\}$ there exist a family of $n$-node partitioned networks
${\cal F}_X = \{ G_X(V,E,\C,\P)(n)\}$,
that satisfy all axioms except $\mathcal{A}_X$,
%that do not satisfy Axiom $\mathcal{A}_X$ but satisfy the other two axioms, 
and input matrices of size $n\times n$ and vectors of size $n$, for every $n$,
such that the time complexity of \emph{any} algorithm for finding the $k$ top ranked values from a specific area on the networks of ${\cal F}_X$, with the corresponding inputs is $\Omega(k\nc)$.
\end{theorem}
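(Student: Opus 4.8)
\begin{proof sketch}{Theorem~\ref{thm:necessity-k-top-area}}
The plan is to exhibit, for each $X\in\{B,E,C\}$, a family of partitioned networks that violates exactly $\mathcal{A}_X$ and on which an adversarial TopK instance forces $\Omega(k\nc)=\Omega(k\sqrt{n})$ rounds. The common device is a single bottleneck edge: engineer the network so that some cut of size $1$ separates a ``requesting'' side from a ``data'' side, partition the $\Theta(\sqrt{n})$ areas so that $\Theta(\sqrt{n})$ of them have all their values strictly on the data side while being wanted by nodes on the requesting side, and observe that the $k$ largest values of each such area must be transmitted across that one edge. This forces $\Omega(k\sqrt{n})$ single-value ($O(\log n)$-bit) messages over an edge of capacity $O(\log n)$ bits per round, i.e.\ $\Omega(k\sqrt{n})$ rounds.

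For $X=B$ I would use the dumbbell partitioned network $D_n$, whose two star-centers $u,v$ are joined by one edge; as already noted it satisfies $\axiomClique$ and $\axiomConvergecast$ and fails $\axiomBoundary$. Split the $\Theta(\sqrt{n})$ areas in half, place all values of the first half among $v$'s leaves, and make leaves of $u$'s star interested in those areas (one leaf per area suffices, and there are plenty). For each of these $\Theta(\sqrt{n})$ areas its $k$ top values, held entirely on $v$'s side, must reach some node on $u$'s side, so $\Omega(k\sqrt{n})$ distinct values cross $(u,v)$ and the bound follows.

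For $X=E$ and $X=C$ I would instantiate the same template on lightly modified networks. For $X=E$: two cliques of size $\Theta(\sqrt{n})$ joined by a single bridge edge, each clique node also owning $\Theta(\sqrt{n})$ periphery leaves; this keeps $\axiomBoundary$ (inner and outer core degrees are both $\Theta(\sqrt{n})$) and $\axiomConvergecast$ (every leaf is one hop from the core) but kills $\axiomClique$, since cross-clique clique-emulation needs $\Theta(n)$ messages through the bridge. For $X=C$: a clique core of size $\Theta(\sqrt{n})$ in which one core node reaches, through a single edge, a periphery ``collector'' that fans out to $\Theta(\sqrt{n})$ leaves, while the other core nodes carry $\Theta(\sqrt{n})$ direct leaves each; this keeps $\axiomBoundary$ and $\axiomClique$ but kills $\axiomConvergecast$, since the collector's $\Theta(\sqrt{n})$ leaves are funneled through one edge. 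In each case the adversarial TopK input is the dumbbell one, with $\Theta(\sqrt{n})$ distinct areas' data on one side of the single cut and the interested nodes on the other, yielding $\Omega(k\nc)$. It then remains to check, routinely and exactly as for $D_n,S_n,L_n$ earlier, that each family has $\nc=\Theta(\sqrt{n})$, satisfies the two intended axioms with $\Theta(1)$ parameters, and violates the third; running the $\Omega(D)$ argument of Theorem~\ref{clm:k_largest_areas_lower} on $S_n$ and $L_n$ would also work but gives only the diameter rather than the stated form.

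The step I expect to be the main obstacle is making the counting argument rigorous. Because each ``side'' is internally well connected, once a node on the requesting side learns an area's top-$k$ list it can disseminate it for free, so one may only charge $k$ messages per \emph{area}, not per interested node; hence the adversary must be designed so that there genuinely are $\Theta(\sqrt{n})$ areas whose entire relevant value set sits strictly on one side of the single cut (using disjoint value ranges per area so the lists are distinct) and is demanded from the other side, and one must argue via an indistinguishability/fooling-set argument that no algorithm can avoid sending each of these $\Omega(k\sqrt{n})$ values across the cut at least once.
\end{proof sketch}
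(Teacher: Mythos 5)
Your proposal is correct, and for the $X=B$ case it coincides with the paper's proof: the paper also uses the dumbbell network $D_n$, places the $k\sqrt{n}/2$ values of $\sqrt{n}/2$ areas in one star, lets $\sqrt{n}/2$ nodes of the other star each request a distinct such area, and charges all $\Omega(k\sqrt{n})$ values to the single inter-center edge. Where you genuinely diverge is in the cases $X=E$ and $X=C$: the paper simply reuses the argument of Theorem \ref{thm:necessity-matrix1}, i.e.\ it takes the sun network $S_n$ and the lollipop network $L_n$ and invokes the $\Omega(D)=\Omega(n)$ diameter bound, whereas you build new families (two $\Theta(\sqrt{n})$-cliques with pendant leaves joined by one bridge for $X=E$; a clique core with a single-edge ``collector'' funneling $\Theta(\sqrt{n})$ leaves for $X=C$) and run the same single-cut counting argument on them. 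Your route costs a little extra work — one must verify, as you note, that each new family satisfies the two retained axioms with $\Theta(1)$ parameters and has $\nc=\Theta(\sqrt{n})$ (your checks go through: degrees in/out of the core are balanced, leaves reach the core in one hop or are blocked only at the intended bottleneck, and cross-clique emulation resp.\ collector convergecast is slow) — but it buys a bound that literally matches the stated $\Omega(k\nc)$ form, which the paper's diameter argument only yields loosely (indeed on $S_n$ one has $\nc=\Theta(n)$, so $\Omega(n)$ does not formally read as $\Omega(k\nc)$ there). Two small points: in your $X=C$ construction the placement is not direction-agnostic — the collector side holds only $O(\sqrt{n})$ initial values, so the area data must sit on the large side and the $\Theta(\sqrt{n})$ requesters among the collector's leaves; and the rigor issue you flag (charging $k$ values per area across the cut) is handled at exactly the paper's level of formality by giving areas disjoint value ranges and observing that the requested top-$k$ lists carry $\Theta(k\sqrt{n}\log n)$ bits of entropy that must traverse an $O(\log n)$-bit-per-round edge, so no additional idea is missing.
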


\begin{proof}
Necessity of $\axiomBoundary$: 
Consider the family of dumbbell partitioned networks $D_n$.
As discussed earlier, Axiom $\axiomBoundary$ is violated, while the others hold.
Consider $\sqrt{n}/2$ areas, and assume that all the $k\sqrt{n}/2$ values belonging to these areas are initially located at the nodes of the first star of $D_n$.
Consider a subset of nodes of the second star. Let the subset size be $\sqrt{n}/2$ and each node in this subset is interested in different area, of the areas stored in the first star we mentioned. 
We can see that in order to complete the task, all the $k\sqrt{n}/2$ values have to be sent from the first to the second star, which are interconnected by a single edge. Thus, the running time will be $\Omega(k\sqrt{n})$ rounds.

The proof for $\axiomClique$ and $\axiomConvergecast$ is the same as in 
Thm. \ref{thm:necessity-matrix1}.
\end{proof}

%\section{Discussion and Average Diameter}
%\input{discussion+avg-diam-sections}

%\section{Average Diameter}
%\input{more_results}

%\section{Models}\label{sec:models}

%\input{models}

%\section{Conclusion}\label{sec:conclusion}
%\input{conclusion}
%\clearpage

\section*{References}

{\small
\bibliographystyle{apalike}
%\bibliographystyle{elsarticle-num}
 %\bibliography{literature}

}

\clearpage
\pagenumbering{roman}
\appendix
\centerline{{\bf\large APPENDIX}}

\section{Pseudocodes for Section \ref{sec:mst}}
\label{app:MST-Pseudocodes}
\AppendixPseudocodes

\end{document}